\newtheorem{assumption}{Assumption}
\newtheorem{lemma}{Lemma}
\newtheorem{theorem}{Theorem}
\newtheorem{definition}{Definition}
\newtheorem{remark}{Remark}
\begin{document}
\title{Trainability Analysis of Quantum Optimization Algorithms from a Bayesian Lens}
\author{Yanqi Song}
\affiliation{State Key Laboratory of Networking and Switching Technology, Beijing University of Posts and Telecommunications, Beijing, 100876, China
}
\author{Yusen Wu}
\affiliation{
Department of Physics, The University of Western Australia, Perth, WA 6009, Australia
}
\author{Sujuan Qin}
\affiliation{State Key Laboratory of Networking and Switching Technology, Beijing University of Posts and Telecommunications, Beijing, 100876, China
}
\author{Qiaoyan Wen}
\affiliation{State Key Laboratory of Networking and Switching Technology, Beijing University of Posts and Telecommunications, Beijing, 100876, China
}
\author{Jingbo B. Wang}
\email{jingbo.wang@uwa.edu.au}
\affiliation{
Department of Physics, The University of Western Australia, Perth, WA 6009, Australia
}
\author{Fei Gao}
\email{gaof@bupt.edu.cn}
\affiliation{State Key Laboratory of Networking and Switching Technology, Beijing University of Posts and Telecommunications, Beijing, 100876, China
 }

\date{\today}

\begin{abstract}
The Quantum Approximate Optimization Algorithm (QAOA) is an extensively studied variational quantum algorithm utilized for solving optimization problems on near-term quantum devices. A significant focus is placed on determining the effectiveness of training the $n$-qubit QAOA circuit, i.e., whether the optimization error can converge to a constant level as the number of optimization iterations scales polynomially with the number of qubits. In realistic scenarios, the landscape of the corresponding QAOA objective function is generally non-convex and contains numerous local optima. 
In this work, motivated by the favorable performance of Bayesian
optimization in handling non-convex functions, we theoretically investigate the trainability of the QAOA circuit through the lens of the Bayesian approach. This lens considers the corresponding QAOA objective function as a sample drawn from a specific Gaussian process. Specifically, we focus on two scenarios: the noiseless QAOA circuit and the noisy QAOA circuit subjected to local Pauli channels. Our first result demonstrates that the noiseless QAOA circuit with a depth of $\tilde{\mathcal{O}}\left(\sqrt{\log n}\right)$ can be trained efficiently, based on the widely accepted assumption that either
the left or right slice of each block in the circuit forms a local 1-design. Furthermore, we show that if each quantum gate is affected by a $q$-strength local Pauli channel with the noise strength range of $1/{\rm poly} (n)$ to 0.1, the noisy QAOA circuit with a depth of $\mathcal{O}\left(\log n/\log(1/q)\right)$ can also be trained efficiently. Our results offer valuable insights into the theoretical performance of quantum optimization algorithms in the noisy intermediate-scale quantum era.
\end{abstract}

\maketitle

\section{Introduction}
The Variational Quantum Algorithm (VQA)~\cite{mcclean2016theory,moll2018quantum,cerezo2021variational,bharti2022noisy} is expected to exhibit potential quantum advantages in the Noisy Intermediate-Scale Quantum (NISQ) era~\cite{preskill2018quantum}. It has been successfully applied to various areas, such as quantum chemistry~\cite{peruzzo2014variational,o2016scalable,kandala2017hardware,romero2018strategies,hempel2018quantum,grimsley2019adaptive,google2020hartree,yuan2021quantum,wu2023orbital,fujii2022deep}, machine learning~\cite{farhi2018classification,mitarai2018quantum,schuld2019quantum,havlivcek2019supervised,schuld2020circuit, wu2023quantum}, and constraint satisfaction problems~\cite{farhi2014quantum,wang2018quantum,hadfield2019quantum,zhou2020quantum,pagano2020quantum,harrigan2021quantum,ebadi2022quantum,qiang2016efficient,qiang2018large,marsh2020combinatorial,qiang2021implementing}. One notable instance of VQA is the Quantum Approximate Optimization Algorithm (QAOA)~\cite{farhi2014quantum}, designed to solve constraint satisfaction problems involving the optimization of a quadratic function of binary variables~\cite{lucas2014ising}. QAOA encodes the quadratic function onto a problem-oriented Hamiltonian and searches for its optimal solution using a hybrid quantum-classical algorithm. The quantum subroutine of QAOA generates parameterized quantum states using the QAOA circuit and estimates the QAOA objective function through multi-round measurements, while the classical optimization method iteratively updates the variational parameters within the QAOA circuit. 
 
Understanding the time complexity of QAOA is crucial for assessing its potential quantum advantages. The time complexity of QAOA depends on the QAOA circuit depth, measurement complexity, and optimization iteration complexity. Extensive research has been conducted on the theory of required QAOA circuit depth~\cite{bravyi2020obstacles,farhi2020quantum,herrman2021lower,stilck2021limitations,you2022convergence,binkowski2023elementary} and measurement complexity~\cite{izmaylov2019unitary,zhao2020measurement,huang2020predicting,crawford2021efficient,harrow2021low,wu2022estimating,hadfield2022measurements}. 
However, the potential of achieving quantum advantages through QAOA remains uncertain due to the absence of rigorous analysis on optimization iteration complexity. Previous studies have investigated and implemented various classical optimization methods in QAOA, such as gradient-based methods~\cite{harrow2021low,guerreschi2017practical,sweke2020stochastic,stokes2020quantum,koczor2022quantum} and gradient-free methods~\cite{zhu2019training,self2021variational,tibaldi2022bayesian}. Meanwhile, the theoretical convergence performance of Stochastic Gradient Descent (SGD)~\cite{kingma2014adam} is studied in some specific settings, where the QAOA objective function is strongly-convex~\cite{harrow2021low} or satisfies Polyak-Lojasiewicz inequality~\cite{sweke2020stochastic}. However, in realistic scenarios, the landscape of the QAOA objective function is generally non-convex with many local optima~\cite{shaydulin2019multistart, huembeli2021characterizing}. Additionally, this landscape may be susceptible to the barren plateau phenomenon when the QAOA circuit reaches a specific depth~\cite{mcclean2018barren, cerezo2021cost, wang2021noise}. Training a deep QAOA circuit in realistic scenarios poses significant challenges. Therefore, it is crucial to address the open question: \emph{what is the depth range of the QAOA circuit that can be effectively trained?} Investigating this question has important implications for analyzing the optimization iteration complexity.

In this work, we investigate this question within the context of Bayesian Optimization (BO)~\cite{snoek2012practical, shahriari2015taking, frazier2018tutorial}, which is designed for gradient-free global optimization. BO is particularly suitable in situations where estimating the objective function is costly and demonstrates its favorable performance in handling the non-convex objective function. Motivated by the excellent numerical performance of BO in QAOA~\cite{zhu2019training, self2021variational, tibaldi2022bayesian}, we theoretically investigate the trainability of the QAOA circuit through the lens of the Bayesian approach. This lens considers the QAOA objective function as a sample drawn from a specific Gaussian process, without relying on explicit assumptions about the landscape of the QAOA objective function. It offers valuable insights into the QAOA objective function and the corresponding theoretical analysis may bridge the gap between theory and practice. In detail, recent studies~\cite{bouland2019complexity, wu2023complexity, fontana2023classical} have demonstrated that the QAOA objective function exhibits the properties of a high-order differentiable function in both noiseless and noisy scenarios. Intriguingly, a Gaussian process with the Matern covariance function has the remarkable ability to generate high-order differentiable functions~\cite{rasmussen2006gaussian, kanagawa2018gaussian}. Thus, it is plausible to interpret the QAOA objective function as a sample drawn from a Gaussian process with the Matern covariance function, regardless of the presence of noise. Exploiting the profound links between BO and information theory~\cite{srinivas2012information,janz2020bandit,vakili2021information}, we investigate the trainability of the $n$-qubit QAOA circuit through the aforementioned lens in both noiseless and noisy scenarios. We start by analyzing the continuity property of the QAOA objective function. By leveraging this property, we establish a theoretical limit on the QAOA circuit depth, guaranteeing the convergence of the optimization error to a constant level as the number of optimization iterations scales polynomially with the number of qubits. Specifically, we have obtained the following results. Based on the widely accepted assumption that either the left or right slice of each block in the noiseless QAOA circuit forms a local 1-design~\cite{cerezo2021cost,harrow2009random}, we demonstrate that the circuit with a depth of $\tilde{\mathcal{O}}(\sqrt{\log n})$ can be trained efficiently. Furthermore, in a practical scenario concerning the Maximum Cut problem on an unweighted regular graph, we make a commonly employed assumption that each quantum gate is affected by a $q$-strength local Pauli channel, and the effects of these noises are postponed until the end of each block in the QAOA circuit~\cite{wang2021noise,quek2022exponentially}. We show that if the noise strength range spans $1/{\rm poly} (n)$ to $1/n^{1/\sqrt{\log n}}$, the noisy QAOA circuit with a depth of $\mathcal{O}\left(\log n/\log(1/q)\right)$ can also be trained efficiently. For a more intuitive description of the noise strength range described above, we focus on near-term quantum devices with 50-100 qubits~\cite{preskill2018quantum}. In this case, $1/n^{1/\sqrt{\log n}}$ is only slightly larger than 0.1. This suggests that this range corresponds to the actual noise levels in near-term quantum devices and holds practical significance. Notably, $\mathcal{O}\left(\log n/\log(1/q)\right)$ grows as $q$ increases, implying the growth in the depth of the noisy QAOA circuit that can be effectively trained as the noise strength increases. This finding implies that a specific strength noise may obscure certain local optima, thereby improving the trainability of the noisy QAOA circuit. In conclusion, we consider these two results to be a substantial advancement in understanding the complexity of the iterations needed to optimize the variational parameters of the QAOA circuit, particularly within the NISQ era.

This work is organized as follows: Section~\ref{sec:QAOA} provides a comprehensive review of QAOA. Section~\ref{sec:BO for QAOA} presents a detailed description of BO in the context of QAOA. In Section~\ref{sec:Theoretical Analysis 1}, we offer a theoretical analysis of the trainability of the noiseless QAOA circuit through the lens of the Bayesian approach. Lastly, Section~\ref{sec:Theoretical Analysis 2} theoretically analyzes the trainability of the noisy QAOA circuit with local Pauli channels in a practical scenario.

\section{Quantum Approximate Optimization Algorithm}
\label{sec:QAOA}
In theoretical computational science, constraint satisfaction problems encompass a wide range of typical problems, such as Maximum Cut, Maximum Independent Set, and Graph Coloring~\cite{gross2018graph}. These problems define their constraints as clauses, with a candidate solution represented by a specific assignment of the corresponding binary variables. The objective of these problems is to find an optimal assignment that maximizes the number of satisfied clauses. In other words, solving a constraint satisfaction problem can be reformulated as optimizing a quadratic function involving binary variables. However, finding the exact solution is widely recognized as an 
$\rm NP$-hard problem~\cite{garey1979computers}. Consequently, an alternative approach is to seek an approximate solution.  Inspired by the quantum annealing process~\cite{kadowaki1998quantum}, QAOA was proposed and applied to solve constraint satisfaction problems. Although the prospects of achieving quantum advantages through QAOA remain unclear, it provides a simple paradigm for optimization that can be implemented on near-term quantum devices.

Given a specific constraint satisfaction problem with \(n\) binary variables and $\mathcal{C}$ clauses, QAOA starts by constructing a problem-oriented Hamiltonian \(H_1\), a mixing Hamiltonian \(H_2\), and $2p$ variational parameters $\bm{\theta}=[\theta _{1,1}, \theta_{1,2}, \cdots, \theta_{p,1}, \theta_{p,2}]^{\mathsf{T}}$. Specifically, the problem-oriented Hamiltonian \(H_1\) is a linear combination of $\mathcal{C}$ Pauli strings
\begin{align}\label{eq:H1}
H_1 =\sum_{c=1}^\mathcal{C}\gamma_c P^{n}_c,
\end{align}
where $\gamma_c\in\mathbb{R}$ and $P^{n}_c\in \{\mathbb{I},\sigma^{z}\}^{\otimes n}$ with $\sigma ^z$ is the Pauli \(Z\) operator. The typical form of the mixing Hamiltonian \(H_2\) is the transverse field Hamiltonian 
\begin{align}\label{eq:H2}
H_2 =\sum_{i=1}^{n}\sigma^x_i,
\end{align}
where $\sigma ^x_i$ is the Pauli $X$ operator acting on the $i$-th qubit. Subsequently, by iteratively applying $H_1$ and $H_2$ to the initial state $\rho$ for $p$ rounds, the noiseless QAOA objective function $f(\bm{\theta})$ in the absence of quantum gate noise is given by the following expectation value
\begin{align}\label{eq:Objective Function 1}
f(\bm\theta)=\mathrm {Tr}\left[H_1U(\bm \theta)\rho U^{\dagger}(\bm \theta)\right],
\end{align}
where $\rho=(|+\rangle\langle+|)^{\otimes n}$ denotes the uniform superposition over computational basis states and the noiseless QAOA circuit
\begin{align}\label{eq:Quantum Circuit}
U(\bm\theta)=\prod_{j=1}^p \prod_{l=1}^2 U_{j,l}(\theta_{j,l})
\end{align}
with $U_{j,l}(\theta_{j,l})=\exp(-i\theta_{j,l} H_l)$ for $(j,l)\in[p]\times[2]$. The statistical estimation of \(f(\bm{\theta})\) can be achieved by repeating the aforementioned process with identical parameters and computational basis measurements. After defining \(f(\bm{\theta})\), the next step involves iteratively updating $\bm\theta$ within $U(\bm\theta)$ through classical optimization methods to maximize \(f(\bm{\theta})\) and obtain the global maximum point
\begin{align} \label{eq:Max}
\bm\theta^{*}=\arg\max_{\bm \theta \in\mathcal{D}}f(\bm\theta),
\end{align}
where the domain $\mathcal{D}=[0,2\pi]^{2p}$. 

The classical optimization method is crucial in QAOA, as previously mentioned. Finding \(\bm\theta^{*}\) may be intractable due to the non-convex landscape of \(f(\bm{\theta})\) and the presence of numerous local optima. Thus, it is pivotal to determine the appropriate classical optimization method that can efficiently find a better approximation
of \(\bm\theta^{*}\). In the following section, we illustrate the utilization of the Bayesian approach to accomplish this optimization task and provide a theoretical analysis of the trainability of QAOA.

\section{Optimizing QAOA through the Bayesian Approach}
\label{sec:BO for QAOA}
BO is designed for gradient-free global optimization. It is particularly suitable in situations where estimating the objective function is computationally expensive and the convexity property of the objective function is uncertain~\cite{snoek2012practical,shahriari2015taking,frazier2018tutorial}. BO comprises two essential components: (\romannumeral1)~a statistical model, usually the Gaussian process~\cite{rasmussen2006gaussian}, that generates a posterior distribution conditioned on a prior distribution and a collection of observations of the objective function. (\romannumeral2)~an acquisition function that utilizes the current posterior distribution for the objective function to determine the location of the next point. In the context of QAOA, we present a comprehensive introduction to BO, focusing on the Gaussian process and the acquisition function. Further details regarding each step of BO can be found in Algorithm~\ref{alg:BO for QAOA}.

\begin{algorithm*}[t]
\caption{\textbf{Optimizing QAOA through the Bayesian approach}}\label{alg:BO for QAOA}
\DontPrintSemicolon
\SetKwInOut{Input}{Input}
\SetKwInOut{Output}{Output}
\SetKwInOut{Initialize}{Initialize}
\SetKwFor{While}{while}{do}{}
\SetKwIF{If}{ElseIf}{Else}{if}{then}{else if}{else}{end if}
\Input{the $n$-qubit initial state $\rho=(|+\rangle\langle+|)^{\otimes n}$, the noiseless QAOA circuit $U(\bm\theta)$~(Eq.~\ref{eq:Quantum Circuit}), the problem-oriented Hamiltonian $H_1$~(Eq.~\ref{eq:H1}), the fixed number of measurements $M$, the prior mean function $\mu(\bm\theta)=0$, the prior covariance function $k(\bm\theta, \bm\theta^{\prime})=k_{\rm Matern-\nu}(\bm\theta,\bm\theta^{\prime})$~(Eq.~\ref{eq:Matern}), the domain $\mathcal{D}=[0,2\pi]^{2p}$, the number of initial observations $T_0$, and the number of steps $T$}
\Output {the approximation
of the maximum point $\bm\theta_{T}^+$}
\Initialize {the accumulated observations $\mathcal{S}_0$ with $T_0$ randomly selected points, the approximation
of the maximum point \(\bm\theta_{0}^+\gets\arg\max_{\bm\theta\in\mathcal{S}_{0}}y(\bm\theta)\), the posterior mean function $\mu_{0}(\bm\theta)$ and the posterior variance $\sigma^2_{0}(\bm\theta)$ based on \(\mathcal{S}_{0}\) as described in Eq.~\ref{eq:Posterior}}
\While{$1 \leq t\leq T$}{
\qquad
Calculate the acquisition function $\mathrm{UCB}_{t}(\bm\theta)$ as described in Eq.~\ref{eq:UCB} \;
\qquad Select the next point $\bm\theta_{t}\gets\arg \max_{\bm\theta\in\mathcal{D}}\mathrm{UCB}_{t}(\bm\theta)$ \;
\qquad Calculate $y(\bm\theta_{t})$ by estimating $\mathrm {Tr}\left[H_1U(\bm \theta_t)\rho U^{\dagger}(\bm \theta_t)\right]$ with $M$ measurements\;
\qquad Set $\mathcal{S}_{t}\gets\mathcal{S}_{t-1}\cup \left(\{(\bm\theta_t, y(\bm\theta_{t}) \right\})$\;
 \qquad Update $\mu_t(\bm\theta)$ and $\sigma^2_t(\bm\theta)$ based on \(\mathcal{S}_{t}\) as described in Eq.~\ref{eq:Posterior}\; 
 \qquad\If{$y(\bm\theta_{t})>y(\bm\theta_t^+)$}{
 \qquad\qquad Set \(\bm\theta_t^+\gets\bm\theta_{t}\)}
\qquad Set $t\gets t+1$}
\Return $\bm\theta_{T}^+$
\end{algorithm*}

\subsection{Gaussian Process}
A Gaussian process is a collection of random variables, where any subset forms a multivariate Gaussian distribution. In the optimization task described by Eq.~\ref{eq:Max}, the random variables correspond to the values of the noiseless QAOA objective function \(f(\bm\theta)\) at points \(\bm\theta\in\mathcal{D}\). A Gaussian process, serving as a distribution for \(f(\bm\theta)\), 
is fully determined by its mean function $\mu(\bm\theta)$ and covariance function $k(\bm\theta, \bm\theta^{\prime})$. Specifically, $\mu(\bm\theta)$ specifies the mean value of $f(\bm\theta)$ at any point $\bm\theta$, while $k(\bm\theta, \bm\theta^{\prime})$ determines the covariance between $f(\bm\theta)$ and $f(\bm\theta^{\prime})$ at any two points $\bm\theta$ and $\bm\theta^{\prime}$. The Gaussian process is denoted as 
\begin{align}
f(\bm\theta)\sim\mathcal{GP}(\mu(\bm\theta) , k(\bm\theta,\bm\theta^{\prime})),
\end{align}
where $\mu(\bm\theta)=\mathbb{E}[f(\bm\theta)]$ and $k(\bm\theta, \bm\theta^{\prime})=\mathbb{E}[(f(\bm\theta)-\mu(\bm\theta))(f(\bm\theta^{\prime})-\mu(\bm\theta^{\prime}))]$. It is commonly assumed that the prior mean function \(\mu(\bm\theta)=0\). Additionally, the prior covariance function $k(\bm\theta,\bm\theta^{\prime})$ is commonly chosen from some notable covariance functions, such as the Matern covariance function $k_{\rm Matern-\nu}(\bm\theta,\bm\theta^{\prime})$, whose specific form is provided in Appendix~\ref{appendix:Matern}. 

Suppose we have the following accumulated observations after \(t-1\) steps of BO
\begin{equation}\label{eq:S}
    \mathcal{S}_{t-1}=\{(\bm\theta_{1},y(\bm\theta_{1})), \cdots, (\bm\theta_{t-1}, y(\bm\theta_{t-1}))\},
\end{equation}
where $y(\bm\theta_{i})$ denotes the estimation of $f(\bm\theta_i)$ for $i\in[t-1]$. In each step $i$, the measurements are taken to ensure that $y(\bm\theta_i)= f(\bm\theta_i) + \xi_i^{\rm noise}$, where $\xi_i^{\rm {noise}}\sim N(0,1/4M)$ is independent and identically distributed Gaussian noise with $M$ representing the fixed number of measurements~\footnote{Let $M$ be the fixed number of measurements and $y_j(\bm\theta)$ be the one-shot measurement result for $j\in [M]$. According to Central Limit Theorem~\cite{fischer2011history}, for a sufficiently large $M$, we have $\frac{1}{M}\sum_{j=1}^My_j(\bm\theta)=\mu+\frac{\sigma}{\sqrt{M}}Y$, where $Y\sim N(0,1)$. Here, $\mu$ and $\sigma$ represent the mean value and standard deviation of $y_j(\bm\theta)$ for $j\in[M]$. Consequently, we find $\xi^{\rm noise}=\frac{1}{M}\sum_{j=1}^My_j(\bm\theta)-\mu=\frac{\sigma}{\sqrt{M}}Y$. Thus, the result $\xi^{\rm noise}\sim N(0,\sigma^2/M)$ holds. Considering $\sigma^2\leq1/4$, it is reasonable to assume that $\xi^{\rm noise}\sim N(0,1/4M)$.}.
Conditioned on \(\mathcal{S}_{t-1}\), the distribution for \(f(\bm\theta)\) is a Gaussian process with the posterior mean function \(\mu_{t-1}(\bm\theta)=\mathbb{E}[f(\bm\theta)|\mathcal{S}_{t-1}]\) and the posterior covariance function \(k_{t-1}(\bm\theta,\bm\theta^{\prime})=\mathbb{E}[(f(\bm\theta)-\mu(\bm\theta))(f(\bm\theta^{\prime})-\mu(\bm\theta^{\prime}))|\mathcal{S}_{t-1}]\). These are specified as follows
\begin{equation}\label{eq:Posterior}
    \begin{split}
       \mu_{t-1}(\bm\theta)&=\bm k_{t-1}(\bm\theta)^{\mathsf{T}}(\bm K_{t-1}+\bm I_{t-1}/4M)^{-1}\bm{y} _{t-1},\\
k_{t-1}(\bm\theta,\bm\theta^{\prime})&=k_{\rm Matern-\nu}(\bm\theta,\bm\theta^{\prime})-\\&\bm k_{t-1}(\bm\theta)^{\mathsf{T}}(\bm K_{t-1}+\bm I_{t-1}/4M)^{-1}\bm k_{t-1}(\bm\theta^{\prime}), 
    \end{split}
\end{equation}
where \(\bm k_{t-1}(\bm\theta)=[k_{\rm Matern-\nu}(\bm\theta,\bm\theta_1)  \cdots k_{\rm Matern-\nu}(\bm\theta,\bm\theta_{t-1})]^{\mathsf{T}}\), the positive definite covariance matrix \(\bm K_{t-1}=[k_{\rm Matern-\nu}(\bm\theta_i, \bm\theta_j)]_{\bm\theta_i,\bm\theta_j\in \mathcal{A}_{t-1}}\) with the accumulated points \(\mathcal{A}_{t-1}=\{\bm\theta_1,\cdots,\bm\theta_{t-1}\}\) from the previous $t-1$ steps, and \(\bm{ y}_{t-1}=[y(\bm\theta_1) \cdots y(\bm\theta_{t-1})]^{\mathsf{T}}\). The posterior variance of \(f(\bm\theta)\) is denoted as \(\sigma^2_{t-1}(\bm\theta)=k_{t-1}(\bm\theta,\bm\theta)\). 

\begin{figure*}[htpb]
\centering
\begin{minipage}{16.8cm} 
\includegraphics[width=0.99\textwidth]
{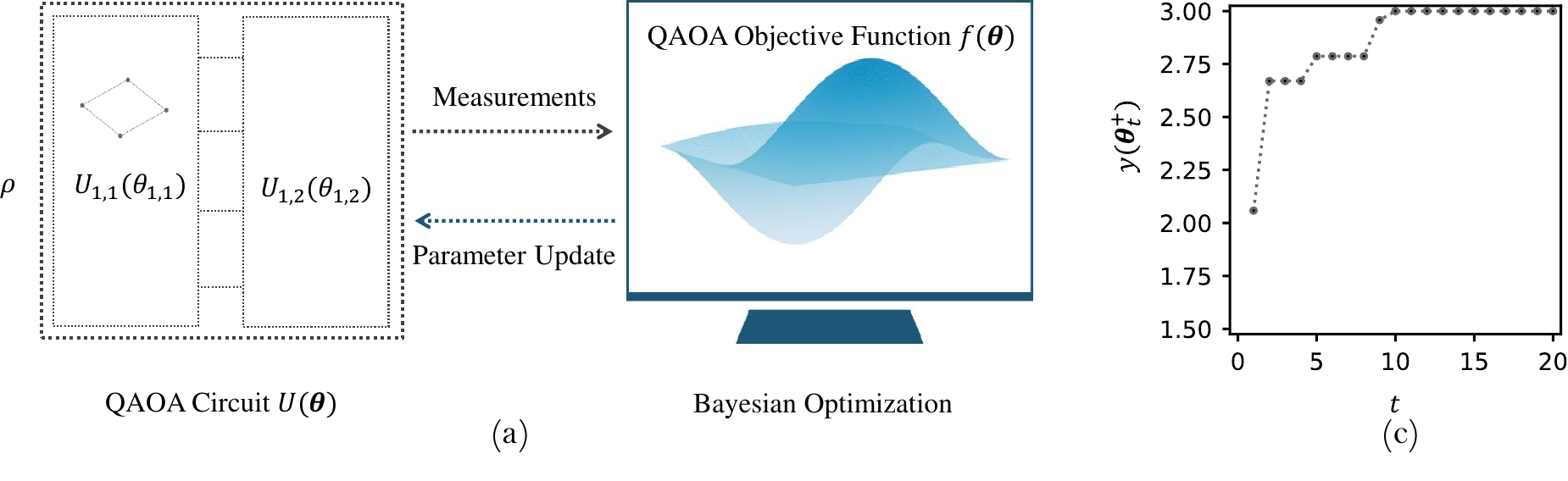} 
\end{minipage}
\begin{minipage}{16.8cm}
\includegraphics[width=0.99\textwidth]{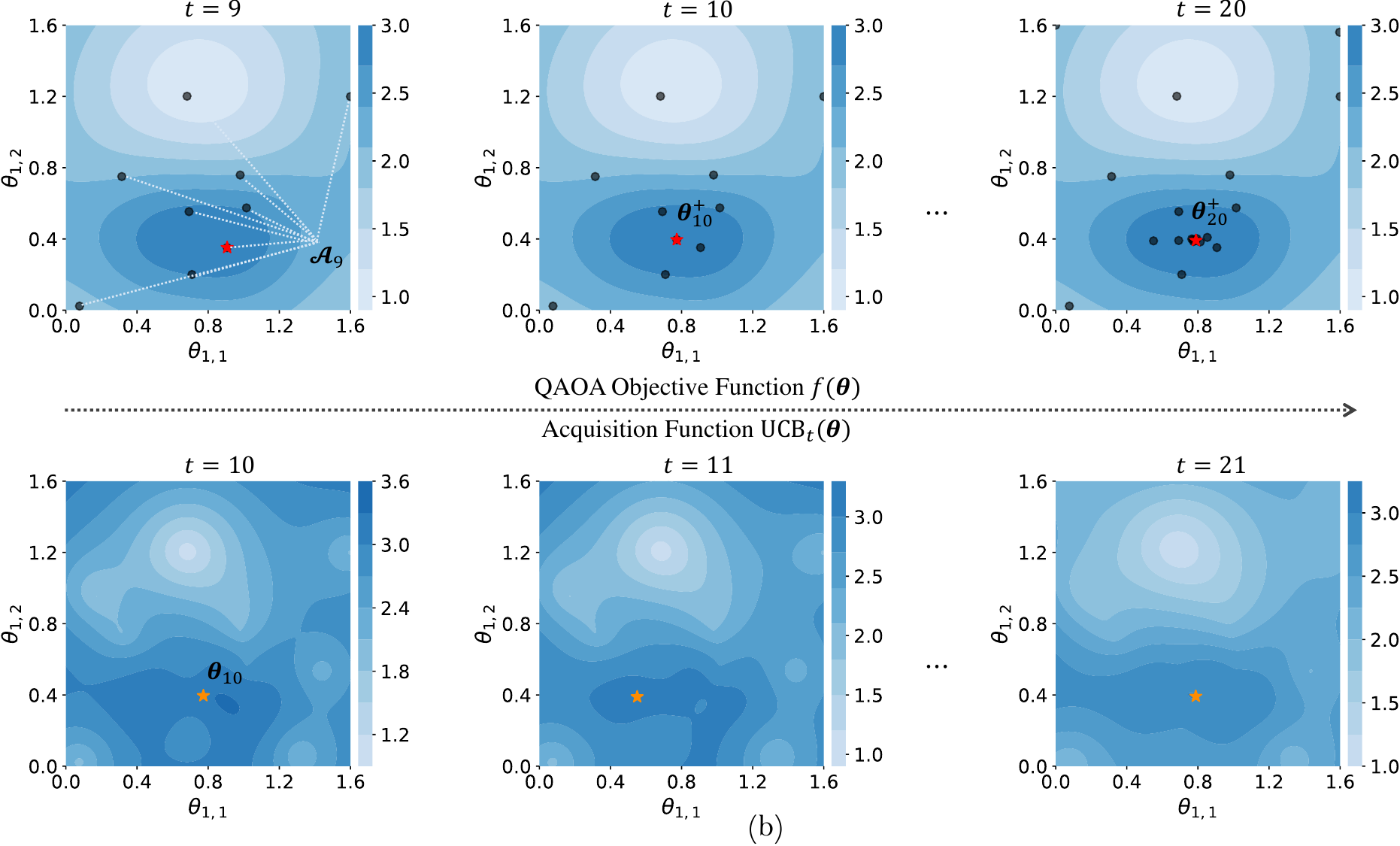} 
\end{minipage}
\caption{\textbf{Numerical performance of QAOA for the Maximum Cut problem using the Bayesian approach.} 
\textbf{(a)}~Structure of QAOA on a Maximum Cut graph instance. For a 2-regular graph with 4 vertices, the quantum subroutine prepares parameterized quantum states using a 1-layer noiseless QAOA circuit $U(\bm\theta)$. It estimates the corresponding two-dimensional noiseless QAOA objective function $f(\bm\theta)$ with $\bm\theta=(\theta_{1,1},\theta_{1,2})$ through multi-round measurements. BO iteratively updates $\bm\theta$ within $U(\bm\theta)$ until reaching the predetermined number of steps, ultimately providing the approximation of the maximum point. \textbf{(b)}~Detailed optimization steps of BO on the given Maximum Cut graph instance. In the 10th step, the acquisition function $\mathrm{UCB}_{10}(\bm\theta)$ are calculated based on the accumulated points $\mathcal{A}_9$ from the previous 9 steps. The next point $\bm\theta_{10}$ is selected by maximizing $\mathrm{UCB}_{10}(\bm\theta)$, and the current approximation of the maximum point $\bm\theta_{10}^+$ is updated to the best point selected in the previous 10 steps. Finally, after 20 steps, the final approximation of the maximum point $\bm\theta_{20}^+$ is returned. \textbf{(c)}~The estimation of the function value at the approximation of the maximum point $y(\bm\theta_t^+)$ as a function of the step $t$.}
\label{fig:QAOA_BO}
\end{figure*}

\subsection{Acquisition Function}
In the $t$-th step of BO, the acquisition function utilizes $\mathcal{S}_{t-1}$ to guide the search towards the next point $\bm\theta_{t}$, aiming to converge to the global maximum point $\bm\theta^{*}$ of $f(\bm\theta)$. This procedure is accomplished by maximizing the acquisition function over the domain $\mathcal{D}$. Specifically, the design of the acquisition function should consider both exploration (sampling in regions of high uncertainty) and exploitation (sampling in regions likely to yield high function values). 
The upper confidence bound $\mathrm{UCB}_t(\bm\theta)$, which is a commonly employed acquisition function, is defined as 
\begin{align}\label{eq:UCB}
  \mathrm{UCB}_t(\bm\theta)=\mu_{t-1}(\bm\theta)+\sqrt{\eta_t}\sigma_{t-1}(\bm\theta),
\end{align} and the next point \(\bm\theta_t\) is selected as 
\begin{align}
\label{eq:Next}
\bm\theta_t=\arg\max_{\bm \theta \in\mathcal{D}} \mathrm{UCB}_t(\bm\theta),
\end{align}
where \(\mu_{t-1}(\bm\theta)\) and \(\sigma_{t-1}(\bm\theta)\) denote the posterior mean function and the posterior standard deviation respectively, as defined in Eq.~\ref{eq:Posterior}, and \(\eta_t\geq 0\) represents a time-dependent scaling parameter. Subsequently, the accumulated observations are updated as $\mathcal{S}_t=\{(\bm\theta_{1}, y(\bm\theta_{1})), \cdots, (\bm\theta_{t},y(\bm\theta_{t}))\}$, and the posterior distribution for $f(\bm\theta)$ is updated based on $\mathcal{S}_t$. 

\vspace{10pt}
The aforementioned process is repeated for a predetermined number of steps $T$. The best point selected in the previous $T$ steps represents the approximation of the maximum point $\bm\theta_T^+$. Specifically, 
\begin{equation}
    \bm\theta_T^+=\arg \max_{\bm\theta\in\mathcal{A}_T} y(\bm\theta),
\end{equation}
where \(\mathcal{A}_{T}=\{\bm\theta_1,\cdots,\bm\theta_{T}\}\) represents the accumulated points from the previous $T$ steps. Figure~\ref{fig:QAOA_BO} illustrates the numerical performance of QAOA for the Maximum Cut problem on a specific graph instance using the Bayesian approach. In conclusion, the Gaussian process is widely preferred as the statistical model in BO due to its flexibility and capacity to model complex functions. It offers a powerful framework for modeling the objective function by capturing both the mean and uncertainty associated with observations of the objective function. This enables BO to make informed decisions regarding the location of the next point. Subsequently, the upper confidence bound balances the trade-off between exploration and exploitation and selects the next point based on the current knowledge provided by the Gaussian process. In summary, the combination of the Gaussian process as the statistical model and the 
upper confidence bound as the acquisition function constitutes the core of the BO framework, enabling efficient global optimization in the absence of gradient information. 

\begin{figure*}[htpb]
\centering
\subfigure[]{
\begin{minipage}{16.8cm} 
\label{fig:Main_Results_1}
\includegraphics[width=0.99\textwidth]
{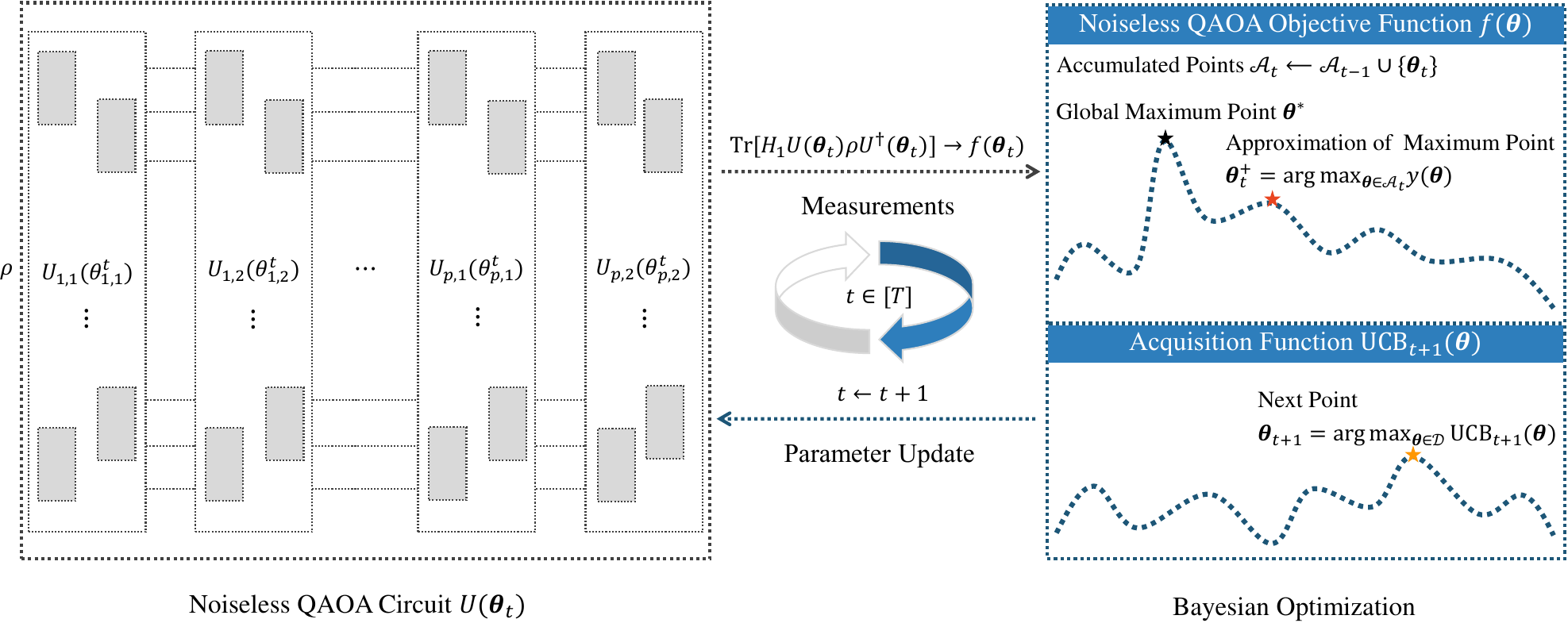} 
\end{minipage}
}
\subfigure[]{
\begin{minipage}{16.8cm}
\label{fig:Main_Results_2}
\includegraphics[width=0.99\textwidth]{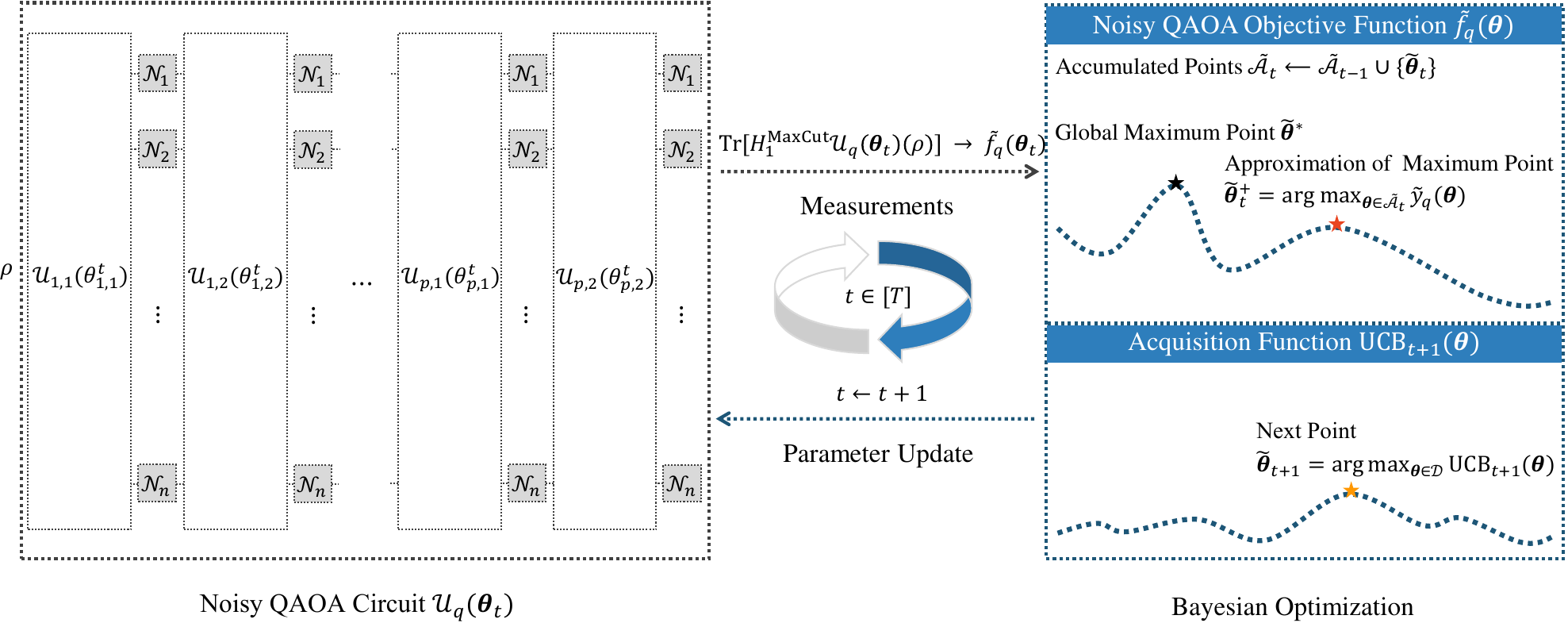} 
\end{minipage}
}
\caption{\textbf{Scenarios to investigate the trainability of QAOA.} \textbf{(a)}~Optimization of the $n$-qubit noiseless QAOA circuit using the Bayesian approach. In the $t$-th step of the quantum subroutine, the noiseless QAOA circuit $U(\bm\theta_t)$ prepares parameterized quantum states, where either the left or right slice of each block in $U(\bm\theta_t)$ forms a local 1-design. Next, the noiseless QAOA objective function $f(\bm\theta_t)$ is estimated through the fixed number of measurements $M$, yielding the estimation $y(\bm\theta_t)$. In the classical subroutine, BO utilizes the acquisition function $\mathrm{UCB}_{t+1}(\bm\theta)$ to select the next point $\bm\theta_{t+1}$ based on the current knowledge provided by the Gaussian process. Afterwards, the variational parameters are updated. This process is repeated for a predetermined number of steps $T$, and the best point selected in the previous $T$ steps represents the approximation of the maximum point $\bm\theta_T^+$. \textbf{(b)}~Optimization of the $n$-qubit noisy QAOA circuit using the Bayesian approach. In the $t$-th step of the quantum subroutine, the noisy QAOA circuit $\mathcal{U}_{q}(\bm\theta_t)$ prepares parameterized quantum states, where a noise channel $\mathcal{N}=\bigotimes_{i=1}^n\mathcal{N}_i$ exists between any two blocks in the circuit, and $\mathcal{N}_i$ represents a local Pauli channel acting on the $i$-th qubit. Next, the noisy QAOA objective function $\tilde f_q(\bm\theta_t)$ is estimated through the fixed number of measurements $M$, yielding the estimation $\tilde y_q(\bm\theta_t)$. In the classical subroutine, BO iteratively updates the variational parameters using the same optimization process for $T$ steps and eventually returns the approximation of the maximum point $\tilde{\bm\theta}_T^+$. The presence of noise may result in a flatter landscape with fewer local optima for $\tilde f_q(\bm\theta)$ compared to $f(\bm\theta)$.}
\end{figure*}

\section{Main Results I: Analyzing the trainability of the noiseless QAOA circuit}\label{sec:Theoretical Analysis 1} 
Our main focus is to theoretically investigate the trainability of the $n$-qubit noiseless QAOA circuit $U(\bm\theta)$ using the Bayesian approach. The optimization error $r_T$ after $T$ steps of executing BO is defined as the difference in function values between the global maximum point $\bm\theta^{*}$ and the approximation of the maximum point $\bm\theta_T^+$ in the previous $T$ steps. It is given by
\begin{align}\label{eq:ERROR 1}
r_T=f(\bm\theta ^*)-f(\bm\theta^+_T),
\end{align}
where \(\bm\theta_T^+=\arg \max_{\bm\theta\in\mathcal{A}_T} f(\bm\theta)\) with the accumulated points \(\mathcal{A}_{T}=\{\bm\theta_1,\cdots,\bm\theta_{T}\}\)\footnote{
We assume by default that the estimations $\{y(\bm\theta)\}_{\bm\theta\in\mathcal{A}_T}$ from the previous $T$ steps contain sufficient information about the noiseless QAOA objective function values $\{f(\bm\theta)\}_{\bm\theta\in\mathcal{A}_T}$ with the accumulated points $\mathcal{A}_T=\{\bm\theta_1,\ldots,\bm\theta_T\}$. Therefore, it is reasonable to define the approximation of the maximum point as $\bm\theta_T^+=\arg \max_{\bm\theta\in\mathcal{A}_T} f(\bm\theta)$, even though it was originally defined as $\bm\theta_T^+=\arg \max_{\bm\theta\in\mathcal{A}_T} y(\bm\theta)$.}. We aim to determine the effective depth of $U(\bm\theta)$, ensuring that $r_T\leq \epsilon$ after $T={\rm poly}(n)$ steps, where $\epsilon$ represents a constant threshold. It is worth noting that in the context of QAOA, the parameter dimension and the circuit depth exhibit similar magnitudes. Hence, we can directly explore the effective parameter dimension $p$ of $U(\bm\theta)$ in subsequent analysis. In this work, we adopt the following widely accepted assumption.

\begin{assumption}[\cite{cerezo2021cost,harrow2009random}]
Given an $n$-qubit noiseless QAOA circuit 
\begin{equation}\label{eq:noiseless QAOA circuit}
U(\bm\theta)=\prod_{j=1}^p\prod_{l=1}^2 U_{j,l}(\theta_{j,l}),    
\end{equation}
each block $U_{j,l}(\theta_{j,l})=U^{(j,l)}_{+}U^{(j,l)}_{-}$ for $(j,l)\in[p]\times[2]$, where $U^{(j,l)}_{-}$ is independent to $U^{(j,l)}_{+}$, and at least one of them forms a local 1-design.
\label{assump:1-design}
\end{assumption}

The scenario mentioned above for investigating the trainability of $U(\bm\theta)$ using the Bayesian approach is described in detail in Figure~\ref{fig:Main_Results_1}. Assuming that Assumption~\ref{assump:1-design} holds, we first explore the Lipschitz continuity of the corresponding noiseless QAOA objective function $f(\bm\theta)$. Additionally, we establish a theoretical limit on $p$ that ensures achieving $r_T\leq \epsilon$ within $T={\rm poly}(n)$ steps. The following sections provide a comprehensive introduction.

\subsection{Continuity Property of the Noiseless QAOA Objective Function}
Now, we will show that Assumption~\ref{assump:1-design} results in a quantum analog of the Lipschitz continuity property about the noiseless QAOA objective function $f(\bm\theta)$.

\begin{lemma}\label{lemma:Lipschitz Continuity 1}
Assuming that Assumption~\ref{assump:1-design} holds, let $f(\bm\theta): \mathcal{D}=[0,2\pi]^{2p}\mapsto\mathbb{R}$ be the noiseless QAOA objective function~(Eq.~\ref{eq:Objective Function 1}). Given a failure probability $\delta\in(0,1)$, for any $\bm\theta,\bm\theta^{\prime}\in\mathcal{D}$, the relationship
\begin{equation}
\left|f(\bm\theta)-f(\bm\theta^{\prime})\right|\leq \sqrt{\mathbb{V}_{\bm\theta}[\partial_a f(\bm\theta)]/\delta}\Vert\bm\theta-\bm\theta^{\prime}\Vert_1 
\end{equation}
is valid with a success probability of at least \( 1-\delta\), where $\mathbb{V}_{\bm\theta}[\partial_a f(\bm\theta)]$ is the variance of the partial derivative $\partial_a f(\bm\theta)$ with index $a=\arg \max_{j\in[2p]}(\sup_{\bm\theta\in
\mathcal{D}}\left|\partial_j f(\bm\theta)\right|)$.
\end{lemma}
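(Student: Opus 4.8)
The plan is to first reduce the function difference to a bound on the partial derivatives of $f$, and then to control the relevant partial derivative by combining the $1$-design structure of Assumption~\ref{assump:1-design} with a Chebyshev-type concentration argument. Concretely, I would connect $\bm\theta'$ to $\bm\theta$ by a path that updates one coordinate at a time and apply the mean value theorem on each coordinate segment, obtaining $f(\bm\theta)-f(\bm\theta')=\sum_{j=1}^{2p}\partial_j f(\bm\xi_j)(\theta_j-\theta'_j)$ for intermediate points $\bm\xi_j$. Bounding each term by its worst-case coordinate derivative gives the deterministic estimate $|f(\bm\theta)-f(\bm\theta')|\le \big(\sup_{\bm\theta\in\mathcal{D}}|\partial_a f(\bm\theta)|\big)\Vert\bm\theta-\bm\theta'\Vert_1$, where $a=\arg\max_{j\in[2p]}\big(\sup_{\bm\theta\in\mathcal{D}}|\partial_j f(\bm\theta)|\big)$ is exactly the index singled out in the statement, since $\sup_{\bm\theta}|\partial_j f(\bm\theta)|\le \sup_{\bm\theta}|\partial_a f(\bm\theta)|$ for every $j$. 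The remaining task is then to replace the gradient factor by $\sqrt{\mathbb{V}_{\bm\theta}[\partial_a f(\bm\theta)]/\delta}$ with probability at least $1-\delta$.

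The second step is to show that the gradient is centered, i.e. $\mathbb{E}_{\bm\theta}[\partial_a f(\bm\theta)]=0$, which is where Assumption~\ref{assump:1-design} is essential. Differentiating Eq.~\ref{eq:Objective Function 1} and using $\partial_\theta e^{-i\theta H_l}=-iH_l e^{-i\theta H_l}$ writes $\partial_a f$ as the expectation of a commutator $[H_l,\cdot]$ sandwiched between the two slices $U^{(j,l)}_{+}$ and $U^{(j,l)}_{-}$ of the block. Because at least one slice forms a local $1$-design, averaging over that slice reproduces the first Haar moment $\mathbb{E}[V\,\sigma\,V^{\dagger}]\propto \mathrm{Tr}[\sigma]\,\mathds{1}$, which maps the traceless commutator to a quantity proportional to its vanishing trace, so the averaged gradient is zero. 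With the mean removed I would invoke Chebyshev's inequality (equivalently Markov's inequality applied to $|\partial_a f|^{2}$, whose mean is $\mathbb{V}_{\bm\theta}[\partial_a f(\bm\theta)]$), giving $\Pr_{\bm\theta}\big[|\partial_a f(\bm\theta)|\ge \sqrt{\mathbb{V}_{\bm\theta}[\partial_a f(\bm\theta)]/\delta}\big]\le\delta$. Combining this high-probability derivative bound with the path estimate of the first step yields the claimed inequality.

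I expect the main obstacle to be the first-moment computation rather than the concentration step, which is routine. One must set up the local $1$-design average so that differentiating the block and integrating a single slice genuinely annihilates $[H_l,\cdot]$, handling the ordering of the two slices and tracking which one carries the design, so that the centering holds for the specific coordinate $a$ (and hence, through the definition of $a$, uniformly across coordinates). A secondary subtlety is reconciling the pointwise nature of the Chebyshev bound with the mean value points $\bm\xi_j$ appearing in the first step and with the ``for any $\bm\theta,\bm\theta'$'' phrasing; I would address this by reading the success probability as being taken over the randomness of the $1$-design ensemble that renders the sampled function $f$ centered, under which the bound on $|\partial_a f|$ controls the local Lipschitz constant. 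Making the vanishing-mean argument fully rigorous is where I anticipate the real effort to lie.
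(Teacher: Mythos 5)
Your proposal follows essentially the same route as the paper's proof: a coordinate-wise telescoping plus mean-value-theorem argument reducing the Lipschitz bound to $\max_{j}\sup_{\bm\theta}|\partial_j f(\bm\theta)|$, the vanishing of $\mathbb{E}_{\bm\theta}[\partial_a f(\bm\theta)]$ under the local $1$-design assumption (which the paper likewise delegates to the cited reference rather than computing in full), and a Chebyshev bound giving $|\partial_a f|\leq\sqrt{\mathbb{V}_{\bm\theta}[\partial_a f(\bm\theta)]/\delta}$ with probability $1-\delta$. Your reading of the success probability as being over the ensemble randomness that samples $f$ matches the paper's (equally informal) treatment, so the argument is correct and not meaningfully different.
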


\begin{proof}[Proof Sketch] 
Given Assumption~\ref{assump:1-design}, we first show that $\mathbb{E}_{\bm\theta}[\partial_jf(\bm\theta)]=0$ for arbitrary $j\in [2p]$. Then for any $j\in[2p]$ and $\bm\theta\in\mathcal{D}$, it is proved that $\left|\partial_jf(\bm\theta)\right|\leq\sqrt{\mathbb{V}_{\bm\theta}[\partial_a f(\bm\theta)]/\delta}$ with the success probability of at least $1-\delta$. Here, $\mathbb{V}_{\bm\theta}[\partial_a f(\bm\theta)]$ is the variance of the partial derivative $\partial_af(\bm\theta)$ with the index $a=\arg \max_{j\in[2p]}(\sup_{\bm\theta\in
\mathcal{D}}\left|\partial_jf(\bm\theta)\right|)$. Furthermore, by Lagrange’s Mean Value Theorem~\cite{sohrab2003basic}, we obtain $\left|f(\bm\theta)-f(\bm\theta^{\prime})\right| \leq
\max_{j\in[2p]}(\sup_{\bm\theta\in
\mathcal{D}}\left|\partial_j f(\bm\theta)\right|)\Vert\bm\theta-\bm\theta^{\prime}\Vert_1$ for any $\bm\theta,\bm\theta^{\prime}\in\mathcal{D}$.  
Combining these two results, we finally complete the proof of Lemma~\ref{lemma:Lipschitz Continuity 1}, and proof details can be found in Appendix~\ref{appendix:Lemma1}.   
\end{proof}

In addition to Assumption~\ref{assump:1-design}, we provide the following remark based on the differentiable property of $f(\bm\theta)$.

\begin{remark}
    Given the differentiable property of the noiseless QAOA objective function $f(\bm\theta)$~\cite{bouland2019complexity, wu2023complexity}, we can consider it as a sample drawn from a Gaussian process with the Matern prior covariance function $k_{\rm Matern-\nu}(\bm\theta,\bm\theta^{\prime})$~(Eq.~\ref{eq:Matern}), as this Gaussian process allows us to model high-order differentiable functions~\cite{rasmussen2006gaussian,kanagawa2018gaussian}.
\end{remark}

\subsection{Effective Parameter Dimension of the Noiseless QAOA Circuit}
Assuming that Assumption~\ref{assump:1-design} holds and using the result of Lemma~\ref{lemma:Lipschitz Continuity 1}, we establish a theoretical limit on parameter dimension $p$ through the lens of the Bayesian approach, such that $r_T$ can be upper bounded by a constant threshold $\epsilon$ within $T={\rm poly}(n)$ steps. 

\begin{theorem}[Informal]
Given a constant threshold $\epsilon$ and 
an $n$-qubit noiseless QAOA circuit $U(\bm\theta)$~(Eq.~\ref{eq:noiseless QAOA circuit}) that satisfies Assumption~\ref{assump:1-design}, run Algorithm~\ref{alg:BO for QAOA} for $T={\rm poly}(n^{1/\epsilon^2})$ steps, where a predefined scaling parameter $\eta_t$ for the acquisition function $\mathrm{UCB}_t(\bm\theta)$~(Eq.~\ref{eq:UCB}) is used in each step $t$. If the parameter dimension
\begin{equation}
p\leq\tilde{\mathcal{O}}\left(\sqrt{\log n}\right),
\end{equation}
then the optimization error $r_T$~(Eq.~\ref{eq:ERROR 1}) satisfies $r_T\leq\epsilon$ with high success probability.
\label{theorem:Theorem 1}
\end{theorem}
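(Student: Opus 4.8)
The plan is to interpret the optimization error $r_T$ in Eq.~\eqref{eq:ERROR 1} as the simple regret of the GP-UCB procedure in Algorithm~\ref{alg:BO for QAOA} and to convert a cumulative-regret guarantee into a constraint on the parameter dimension $p$. Using the Remark, I would treat $f(\bm\theta)$ as a sample from a Gaussian process with the Matern-$\nu$ covariance, so that the information-theoretic analysis of Bayesian optimization applies. The cumulative regret $R_T=\sum_{t=1}^T\left(f(\bm\theta^*)-f(\bm\theta_t)\right)$ of GP-UCB admits, with high probability, a bound of the form $R_T\leq\sqrt{C_1 T\,\eta_T\,\gamma_T}$ up to additive constants, where $\gamma_T$ is the maximum information gain of the Matern kernel over $T$ rounds and $\eta_t$ is the scaling parameter of $\mathrm{UCB}_t(\bm\theta)$ in Eq.~\eqref{eq:UCB}. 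Since $\bm\theta_T^+$ is the best visited point, $r_T$ is the minimum instantaneous regret and therefore $r_T\leq R_T/T\leq\sqrt{C_1\,\eta_T\,\gamma_T/T}$; the whole problem reduces to forcing this quantity below $\epsilon$.

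Two ingredients feed the regret bound. First, the confidence-interval argument on the continuous domain $\mathcal{D}=[0,2\pi]^{2p}$ proceeds by discretizing $\mathcal{D}$ into a grid whose resolution is dictated by the continuity of $f$; Lemma~\ref{lemma:Lipschitz Continuity 1} supplies exactly this control, with Lipschitz constant $L=\sqrt{\mathbb{V}_{\bm\theta}[\partial_a f(\bm\theta)]/\delta}$. Carrying $L$ through the union bound over the grid yields a scaling parameter of the form $\eta_t=\mathcal{O}\!\left(\log(t/\delta)+d\log(d\,t\,L\,r)\right)$ with $d=2p$ and $r=2\pi$. Second, I would bound the derivative variance $\mathbb{V}_{\bm\theta}[\partial_a f(\bm\theta)]$ under Assumption~\ref{assump:1-design} by $\mathrm{poly}(n)$, using $\sup_{\bm\theta}|\partial_a f(\bm\theta)|\leq 2\|H_1\|\,\|H_l\|$ together with the first-moment structure of the local $1$-design; this guarantees $\log L=\mathcal{O}(\log n)$ so that $L$ only enters $\eta_t$ logarithmically.

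Next I would substitute the Matern information-gain bound $\gamma_T=\mathcal{O}\!\left(T^{\,d(d+1)/(2\nu+d(d+1))}\log T\right)$, whose quadratic dependence on $d$ in the exponent is the true origin of the $\sqrt{\log n}$ scaling. With $d=2p$, choosing $T=\mathrm{poly}(n^{1/\epsilon^2})$ so that $\log T=\Theta\!\left((1/\epsilon^2)\log n\right)$, the requirement $\sqrt{C_1\eta_T\gamma_T/T}\leq\epsilon$ becomes, after squaring and taking logarithms,
\[
\frac{2\nu}{2\nu+d(d+1)}\,\log T\;\gtrsim\;\log\!\left(\frac{C_1\,\eta_T\,\log T}{\epsilon^2}\right),
\]
whose right-hand side is only $\mathcal{O}(\log\log n)$ once the polynomial bounds on $\eta_T$ and $L$ are inserted. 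Rearranging and using $d(d+1)=\Theta(p^2)$ gives $p^2\lesssim\log n/\log\log n$, i.e.\ $p\leq\tilde{\mathcal{O}}(\sqrt{\log n})$ for constant $\epsilon$, which is the claimed limit. The high-probability statement then follows by a union bound over the failure event of Lemma~\ref{lemma:Lipschitz Continuity 1} and of the GP-UCB confidence intervals.

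The step I expect to be the main obstacle is the quantitative bookkeeping of how $d=2p$ propagates through the analysis: it is precisely the $d(d+1)$ (rather than linear-in-$d$) growth of the Matern information gain that separates $\tilde{\mathcal{O}}(\sqrt{\log n})$ from the weaker $\tilde{\mathcal{O}}(\log n)$, so the bound on $\gamma_T$ must be invoked in exactly this form and the $n$-dependence of $L$ kept strictly logarithmic throughout. A secondary subtlety is justifying that the shot-noise model $\xi_i^{\rm noise}\sim N(0,1/4M)$ plays the role of the i.i.d.\ Gaussian observation noise assumed by the regret bound, and that the additive constants incurred when passing from cumulative to simple regret are absorbable into the constant threshold $\epsilon$.
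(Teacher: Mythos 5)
Your proposal follows essentially the same route as the paper's proof in Appendix~\ref{appendix:Theorem1}: bound the simple regret $r_T$ by the averaged cumulative regret $\frac{1}{T}\sum_{t=1}^T\left(f(\bm\theta^*)-f(\bm\theta_t)\right)$, control the latter by GP-UCB confidence intervals on a discretization of $\mathcal{D}$ whose resolution is set by the Lipschitz constant from Lemma~\ref{lemma:Lipschitz Continuity 1}, relate $\sum_{t}\sigma_{t-1}^2(\bm\theta_t)$ to the information gain via Cauchy--Schwarz, and then solve $\sqrt{\eta_T\gamma_T/T}\leq\epsilon$ for the admissible $p$. The one substantive difference is the maximal-information-gain bound you invoke for the Matern kernel: you use the older Srinivas-type bound $\gamma_T=\mathcal{O}\left(T^{d(d+1)/(2\nu+d(d+1))}\log T\right)$ with $d=2p$ and attribute the $\sqrt{\log n}$ scaling to the quadratic $d(d+1)$ in the exponent, whereas the paper uses the tighter bound of Ref~\cite{vakili2021information}, $\gamma_T=\mathcal{O}\left(T^{p/(\nu+p)}\log^{\nu/(\nu+p)}T\right)$ (see Lemma~\ref{lemma:11}), which is only linear in $p$ in the exponent. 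Under the paper's bound the square root does not come from any $p^2$ in the information gain; it emerges from the quadratic sufficient condition $p^2-(\epsilon^2-\nu)p-\nu\epsilon^2\left(1+\log(T/\log T)\right)\leq 0$ used to enforce $p\left(\log T/T\right)^{\nu/(\nu+p)}\leq\epsilon^2$. Both mechanisms deliver $p\leq\tilde{\mathcal{O}}(\sqrt{\log n})$, so your argument is valid, but your diagnosis of ``the true origin of the $\sqrt{\log n}$ scaling'' does not match the paper's derivation (and with the tighter kernel bound the same regret analysis would in fact tolerate $p$ up to roughly $\log T/\log\log T$; the paper's stated limit is a conservative sufficient condition). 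Two minor bookkeeping points to align with the paper: $\mathbb{V}_{\bm\theta}[\partial_a f(\bm\theta)]$ is taken to be $1/\mathrm{poly}(n)$ rather than merely $\mathrm{poly}(n)$-bounded, and $T=\mathrm{poly}(n)$ is additionally required so that the discretization degree $\tau_t$ in Lemma~\ref{lemma:5} stays at least $1$; both enter $\eta_T$ only logarithmically, as you anticipated.
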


\begin{proof}[Proof Sketch] 
In order to determine the effective $p$ that guarantees $r_T\leq \epsilon$, we first establish that $r_T$ is upper bounded by $\frac{1}{T}\sum_{t=1}^T\left(f(\bm\theta ^*)-f(\bm\theta_t)\right)$. Here, $\bm\theta^{*}$ represents the global maximum point, and $\bm\theta_t$ denotes the next point selected in each step $t$. It is evident that the condition $\frac{1}{T}\sum_{t=1}^T\left(f(\bm\theta ^*)-f(\bm\theta_t)\right)\leq \epsilon$ is sufficient to deduce the result $r_T\leq\epsilon$. Hence, by ensuring that the upper bound on $\frac{1}{T}\sum_{t=1}^T\left(f(\bm\theta ^*)-f(\bm\theta_t)\right)$ is no greater than $\epsilon$, we can determine the effective $p$ that guarantees $r_T\leq \epsilon$. Subsequently, we utilize the continuity property of $f(\bm\theta)$~(Lemma~\ref{lemma:Lipschitz Continuity 1}) to establish an upper bound on $\frac{1}{T}\sum_{t=1}^T\left(f(\bm\theta ^*)-f(\bm\theta_t)\right)$. The formal statement and corresponding proof details are provided in Appendix~\ref{appendix:Theorem1}.
\end{proof}
 
In summary, we consider the $n$-qubit noiseless QAOA objective function $f(\bm\theta)$ as a sample drawn from a Gaussian process with the Matern covariance function $k_{\rm Matern-\nu}(\bm\theta,\bm\theta^{\prime})$, leveraging its high-order differentiable property. We then investigate the trainability of the corresponding noiseless QAOA circuit $U(\bm\theta)$ through this lens. Based on Assumption~\ref{assump:1-design} that either the left or right slice of each block in $U(\bm\theta)$ forms a local 1-design, we demonstrate that $U(\bm\theta)$ with a parameter dimension $p$ of $\tilde{\mathcal{O}}(\sqrt{\log n})$ can be trained efficiently using the Bayesian approach. 

\section{Main Results II: Analyzing the trainability of the noisy QAOA circuit}\label{sec:Theoretical Analysis 2}
After exploring the trainability of the noiseless QAOA circuit $U(\bm\theta)$ through the Bayesian approach, we will proceed to investigate its theoretical performance in a practical scenario. This scenario involves the Maximum Cut problem on an unweighted regular graph, where $U(\bm\theta)$ is affected by local Pauli channels. For the sake of clarity, we begin by presenting the definitions of the Maximum Cut problem and the local Pauli channel.

\begin{definition}[Maximum Cut problem]
Considering an unweighted $d$-regular graph $G=(V, E)$ with the vertices set $V=\{v_1,\cdots,v_n\}$ and the edges set $E=\{e_{i,j}\}$, the Maximum Cut problem aims at dividing all vertices into two disjoint sets such that maximizing the number of edges that connect the two sets. In the context of QAOA, the problem-oriented Hamiltonian $H_1^{\rm {MaxCut}}$ is defined as 
\begin{equation}\label{eq:H1_MaxCut}
  H_1^{\rm {MaxCut}}= \frac{1}{2}\sum_{e_{i,j}\in E}(\mathbb{I}^{\otimes n}-\sigma_{i}^{z}\sigma_{j}^{z}).
\end{equation}
\end{definition}

\begin{definition}[Local Pauli channel]
Let $\mathcal{N}_i$ denote a local Pauli channel and the action of $\mathcal{N}_i$ is random local Pauli operators $P$ acting on the $i$-th qubit according to a specific probability distribution $\{q(P)\}$, where $P\in\{\mathbb{I}, \sigma^x,\sigma^y,\sigma^z\}$. Specifically, the action of $\mathcal{N}_i$ is given by 
\begin{equation}\label{eq:Local Pauli Noise Channel}
    \mathcal{N}_i(\cdot)=\sum\limits_{P\in\{\mathbb{I},\sigma^x,\sigma^y, \sigma^z\}}\ q(P) P(\cdot) P^{\dagger},
\end{equation}
where $q(P)\in(0,1)$ and $\sum_{P\in\{\mathbb{I},\sigma^x,\sigma^y, \sigma^z\}}q(P)=1$. The noise strength in this model is represented by a single parameter $q=\max_{P\in\{\sigma^x,\sigma^y, \sigma^z\}}q(P)$.
\end{definition}

Due to imperfections in quantum devices, we assume that each quantum gate is affected by a $q$-strength local Pauli channel, and the effects of these noises are postponed until the end of each block in $U(\bm\theta)$. This assumption is reasonable, as it has been employed in Ref~\cite{wang2021noise} and demonstrated to hold true in Clifford circuits~\cite{quek2022exponentially}. Below is a detailed and precise description of this assumption.

\begin{assumption}\label{assump:local Pauli noise channel}
Given the $q$-strength local Pauli channel $\mathcal{N}_i$~(Eq.~\ref{eq:Local Pauli Noise Channel}) which is gate-independent and time-invariant, the $n$-qubit noisy QAOA circuit is given by 
\begin{align}\label{eq:noisy QAOA circuit}
\mathcal{U}_{q}(\bm\theta)=\bigcirc_{j=1}^p\bigcirc_{l=1}^2\left(\mathcal{N}\circ \mathcal{U}_{j,l}(\theta_{j,l})\right),
\end{align}
where $\mathcal{N}=\bigotimes_{i=1}^n\mathcal{N}_i$ is the noise channel and $\mathcal{U}_{j,l}(\theta_{j,l})$ is the channel that implements the unitary $U_{j,l}(\theta_{j,l})$ for $(j,l)\in[p]\times[2]$.
\end{assumption}

Assuming that Assumption~\ref{assump:local Pauli noise channel} holds, the $n$-qubit noisy QAOA objective function $\tilde {f}_{q}(\bm\theta)$ with $q$-strength local Pauli channels is given by
\begin{equation}\label{eq:Objective Function 2}
    \tilde {f}_q(\bm\theta)=\mathrm {Tr}\left[H_1^{\rm {MaxCut}}\mathcal{U}_{q}(\bm\theta)(\rho)\right],
\end{equation}
where $H_1^{\rm {MaxCut}}$ is the problem-oriented Hamiltonian about the Maximum Cut problem~(Eq.~\ref{eq:H1_MaxCut}), $\mathcal{U}_{q}(\bm\theta)$ is the noisy QAOA circuit~(Eq.~\ref{eq:noisy QAOA circuit}) and $\rho=(|+\rangle\langle+|)^{\otimes n}$ is the initial state. Now, the optimization error $\tilde r_T$ after $T$ steps of executing BO is defined as the difference in function values between the global maximum point  $\tilde{\bm\theta}^{*}$ and the approximation of the maximum point $\tilde{\bm\theta}_T^+$ in the previous $T$ steps. Specifically, 
\begin{align}\label{eq:ERROR 2}
    \tilde r_T=\tilde f_{q}(\tilde{\bm\theta} ^*)-\tilde f_{q}(\tilde{\bm\theta}^+_T),
\end{align}
where $\tilde{\bm\theta}_T^+=\arg \max_{\bm\theta\in\mathcal{\tilde{A}}_T} \tilde f_{q}(\bm\theta)$ with the accumulated points \(\mathcal{\tilde{A}}_{T}=\{\tilde{\bm\theta}_1,\cdots,\tilde{\bm\theta}_{T}\}\). Figure~\ref{fig:Main_Results_2} provides a detailed description of the scenario mentioned above for investigating the trainability of $\mathcal{U}_{q}(\bm\theta)$ using the Bayesian approach. In the following sections, we first explore the Lipschitz continuity of $\tilde {f}_q(\bm\theta)$. As previously mentioned, the parameter dimension and the circuit depth share similar magnitudes in the context of QAOA. Hence, we explore the effective parameter dimension $p$ of $\mathcal{U}_{q}(\bm\theta)$ directly and establish a theoretical limit on $p$ that ensures achieving $\tilde r_T\leq \epsilon$ within $T={\rm poly}(n)$ steps. Here, $\epsilon$ denotes a constant threshold. 

\subsection{Continuity Property of the Noisy QAOA Objective Function}
Now, we will show that Assumption~\ref{assump:local Pauli noise channel} results in a quantum analog of the Lipschitz continuity property about the noisy QAOA objective function $\tilde f_q(\bm\theta)$.

\begin{lemma}\label{lemma:Lipschitz Continuity 2}
Assuming that Assumption~\ref{assump:local Pauli noise channel} holds and considering the Maximum Cut problem on an unweighted $d$-regular graph with $n$ vertices, let $\tilde{f}_{q}(\bm\theta):\mathcal{D}=[0,2\pi]^{2p}\mapsto\mathbb{R}$ be the noisy QAOA objective function with $q$-strength local Pauli channels~(Eq.~\ref{eq:Objective Function 2}). For any $\bm\theta,\bm\theta^{\prime}\in\mathcal{D}$, the relationship 
\begin{equation}
\left|\tilde f_q(\bm\theta)-\tilde f_q(\bm\theta^{\prime})\right|\leq d^{3}n^{7/2}q^{(d+1)p}\Vert\bm\theta-\bm\theta^{\prime}\Vert_1  
\end{equation}
is valid, where the noise strength $q\in(0,1)$.
\end{lemma}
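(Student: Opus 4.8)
The plan is to reduce the Lipschitz estimate to a uniform bound on the partial derivatives and then to extract the noise-induced suppression factor $q^{(d+1)p}$ from a Heisenberg-picture analysis of how the observable propagates backward through the noisy layers. First, exactly as in the proof of Lemma~\ref{lemma:Lipschitz Continuity 1}, I would invoke Lagrange's Mean Value Theorem to write
\begin{equation}
\left|\tilde f_q(\bm\theta)-\tilde f_q(\bm\theta^{\prime})\right|\le \max_{j\in[2p]}\Big(\sup_{\bm\theta\in\mathcal D}\big|\partial_j\tilde f_q(\bm\theta)\big|\Big)\,\Vert\bm\theta-\bm\theta^{\prime}\Vert_1,
\end{equation}
so the whole task becomes the \emph{deterministic} bound $\sup_{\bm\theta}|\partial_{j}\tilde f_q(\bm\theta)|\le d^{3}n^{7/2}q^{(d+1)p}$ for every coordinate $j=(k,m)$. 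Differentiating $\tilde f_q(\bm\theta)=\mathrm{Tr}[H_1^{\rm MaxCut}\,\mathcal U_q(\bm\theta)(\rho)]$ with respect to $\theta_{k,m}$ only touches the unitary sub-channel $\mathcal U_{k,m}(\theta_{k,m})=e^{-i\theta_{k,m}H_m}(\cdot)e^{i\theta_{k,m}H_m}$ and replaces it by $-i[H_m,\,\cdot\,]$, yielding a trace of the form $-i\,\mathrm{Tr}[H_1^{\rm MaxCut}\,\Phi_{\mathrm{post}}([H_m,\Phi_{\mathrm{pre}}(\rho)])]$, where $\Phi_{\mathrm{pre}}$ and $\Phi_{\mathrm{post}}$ are the noisy sub-channels before and after the differentiated layer.

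Next I would pass to the Heisenberg picture and push $H_1^{\rm MaxCut}$ backward through $\Phi_{\mathrm{post}}$. Using that each Pauli channel $\mathcal N_i$ is self-adjoint and acts diagonally in the Pauli basis, and that $H_1^{\rm MaxCut}=\tfrac12\sum_{e_{i,j}\in E}(\mathbb I^{\otimes n}-\sigma_i^z\sigma_j^z)$ contributes only through its $|E|=nd/2$ weight-two terms $\sigma_i^z\sigma_j^z$ (the identity part commutes with everything and drops out of the derivative), I would expand the back-propagated observable in the Pauli basis and track each surviving Pauli path. The two generators play complementary roles: conjugation by $H_2=\sum_i\sigma_i^x$ rotates single-qubit factors without enlarging the support, whereas conjugation by $H_1$ spreads the support of a Pauli to the at most $d$ graph-neighbours of each occupied vertex. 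This is what ties the analysis to the $d$-regular structure.

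The crux, and the step I expect to be the main obstacle, is to show that the accumulated noise contraction along every contributing Pauli path is bounded by $q^{(d+1)p}$. Because $\rho=(|+\rangle\langle+|)^{\otimes n}$ only overlaps Pauli strings built from $\mathbb I$ and $\sigma^x$, any back-propagated term must be rotated into the $\sigma^x/\mathbb I$ alphabet at the front, and while it carries nontrivial single-qubit factors it is contracted by the Pauli channels it traverses. I would bound the per-channel contraction of each nontrivial factor in terms of the single noise parameter $q$, and then argue that, layer by layer, the $d$-regular light cone forces a vertex together with its $d$ neighbours, i.e.\ at least $d+1$ nontrivial single-qubit factors, to pass through the noise channel at each of the $p$ rounds, so the exponent compounds to $(d+1)p$. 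The difficulty is precisely this interplay: one must simultaneously guarantee that enough support survives at every layer (so the exponent is not smaller than $(d+1)p$) and convert the single scalar $q$ into a clean per-factor contraction bound, rather than the looser $1-\mathcal O(q)$ Pauli eigenvalues that a naive estimate supplies.

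Finally I would collect the polynomial prefactors. The sum over the $|E|=nd/2$ edge terms, the commutator with $H_2=\sum_{i}\sigma_i^x$ (a sum over $n$ qubits), the light-cone branching multiplicities (each contributing powers of $d$), and the norm conversions needed to pass from the operator expansion back to $|\partial_j\tilde f_q|$ together produce the factor $d^{3}n^{7/2}$. Combining this prefactor with the $q^{(d+1)p}$ suppression gives $\sup_{\bm\theta}|\partial_j\tilde f_q(\bm\theta)|\le d^{3}n^{7/2}q^{(d+1)p}$ for every $j$, and substituting back into the Mean-Value-Theorem inequality completes the proof.
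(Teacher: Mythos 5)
Your first step coincides with the paper's: both reduce the claim via Lagrange's Mean Value Theorem (the argument of Lemma~\ref{lemma:4}) to the uniform gradient bound $\sup_{\bm\theta}|\partial_j\tilde f_q(\bm\theta)|\le d^{3}n^{7/2}q^{(d+1)p}$. The divergence is in how that bound is obtained. The paper does not carry out any Pauli-path or light-cone analysis at all: it cites Corollary~2 of Ref.~\cite{wang2021noise} as a black box, which supplies a Lipschitz factor $L=\sqrt{\ln 2/2}\,d^{2}n^{5/2}\Vert H_1^{\rm MaxCut}\Vert_{\infty}\,q^{((d_1+1)p+1)}$ with $d_1$ the depth of the circuit implementing $e^{-i\theta H_1^{\rm MaxCut}}$, and then uses $\Vert H_1^{\rm MaxCut}\Vert_{\infty}=\mathcal{O}(nd)$ and $d_1=\Omega(d)$ to arrive at $\mathcal{O}(d^{3}n^{7/2}q^{(d+1)p})$. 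Everything you propose after the MVT step is an attempt to re-prove that corollary from scratch.

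That attempt has a genuine gap, and you identify it yourself: the ``crux'' step asserting that every contributing Pauli path picks up a contraction of $q^{(d+1)p}$ is not established. Under the paper's definition $q=\max_{P\in\{\sigma^x,\sigma^y,\sigma^z\}}q(P)$, the Pauli-transfer eigenvalues of $\mathcal{N}_i$ on nontrivial Paulis are of the form $1-\mathcal{O}(q)$, not $q$, so a per-factor contraction by $q$ does not follow from the channel as defined --- this is exactly the mismatch you flag but do not resolve (it is the reason the cited corollary uses a differently normalized noise parameter). In addition, the claim that at least $d+1$ nontrivial single-qubit factors must traverse the noise channel in \emph{every} one of the $p$ rounds is only asserted; the back-propagated observable can have components of small support, so a lower bound on the surviving support per layer would need a separate argument. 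Finally, the prefactor $d^{3}n^{7/2}$ is stated as the outcome of a bookkeeping you do not perform. As written, the proposal is a plausible outline whose hardest component --- precisely the content the paper imports from Corollary~2 of Ref.~\cite{wang2021noise} --- is missing.
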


The proof sketch of this lemma is similar to Lemma~\ref{lemma:Lipschitz Continuity 1} and the detailed proof can be found in Appendix~\ref{appendix:Lemma2}. In addition to Assumption~\ref{assump:local Pauli noise channel}, we provide the following remark based on the differentiable property of $\tilde f_q(\bm\theta)$.

\begin{remark}
Given the differentiable property of the noisy QAOA objective function $\tilde {f}_{q}(\bm\theta)$~\cite{fontana2023classical}, we can consider it as a sample drawn from a Gaussian process with the Matern prior covariance function $k_{\rm Matern-\nu}(\bm\theta,\bm\theta^{\prime})$~(Eq.~\ref{eq:Matern}), as this Gaussian process allows us to model high-order differentiable functions~\cite{rasmussen2006gaussian,kanagawa2018gaussian}.
\end{remark}

\subsection{Effective Parameter Dimension of the Noisy QAOA Circuit}
Assuming that Assumption~\ref{assump:local Pauli noise channel} holds and using the result of Lemma~\ref{lemma:Lipschitz Continuity 2}, we establish a theoretical limit on the parameter dimension $p$ through the lens of the Bayesian
approach, such that $\tilde r_T$ can be upper bounded by a constant threshold $\epsilon$ within $T={\rm poly}(n)$ steps.

\begin{theorem}[Informal]
\label{theorem:Theorem 2}
Consider the Maximum Cut problem on an unweighted $d$-regular graph with $n$ vertices, where $d$ is a constant. Given a constant threshold $\epsilon$ and  
a noisy QAOA circuit $\mathcal{U}_{q}(\bm\theta)$ with $q$-strength local Pauli channels~(Eq.~\ref{eq:noisy QAOA circuit}) that satisfies Assumption~\ref{assump:local Pauli noise channel}, run Algorithm~\ref{alg:BO for QAOA} for $T={\rm poly}(n^{1/\epsilon^2})$ steps, where a predefined scaling parameter $\eta_t$ for the acquisition function $\mathrm{UCB}_t(\bm\theta)$~(Eq.~\ref{eq:UCB}) is used in each step $t$. Under the condition where the noise strength $q$ spans $1/{\rm poly} (n)$ to $1/n^{1/\sqrt{\log n}}$, if the parameter dimension
 \begin{equation}
    p\leq\mathcal{O}\left(\log n/\log(1/q)\right),
\end{equation}     
then the optimization error $\tilde r_T$~(Eq.~\ref{eq:ERROR 2}) satisfies $\tilde r_T\leq\epsilon$ with high success probability.  
\end{theorem}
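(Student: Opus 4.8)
The plan is to follow the same two-stage template as the proof of Theorem~\ref{theorem:Theorem 1}, replacing every noiseless object by its noisy counterpart and then tracking how the noise strength $q$ threads through each estimate. First I would reduce the simple optimization error to an averaged cumulative regret. Because $\tilde{\bm\theta}_T^+$ is the best among the queried points $\{\tilde{\bm\theta}_t\}_{t=1}^T$, we have $\tilde r_T=\tilde f_q(\tilde{\bm\theta}^*)-\tilde f_q(\tilde{\bm\theta}_T^+)\le \tfrac{1}{T}\sum_{t=1}^T\bigl(\tilde f_q(\tilde{\bm\theta}^*)-\tilde f_q(\tilde{\bm\theta}_t)\bigr)$, so it suffices to bound the averaged instantaneous regret incurred by the UCB rule of Eq.~\ref{eq:UCB}. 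Together with the Remark identifying $\tilde f_q$ as a sample of a Matern-$\nu$ Gaussian process on $\mathcal{D}=[0,2\pi]^{2p}$, this recasts the whole question as a regret-minimization problem for GP-UCB, exactly as in the noiseless case.

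Second, I would invoke the information-theoretic regret bound for GP-UCB, of the schematic form $\sum_{t=1}^T\bigl(\tilde f_q(\tilde{\bm\theta}^*)-\tilde f_q(\tilde{\bm\theta}_t)\bigr)\le\sqrt{C_1\,T\,\eta_T\,\gamma_T}$, valid with high probability for a suitable schedule of the scaling parameter $\eta_t$, where $\gamma_T$ is the maximum information gain of the Matern-$\nu$ kernel. Two ingredients feed this bound and must be made explicit here. (i)~The maximum information gain, for which I would use the Matern estimate $\gamma_T=\tilde{\mathcal{O}}\bigl(T^{2p/(2\nu+2p)}\bigr)$ in dimension $2p$; both its exponent and its (exponential-in-dimension) prefactor worsen as $p$ grows, encoding the curse of dimensionality. (ii)~The confidence/discretization parameter $\eta_T$: passing from the continuous domain to a finite grid requires Lipschitz control of the objective, and here I would insert Lemma~\ref{lemma:Lipschitz Continuity 2}, so that $\eta_T$ inherits, through the logarithm of the grid cardinality, a term of order $p\log L$ with $L:=d^3 n^{7/2}q^{(d+1)p}$, i.e.\ $p\bigl(\tfrac{7}{2}\log n-(d+1)\,p\log(1/q)+\mathcal{O}(1)\bigr)$.

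Third, I would impose $\sqrt{C_1\eta_T\gamma_T/T}\le\epsilon$, rearrange it into an explicit lower bound of the form $T\ge\bigl(C_1\eta_T\,\kappa(\nu,p)/\epsilon^2\bigr)^{1+p/\nu}$ with $\kappa(\nu,p)$ the information-gain prefactor, and take logarithms to read off the admissible $p$. The stated threshold should then surface from the competition between two monotone effects: the factor $(1+p/\nu)$ together with $\kappa(\nu,p)$ inflates $\log T$ super-linearly in $p$, so $T$ stays polynomial only while $2p\lesssim\sqrt{\log n}$, whereas the attenuation $q^{(d+1)p}$ in $L$ keeps the Lipschitz constant, and hence the discretization cost in $\eta_T$, under control. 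The regime boundary occurs where the polynomial prefactor and the attenuation balance, $n^{7/2}q^{(d+1)p}=\Theta(1)$, equivalently $L=\Theta(1)$, which is precisely $p=\Theta\bigl(\log n/\log(1/q)\bigr)$. I would then check that throughout the admissible window $q\in[1/\mathrm{poly}(n),\,n^{-1/\sqrt{\log n}}]$ this threshold never exceeds the noiseless $\tilde{\mathcal{O}}(\sqrt{\log n})$ bound of Theorem~\ref{theorem:Theorem 1}, so that $p\le\mathcal{O}(\log n/\log(1/q))$ simultaneously guarantees $2p\lesssim\sqrt{\log n}$ and $L\ge\Omega(1)$; both conditions together force the required number of steps down to $T=\mathrm{poly}(n^{1/\epsilon^2})$. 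A final routine step verifies that the chosen $\eta_t$ makes the confidence bound hold simultaneously for all $t\le T$, upgrading the estimate to the ``high success probability'' of the statement.

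\textbf{Main obstacle.} The crux is the bookkeeping in the third step: one must control, at the same time, the dimension-driven growth of $\gamma_T$ (which penalizes large $p$) and the noise-driven shrinkage of the Lipschitz constant in $\eta_T$ (which tolerates large $p$), and show that their combination keeps $T$ polynomial exactly up to $p=\mathcal{O}(\log n/\log(1/q))$. This is also the step that makes rigorous the counter-intuitive conclusion that a larger noise strength enlarges the trainable depth: increasing $q$ shrinks $\log(1/q)$ and thereby raises the balance point $\log n/\log(1/q)$. Particular care is needed to confine the entire derivation to the window $q\in[1/\mathrm{poly}(n),\,n^{-1/\sqrt{\log n}}]$, outside of which either the Matern modelling of $\tilde f_q$ or the polynomial-step guarantee would break down.
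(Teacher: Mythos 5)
Your proposal is correct and follows essentially the same route as the paper's proof: reduce $\tilde r_T$ to the averaged cumulative regret, run the GP-UCB/information-gain machinery for the Matern kernel with the noisy Lipschitz constant $L=d^3n^{7/2}q^{(d+1)p}$ from Lemma~\ref{lemma:Lipschitz Continuity 2} entering through the discretization term in $\eta_t$, extract $p\leq\mathcal{O}(\log n/\log(1/q))$ from the requirement that the grid (equivalently $pT^2L$) remain non-degenerate, and use the window $1/\mathrm{poly}(n)\leq q\leq n^{-1/\sqrt{\log n}}$ to make the log-factor constant and to ensure this bound sits below the noiseless $\tilde{\mathcal{O}}(\sqrt{\log n})$ threshold. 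The only differences from the paper are cosmetic bookkeeping (e.g., the paper pins the boundary at $pT^2d^3n^{7/2}q^{(d+1)p}\geq 1$ rather than $L=\Theta(1)$, which shifts only constants).
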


The proof sketch of this theorem is similar to Theorem~\ref{theorem:Theorem 1}. The formal statement and corresponding proof details are provided in Appendix~\ref{appendix:Theorem2}. Following our previous lens of the Bayesian approach, we consider the $n$-qubit noisy QAOA objective function $\tilde f_q(\bm\theta)$ as a sample drawn from a Gaussian process with the Matern covariance function $k_{\rm Matern-\nu}(\bm\theta,\bm\theta^{\prime})$. 
Using this framework, we investigate the trainability of the corresponding noisy QAOA circuit $\mathcal{U}_{q}(\bm\theta)$ within a practical scenario concerning the Maximum Cut problem on an unweighted regular graph. Based on Assumption~\ref{assump:local Pauli noise channel}, we show that if each quantum gate is affected by a $q$-strength local Pauli channel with the noise strength range of $1/{\rm poly} (n)$ to $1/n^{1/\sqrt{\log n}}$, $\mathcal{U}_{q}(\bm\theta)$ with a parameter dimension $p$ of $\mathcal{O}\left(\log n/\log(1/q)\right)$ can also be trained efficiently. For a more intuitive description of the noise strength range described above, we focus on near-term quantum devices with 50-100 qubits~\cite{preskill2018quantum}. In this case, $1/n^{1/\sqrt{\log n}}$ is only slightly larger than 0.1. This suggests that this range corresponds to the actual noise levels in near-term quantum devices and holds practical significance. Notably, $\mathcal{O}\left(\log n/\log(1/q)\right)$ grows as $q$ increases, implying the growth in the depth of $\mathcal{U}_{q}(\bm\theta)$ that can be effectively trained as the noise strength increases. This finding implies that a specific strength noise may obscure certain local optima, thereby improving the trainability.

\section{Discussion}
In order to emphasize our contributions in characterizing the trainability of QAOA, a topic of utmost importance in the NISQ era, we provide a brief discussion on the connections between our work and related studies in terms of fundamental assumptions and obtained results. Earlier studies on this topic have investigated it from two key perspectives:

\begin{enumerate}
    \item Theoretical convergence performance of the QAOA objective function under the strongly-convex assumption or satisfying Polyak-Lojasiewicz inequality~\cite{harrow2021low,sweke2020stochastic}.
    \item Estimation complexity of the gradient for the QAOA objective function investigated by analyzing the existence of the barren plateau phenomenon~\cite{mcclean2018barren,cerezo2021cost,wang2021noise}.
\end{enumerate}

The theoretical convergence performance of the QAOA objective function within the context of SGD was examined in simplified settings, assuming either strong convexity or the satisfaction of Polyak-Lojasiewicz inequality~\cite{harrow2021low,sweke2020stochastic}. However, in realistic scenarios, the landscape of the QAOA objective function is generally non-convex and contains numerous local optima. Therefore, these studies focus on the scenario where the QAOA objective function, given suitable initial parameters, resides in a convex vicinity containing a global or local optimum. In general, finding appropriate initial parameters is not always easy. The theoretical convergence performance of the QAOA objective function with the non-convex landscape remains unclear when utilizing random initial parameters. 
Our work investigates the trainability of the QAOA objective function within the context of BO, a gradient-free global optimization method. In contrast to gradient-based optimization methods, BO demonstrates its favorable performance in handling the non-convex objective function. We no longer rely on explicit assumptions regarding the landscape of the QAOA objective function. Instead, we leverage the high-order differentiable property of the QAOA objective function and consider it as a sample drawn from a Gaussian process with the Matern covariance function, regardless of the presence of noise. Additionally, there are several widely accepted assumptions. For instance, either the left or right slice of each block in the noiseless QAOA circuit is assumed to form a local 1-design. For the Maximum Cut problem on an unweighted regular graph, we make a commonly employed assumption that each quantum gate is affected by a local Pauli channel, and the effects of these noises are postponed until the end of each block in the QAOA circuit. Investigating the trainability of QAOA through the Bayesian approach in these more realistic scenarios holds substantial practical significance.

The estimation complexity of the gradient for the QAOA objective function was investigated by analyzing the existence of the barren plateau phenomenon~\cite{mcclean2018barren}. In this phenomenon, the probability that the gradient along any reasonable direction is non-zero to some fixed precision is exponentially small as a function of the number of qubits, rendering the estimation complexity of the gradient unacceptable. Refs~\cite{cerezo2021cost,wang2021noise} examine the causes of the barren plateau phenomenon in both noiseless and noisy scenarios, providing insights into situations where this phenomenon can be mitigated. For instance, Ref~\cite{cerezo2021cost} suggested that when each block in the $n$-qubit noiseless QAOA circuit forms a local 2-design, the corresponding QAOA objective function with a local problem-oriented Hamiltonian leads to at worst a polynomially vanishing gradient as long as the depth of the circuit is $\mathcal{O}(\log n)$. Meanwhile, similar findings also exist in the noisy scenario~\cite{wang2021noise}. 
In such situations, although the estimation complexity of the gradient may be acceptable, it does not necessarily guarantee the effective training of the corresponding QAOA circuit. 
Our work further investigates the trainability of QAOA in the aforementioned situations through the lens of the Bayesian approach. When either the left or right slice of each block in the $n$-qubit noiseless QAOA circuit forms a local 1-design, we demonstrate that the circuit with a depth of $\tilde{\mathcal{O}}(\sqrt{\log n})$ can be trained efficiently. Moreover, we show that if each quantum gate in the $n$-qubit QAOA circuit is affected by a $q$-strength local Pauli channel with the noise strength range of $1/{\rm poly} (n)$ to 0.1, the noisy QAOA circuit with a depth of $\mathcal{O}\left(\log n/\log(1/q)\right)$ can also be trained efficiently for solving the Maximum Cut problem on an unweighted regular graph.

\section{Conclusion}
In summary, we propose a novel lens for comprehending the QAOA objective function. The QAOA objective function exhibits the properties of a high-order differentiable function in both noiseless and noisy scenarios. Moreover, a Gaussian process with the Matern covariance function has the remarkable ability to generate high-order differentiable functions. As a result, regardless of the presence of noise, the QAOA objective function can be considered as a sample drawn from a Gaussian process with the Matern covariance function. This paradigm shift allows us to investigate the trainability of the QAOA circuit using information theory in the context of BO, without relying on explicit assumptions regarding the landscape of the corresponding QAOA objective function. We demonstrate the $n$-qubit noiseless QAOA circuit with a depth of $\tilde{\mathcal{O}}(\sqrt{\log n})$ can be trained efficiently, based on the widely accepted assumption that either the left or right slice of each block in the circuit forms a local 1-design. Additionally, we show that if each quantum gate in the $n$-qubit QAOA circuit is affected by a $q$-strength local Pauli channel with the noise strength range of $1/{\rm poly} (n)$ to 0.1, the noisy QAOA circuit with a depth of $\mathcal{O}\left(\log n/\log(1/q)\right)$ can also be trained efficiently for solving the Maximum Cut problem on an unweighted regular graph. It should be noted that our work focuses on the trainability of the QAOA circuit. To solve constraint satisfaction problems, this depth of the QAOA circuit may not provide optimal solutions. In conclusion, these two findings represent significant progress in understanding the complexity of the iterations involved in optimizing the variational parameters within the QAOA circuit. Our results offer valuable insights into the theoretical performance of quantum optimization algorithms in the NISQ era.

This work not only opens up a new avenue for further investigation but also identifies an additional worthwhile topic to explore. We establish an upper bound on the optimization error and derive a theoretical limit on the depth of the QAOA circuit. This limit ensures that the optimization error converges to a constant level as the number of optimization iterations scales polynomially with the number of qubits in the context of BO. To investigate the scenario where effective training of the QAOA circuit becomes unattainable, i.e., the optimization error fails to converge within a constant level as the number of optimization iterations scales polynomially with the number of qubits, a lower bound on the optimization error is necessary. Therefore, exploring this lower bound in the training of the QAOA circuit through the Bayesian approach is an intriguing research topic to pursue. This investigation will lead to a more comprehensive understanding of the complexity of the iterations associated with optimizing the variational parameters within the QAOA circuit.
 
\section*{Acknowledgments}
This work is supported by National Natural Science Foundation of China (Grant Nos.~62272056, 61976024, 62371069), China Scholarship Council (Grant No.~202006470011) and UWA-CSC HDR
Top-Up Scholarship (Grant No.~230808000954).

\bibliography{Trainability_Analysis}

\clearpage
\widetext
\widetext
\appendix

\section{Related Definitions}
This section presents background information on the Matern covariance function, differential entropy, and information gain. 

\subsection{Matern Covariance Function}\label{appendix:Matern}
The Matern covariance function, widely used in BO, is defined as
\begin{equation}\label{eq:Matern}
    k_{\mathrm{Matern}-\nu}(\bm\theta,\bm\theta^{\prime})=\frac{1}{\Gamma(\nu)2^{\nu-1}}\left(\frac{\sqrt{2\nu}d}{l}\right)^{\nu}B_{\nu}\left(\frac{\sqrt{2\nu}d}{l}\right),
\end{equation}
where $l>0$, $d=\Vert\bm\theta-\bm\theta^{\prime}\Vert_2$ represents the Euclidean distance between \(\bm\theta\) and \(\bm\theta^{\prime}\), \(\nu>0\) denotes the smoothness parameter, $\Gamma(\cdot)$ represents the gamma function, and $B_{\nu}(\cdot)$ denotes the modified Bessel function of the second kind. Varying \(\nu\) determines the smoothness of samples drawn from a Gaussian process with this covariance function. Smaller values of \(\nu\) correspond to rougher samples. Additionally, these samples are $\lceil \nu \rceil-1$ times continuously differentiable~\cite{rasmussen2006gaussian}. Figure~\ref{fig:samples} illustrates samples drawn from a Gaussian process with this covariance function using different values of $\nu$. 

\begin{figure*}[htpb]
\centering
\includegraphics[width=0.99\textwidth]{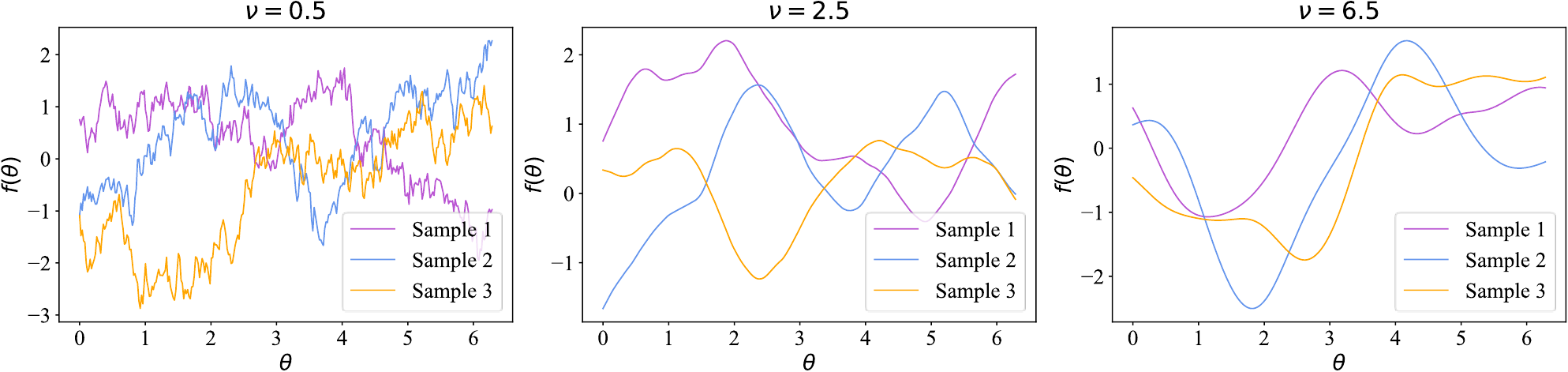} 
\caption{Samples drawn from a Gaussian process with the Matern covariance function $k_{\mathrm{Matern}-\nu}(\bm\theta,\bm\theta^{\prime})$ using smoothness parameters $\nu$ from $\nu=0.5$ to $\nu=6.5$.}
\label{fig:samples}
\end{figure*}

\subsection{Differential Entropy} 
Let $X$ be a random variable with a probability density function $q$ whose support is a set $\mathcal{X}$. The differential entropy $H(X)$ is defined as
\begin{equation}
    \mathrm{H}[X]=\mathbb{E}[-\log(q(X))]=\int_{\mathcal{X}}q(x)\log q(x) dx.
\end{equation}

Specifically, the differential entropy of a multivariate Gaussian random variable $X_{\rm Gaussian}$ with distribution $N(\bm \mu,\bm K)$ is expressed as 
\begin{equation}\label{eq:entropy}
    \mathrm{H}[X_{\rm Gaussian}]=\frac{1}{2}\log(\det(2\pi e\bm K)),
\end{equation}
where $\bm\mu$ denotes the mean vector and $\bm K$ represents the covariance matrix.

\subsection{Information Gain}
Let $\mathcal{S}_T=\{(\bm\theta_{1}, y(\bm\theta_{1})), \cdots, (\bm\theta_{T},y(\bm\theta_{T}))\}$ be $T$ accumulated observations about the function $f(\bm\theta)$, where $y(\bm\theta_t)$ denotes the estimation of $f(\bm\theta_t)$ for $t\in[T]$. The informativeness of $\mathcal{S}_T$ regarding $f(\bm\theta)$ is quantified by the information gain $g_T$, which is the mutual information~\cite{shannon1948mathematical} between $\bm {y}_{T}=[y(\bm\theta_1) \cdots y(\bm\theta_{T})]^{\mathsf{T}}$ and $\bm f_T=[f(\bm\theta_1)  \cdots f(\bm\theta_{T})]^{\mathsf{T}}$. Specifically, 
\begin{equation}\label{eq:gain}
g_T=\mathrm{H}[\bm {y}_T]-\mathrm{H}[\bm {y}_T|\bm f_T],
\end{equation}
where $\mathrm{H}[\bm {y}_T]$ represents the information entropy of $\bm {y}_T$ and $\mathrm{H}[\bm {y}_T|\bm f_T]$ denotes the conditional information entropy of $\bm {y}_T$ given $\bm f_T$.

\section{Proof of Lemma~\ref{lemma:Lipschitz Continuity 1}}\label{appendix:Lemma1}
In this section, we present a complete proof of Lemma~\ref{lemma:Lipschitz Continuity 1} through a sequence of lemmas. We initially establish the following result regarding the partial derivative $\partial_jf(\bm\theta)$ of the noiseless QAOA objective function $f(\bm\theta): \mathcal{D}=[0,2\pi]^{2p}\mapsto\mathbb{R}$ for any $j\in[2p]$ and any $\bm\theta\in\mathcal{D}$.

\begin{lemma}\label{lemma:3}
Assuming that Assumption~\ref{assump:1-design} holds, let $f(\bm\theta): \mathcal{D}=[0,2\pi]^{2p}\mapsto\mathbb{R}$ be the noiseless QAOA objective function. Given a failure probability $\delta\in(0,1)$, the partial derivative $\partial_j f(\bm\theta)$ satisfies
\begin{equation}
 \forall j\in[2p],\forall\bm\theta\in\mathcal{D}, ~\left|\partial_jf(\bm\theta)\right|\leq \sqrt{\mathbb{V}_{\bm\theta}[\partial_a f(\bm\theta)]/\delta}
\end{equation}
with a success probability of at least $\geq1-\delta$, 
where $\mathbb{V}_{\bm\theta}[\partial_a f(\bm\theta)]$ is the variance of $\partial_af(\bm\theta)$ with index $a=\arg \max_{j\in[2p]}(\sup_{\bm\theta\in
\mathcal{D}}\left|\partial_jf(\bm\theta)\right|)$.
\end{lemma}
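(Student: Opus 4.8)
The plan is to prove Lemma~\ref{lemma:3} in two stages: first establish that every partial derivative has zero mean over the parameter domain, and then apply a concentration inequality to convert the variance bound into a high-probability pointwise bound. I begin by fixing an index $j \in [2p]$ corresponding to the block $U_{j,l}(\theta_{j,l})$ and computing $\partial_j f(\bm\theta)$. Since $U_{j,l}(\theta_{j,l}) = \exp(-i\theta_{j,l} H_l)$ appears in the circuit $U(\bm\theta)$, differentiation brings down a commutator with $H_l$, yielding $\partial_j f(\bm\theta) = i\,\mathrm{Tr}\!\left[H_1\, U_{+}(\bm\theta)\,[H_l, U_{-}(\bm\theta)\rho U_{-}^{\dagger}(\bm\theta)]\,U_{+}^{\dagger}(\bm\theta)\right]$, where $U_{-}$ and $U_{+}$ collect the factors to the left and right of the differentiated block, consistent with the decomposition $U_{j,l} = U^{(j,l)}_{+} U^{(j,l)}_{-}$ in Assumption~\ref{assump:1-design}.

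First I would show $\mathbb{E}_{\bm\theta}[\partial_j f(\bm\theta)] = 0$. The key is to average over the parameter $\theta$ governing whichever slice ($U^{(j,l)}_{-}$ or $U^{(j,l)}_{+}$) forms a local 1-design. By Assumption~\ref{assump:1-design}, integrating the derivative expression over that slice reduces, via the 1-design property, to the Haar average of a single unitary conjugation. Using the standard Weingarten/1-design identity $\int dU\, U A U^{\dagger} = \frac{\mathrm{Tr}[A]}{2^n}\,\mathbb{I}$, the commutator structure $[H_l, \cdot]$ produces a traceless operator after averaging, so the expectation vanishes. This is the step I expect to carry the most technical weight: one must verify that the 1-design average genuinely annihilates the derivative regardless of which slice satisfies the design property, handling both the left-slice and right-slice cases, and correctly placing the commutator inside the averaged conjugation so that its trace vanishes.

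With $\mathbb{E}_{\bm\theta}[\partial_j f(\bm\theta)] = 0$ in hand, the second stage is a direct application of Chebyshev's inequality. Treating $\bm\theta$ as uniformly distributed over $\mathcal{D}$, for any fixed $j$ we have
\begin{equation}
\Pr_{\bm\theta}\!\left[\left|\partial_j f(\bm\theta)\right| \geq k\sqrt{\mathbb{V}_{\bm\theta}[\partial_j f(\bm\theta)]}\right] \leq \frac{1}{k^2}.
\end{equation}
Choosing $k = 1/\sqrt{\delta}$ gives $\left|\partial_j f(\bm\theta)\right| \leq \sqrt{\mathbb{V}_{\bm\theta}[\partial_j f(\bm\theta)]/\delta}$ with probability at least $1-\delta$. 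To obtain the uniform statement over all $j$ with the single worst-case variance $\mathbb{V}_{\bm\theta}[\partial_a f(\bm\theta)]$, I would use that $\mathbb{V}_{\bm\theta}[\partial_j f(\bm\theta)] \leq \mathbb{V}_{\bm\theta}[\partial_a f(\bm\theta)]$ where $a = \arg\max_{j}(\sup_{\bm\theta}|\partial_j f(\bm\theta)|)$ dominates, so replacing each per-index variance by the maximal one only loosens the bound.

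The remaining subtlety is matching the form of the claimed bound, which states $|\partial_j f(\bm\theta)| \leq \sqrt{\mathbb{V}_{\bm\theta}[\partial_a f(\bm\theta)]/\delta}$ as a deterministic-looking inequality holding with success probability $1-\delta$ \emph{simultaneously} for all $j$ and $\bm\theta$. I read the statement as asserting the pointwise Chebyshev guarantee with the worst-case variance substituted; the quantifier over $\bm\theta$ is absorbed into the probabilistic event via the supremum index $a$, so no union bound blow-up over the continuum is needed once the maximal-variance direction is identified. The main obstacle throughout remains the first stage — correctly invoking the 1-design property on the appropriate slice to kill the mean — since the concentration argument is then essentially mechanical.
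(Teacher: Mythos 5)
Your two-stage structure (zero mean via the 1-design property, then Chebyshev) matches the paper's proof of Lemma~\ref{lemma:3}. For the first stage the paper simply cites Ref.~[cerezo2021cost] for $\mathbb{E}_{\bm\theta}[\partial_a f(\bm\theta)]=0$, so your plan to re-derive it is acceptable, though note that Assumption~\ref{assump:1-design} posits a \emph{local} 1-design, so the relevant averaging identity is the partial-trace-and-replace-by-identity map on the local subsystem rather than the global identity $\int dU\, U A U^{\dagger}=\mathrm{Tr}[A]\,\mathbb{I}/2^n$ that you quote; the commutator still averages to zero, but the formula as written is for a global design.

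The genuine gap is in how you pass from a single-index Chebyshev bound to the uniform statement over all $j\in[2p]$ and $\bm\theta\in\mathcal{D}$. You apply Chebyshev separately to each $\partial_j f$ with its own variance $\mathbb{V}_{\bm\theta}[\partial_j f(\bm\theta)]$ and then assert $\mathbb{V}_{\bm\theta}[\partial_j f(\bm\theta)]\leq\mathbb{V}_{\bm\theta}[\partial_a f(\bm\theta)]$ because the index $a$ ``dominates.'' But $a$ is defined as $\arg\max_{j}(\sup_{\bm\theta}|\partial_j f(\bm\theta)|)$, i.e., it maximizes the supremum of the absolute derivative, \emph{not} the variance; a larger supremum does not imply a larger variance, so this monotonicity claim is unjustified. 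Moreover, even granting it, running Chebyshev once per index and collecting the events would cost a union bound over the $2p$ indices, degrading the success probability to $1-2p\delta$, which you explicitly claim to avoid. The paper's route sidesteps both problems: it applies Chebyshev \emph{once}, to the single random variable $\partial_a f(\bm\theta)$, obtaining $|\partial_a f(\bm\theta)|\leq s$ with probability $\geq 1-\mathbb{V}_{\bm\theta}[\partial_a f(\bm\theta)]/s^2$, and then uses the purely deterministic consequence of the definition of $a$, namely $|\partial_j f(\bm\theta)|\leq\sup_{\bm\theta'\in\mathcal{D}}|\partial_a f(\bm\theta')|$ for every $j$ and every $\bm\theta$, to upgrade that single event to the uniform statement with no union bound and no variance comparison. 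To repair your argument you should replace the per-index Chebyshev applications and the variance-domination claim with this single application at index $a$ followed by the deterministic domination step.
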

\begin{proof}
Fix $a\in[2p]$, by Chebyshev's Inequality~\cite{kvanli2006concise}, we have
\begin{equation}
\Pr \{\forall\bm\theta\in\mathcal{D},\forall s>0,~ \left|\partial_a f(\bm\theta)-\mathbb{E}_{\bm\theta}[\partial _af(\bm\theta)]\right|\leq s\}\geq 1-\mathbb{V}_{\bm\theta}[\partial_a f(\bm\theta)]/s^2,
\end{equation}
where $\mathbb{E}_{\bm\theta}[\partial _af(\bm\theta)]$ and $\mathbb{V}_{\bm\theta}[\partial_a f(\bm\theta)]$ are the expectation value and the variance of $\partial_a f(\bm\theta)$.
Assuming that Assumption~\ref{assump:1-design} holds, we demonstrate that $\mathbb{E}_{\bm\theta}[\partial_af(\bm\theta)]=0$. The detailed proof can be found in Ref~\cite{cerezo2021cost}. This implies 
\begin{equation}
\Pr \{\forall\bm\theta\in\mathcal{D},\forall s>0,~\left|\partial_a f(\bm\theta)\right|\leq s\}\geq 1-\mathbb{V}_{\bm\theta}[\partial_a f(\bm\theta)]/s^2.
\end{equation}
By choosing $a=\arg \max_{j\in[2p]}(\sup_{\bm\theta\in
\mathcal{D}}\left|\partial_jf(\bm\theta)\right|)$, we have
\begin{equation}
\Pr\left\{\forall s>0,~ \sup_{\bm\theta\in
\mathcal{D}}\left|\partial_af(\bm\theta)\right|\leq s\right\}\geq1-\mathbb{V}_{\bm\theta}[\partial_a f(\bm\theta)]/s^2.
\end{equation}
The use of the index $a$ and the notation $\rm sup(\cdot)$ immediately implies
\begin{equation}
\Pr\{\forall j\in[2p],\forall \bm\theta\in\mathcal{D},\forall s>0,~\left|\partial_jf(\bm\theta)\right|\leq s\}\geq1-\mathbb{V}_{\bm\theta}[\partial_a f(\bm\theta)]/s^2.
\end{equation}
Let the failure probability $\delta=\mathbb{V}_{\bm\theta}[\partial_a f(\bm\theta)]/s^2\in(0,1)$, we have
\begin{equation}
\Pr\left\{\forall j\in[2p],\forall \bm\theta\in\mathcal{D},~\left|\partial_jf(\bm\theta)\right|\leq \sqrt{\mathbb{V}_{\bm\theta}[\partial_a f(\bm\theta)]/\delta}\right\}\geq1-\delta.
\end{equation}
\end{proof}

\begin{lemma}\label{lemma:4}
Given a noiseless QAOA objective function $f(\bm\theta): \mathcal{D}=[0,2\pi]^{2p}\mapsto\mathbb{R}$, we have
\begin{equation}
\forall\bm\theta,\bm\theta^{\prime}\in\mathcal{D},~\left|f(\bm\theta)-f(\bm\theta^{\prime})\right|\leq\max_{j\in[2p]}\left(\sup_{\bm\theta\in\mathcal{D}}\left|\partial_jf(\bm\theta)\right|\right)\Vert\bm\theta-\bm\theta^{\prime}\Vert_1, 
\end{equation}
where $\partial_jf(\bm\theta)$ is the partial derivative of $f(\bm\theta)$ for $j\in[2p]$.
\end{lemma}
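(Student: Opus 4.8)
The plan is to prove this multidimensional Lipschitz-type bound by reducing it to repeated applications of the one-dimensional Lagrange Mean Value Theorem along a coordinate-wise path, exactly as flagged in the proof sketch of Lemma~\ref{lemma:Lipschitz Continuity 1}. Since the domain $\mathcal{D}=[0,2\pi]^{2p}$ is a convex box, I can connect $\bm\theta^{\prime}$ to $\bm\theta$ through a sequence of $2p$ axis-aligned segments, altering one coordinate at a time while remaining inside $\mathcal{D}$ throughout.

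Concretely, first I would define intermediate points $\bm\theta^{(0)}=\bm\theta^{\prime}$ and, for each $k\in[2p]$, let $\bm\theta^{(k)}$ agree with $\bm\theta$ in its first $k$ coordinates and with $\bm\theta^{\prime}$ in the remaining ones, so that $\bm\theta^{(2p)}=\bm\theta$ and consecutive points $\bm\theta^{(k-1)},\bm\theta^{(k)}$ differ only in the $k$-th coordinate. A telescoping decomposition then gives $f(\bm\theta)-f(\bm\theta^{\prime})=\sum_{k=1}^{2p}\left(f(\bm\theta^{(k)})-f(\bm\theta^{(k-1)})\right)$, which isolates the contribution of each individual parameter.

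Next, on each segment I would apply Lagrange's Mean Value Theorem to the single-variable function obtained by restricting $f$ to the line through $\bm\theta^{(k-1)}$ and $\bm\theta^{(k)}$, which varies only in the $k$-th coordinate. This restriction is differentiable because $f(\bm\theta)=\mathrm{Tr}\left[H_1 U(\bm\theta)\rho U^{\dagger}(\bm\theta)\right]$ is smooth in every parameter, each $U_{j,l}(\theta_{j,l})$ being an analytic matrix exponential. The theorem then yields a point $\bm\xi^{(k)}$ on the $k$-th segment with $f(\bm\theta^{(k)})-f(\bm\theta^{(k-1)})=\partial_k f(\bm\xi^{(k)})(\theta_k-\theta_k^{\prime})$. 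Taking absolute values, applying the triangle inequality, and bounding each factor by $\left|\partial_k f(\bm\xi^{(k)})\right|\leq\sup_{\bm\theta\in\mathcal{D}}\left|\partial_k f(\bm\theta)\right|\leq\max_{j\in[2p]}\left(\sup_{\bm\theta\in\mathcal{D}}\left|\partial_j f(\bm\theta)\right|\right)$ collapses the sum into $\max_{j\in[2p]}\left(\sup_{\bm\theta\in\mathcal{D}}\left|\partial_j f(\bm\theta)\right|\right)\sum_{k=1}^{2p}\left|\theta_k-\theta_k^{\prime}\right|$, and the remaining sum is precisely $\Vert\bm\theta-\bm\theta^{\prime}\Vert_1$, completing the bound.

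The proof is essentially routine; the only points requiring care are that the coordinate-wise path stays inside the convex box $\mathcal{D}$, so that each $\bm\xi^{(k)}\in\mathcal{D}$ and the suprema are legitimately taken over $\mathcal{D}$, and that $f$ is differentiable so the Mean Value Theorem applies. Neither is a genuine obstacle: convexity of $[0,2\pi]^{2p}$ settles the first, and the explicit analytic form of $f$ settles the second. The main conceptual step is simply recognizing that the $\ell_1$ norm emerges naturally from summing the coordinate-wise increments against a single uniform bound on all partial derivatives, which is exactly what makes this lemma combine cleanly with Lemma~\ref{lemma:3} to yield Lemma~\ref{lemma:Lipschitz Continuity 1}.
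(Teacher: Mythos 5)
Your proposal is correct and follows essentially the same route as the paper's proof: a coordinate-wise telescoping decomposition, Lagrange's Mean Value Theorem applied to each single-variable restriction, the triangle inequality, and a uniform bound on the partial derivatives yielding the $\ell_1$ norm. The only cosmetic difference is the direction of the telescoping path (from $\bm\theta^{\prime}$ to $\bm\theta$ rather than the reverse), which is immaterial.
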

\begin{proof} 
Let $\bm\theta$ be represented as $[\theta _{1}, \cdots, \theta_{2p}]^{\mathsf{T}}$. For any $\bm\theta,\bm\theta^{\prime}\in\mathcal{D}$, we have 
\begin{equation}
  \begin{split}   
  f(\bm\theta)-f(\bm\theta^{\prime})=&f(\theta_1,\cdots,\theta_{2p})-f(\theta_1^{\prime},\theta_2,\cdots,\theta_{2p})+\cdots+\\
  &f(\theta_1^{\prime},\cdots,\theta_{j-1}^{\prime},\theta_j,\cdots,\theta_{2p})-f(\theta_1^{\prime},\cdots, \theta_j^{\prime},\theta_{j+1},\cdots,\theta_{2p})+\cdots+\\
  &f(\theta_1^{\prime},\cdots,\theta_{2p-1}^{\prime},\theta_{2p})-f(\theta_1^{\prime},\cdots,\theta_{2p}^{\prime}).  
  \end{split}
\end{equation}
By Triangle Inequality, for any $\bm\theta,\bm\theta^{\prime}\in\mathcal{D}$, we have 
\begin{equation}
\begin{split}
\left|f(\bm\theta)-f(\bm\theta^{\prime})\right|\leq&\left|f(\theta_1,\cdots,\theta_{2p})-f(\theta_1^{\prime},\theta_2,\cdots,\theta_{2p})\right|+\cdots+\\
&\left|f(\theta_1^{\prime},\cdots,\theta_{j-1}^{\prime},\theta_j,\cdots,\theta_{2p})-f(\theta_1^{\prime},\cdots, \theta_j^{\prime},\theta_{j+1},\cdots,\theta_{2p})\right|+\cdots+\\
&\left|f(\theta_1^{\prime},\cdots,\theta_{2p-1}^{\prime},\theta_{2p})-f(\theta_1^{\prime},\cdots,\theta_{2p}^{\prime})\right|.    
\end{split}
\end{equation}
For any $j\in[2p]$, the partial derivative with respect to the problem-oriented Hamiltonian $H_1$
\begin{equation}
\partial_jf(\bm\theta)=i\langle\varphi_0\lvert U_-^{\dagger}[H_1,U^{\dagger}_+H_1U_+]U_-\rvert\varphi_0\rangle
\end{equation}
and the partial derivative with respect to the mixing Hamiltonian $H_2$
\begin{equation}
\partial_jf(\bm\theta)=i\langle\varphi_0\lvert U_-^{\dagger}[H_2,U^{\dagger}_+H_1U_+]U_-\rvert\varphi_0\rangle
\end{equation}
exist and are continuous on $\mathcal{D}=[0,2\pi]^{2p}$, where $U_{-}$ is the left slice circuit and $U_{+}$ is the right slice circuit of the variational parameter $\theta_j$ in the noiseless QAOA circuit $U(\bm\theta)$, and \(|\varphi_0\rangle\) is the initial state. Fix $[\theta_1^{\prime},\cdots,\theta_{j-1}^{\prime},\theta_{j+1},\cdots,\theta_{2p}]^{\mathsf{T}}\in[0,2\pi]^{2p-1}$, $f(\bm\theta)$ can be seen as an uni-variable function in $\theta_j$. By Lagrange’s Mean Value Theorem~\cite{sohrab2003basic}, for any $\theta_j,\theta_j^{\prime}\in[0,2\pi]$ and for any $[\theta_1^{\prime},\cdots,\theta_{j-1}^{\prime},\theta_{j+1},\cdots,\theta_{2p}]^{\mathsf{T}}\in[0,2\pi]^{2p-1}$ denoted as $\hat{\bm\theta}\in\mathcal{\hat{D}}$, we have
\begin{equation}
\left|f(\theta_1^{\prime},\cdots,\theta_{j-1}^{\prime},\theta_j,\cdots,\theta_{2p})-f(\theta_1^{\prime},\cdots,\theta_{j}^{\prime},\theta_{j+1},\cdots,\theta_{2p})\right|\leq L_{j,\hat{\bm\theta}}\left|\theta_j-\theta_j^{\prime}\right|,
\end{equation}
where $L_{j,\hat{\bm\theta}}=\sup_{\theta_j\in[0,2\pi]}\left|\partial_jf(\bm\theta)\right|$. In light of this, for any $\theta_j,\theta_j^{\prime}\in[0,2\pi]$ and for any $\hat{\bm\theta}\in\hat{\mathcal{D}}$, we have
\begin{equation}
\left|f(\theta_1^{\prime},\cdots,\theta_{j-1}^{\prime},\theta_j,\cdots,\theta_{2p})-f(\theta_1^{\prime},\cdots,\theta_{j}^{\prime},\theta_{j+1},\cdots,\theta_{2p})\right|\leq L_j\left|\theta_j-\theta_j^{\prime}\right|,
\end{equation}
where $L_j=\sup_{\hat{\bm\theta}\in \hat{\mathcal{D}}}L_{j,\hat{\bm\theta}}$. Therefore, for any $\bm\theta,\bm\theta^{\prime}\in\mathcal{D}$, we have
\begin{align}
\left|f(\bm\theta)-f(\bm\theta^{\prime})\right|&\leq L_1\left|\theta_1-\theta_1^{\prime}\right|+\cdots+L_{2p}\left|\theta_{2p}-\theta_{2p}^{\prime}\right|\\
&\leq\left(\max_{j\in[2p]}L_j\right)\sum_{j=1}^{2p}\left|\theta_j-\theta_j^{\prime}\right|\\
&=\max_{j\in[2p]}L_j\Vert\bm\theta-\bm\theta^{\prime}\Vert_1\\
&=\max_{j\in[2p]}\left(\sup_{\bm\theta\in\mathcal{D}}\left|\partial_jf(\bm\theta)\right|\right)\Vert\bm\theta-\bm\theta^{\prime}\Vert_1.
\end{align} 
\end{proof}
Given Lemma~\ref{lemma:3} and Lemma~\ref{lemma:4}, we come to Lemma~\ref{lemma:Lipschitz Continuity 1} straightforwardly.
\begin{proof}[Proof of Lemma~\ref{lemma:Lipschitz Continuity 1}]
By Lemma~\ref{lemma:3}, we pick $\delta\in(0,1)$ and have
\begin{equation}
\Pr\left\{\max_{j\in[2p]}\left(\sup_{\bm\theta\in
\mathcal{D}}\left|\partial_jf(\bm\theta)\right|\right)\leq \sqrt{\mathbb{V}_{\bm\theta}[\partial_a f(\bm\theta)]/\delta}\right\}\geq1-\delta,
\end{equation}
where $\mathbb{V}_{\bm\theta}[\partial_a f(\bm\theta)]$ is the variance of the partial derivative $\partial_af(\bm\theta)$ with index $a=\arg \max_{j\in[2p]}(\sup_{\bm\theta\in
\mathcal{D}}\left|\partial_jf(\bm\theta)\right|)$.
Substituting this into Lemma~\ref{lemma:4}, the statement holds.
\end{proof}

\section{Proof of Theorem~\ref{theorem:Theorem 1}}\label{appendix:Theorem1}

\begin{theorem}[Formal]
Given a constant threshold $\epsilon$, a failure probability $\delta\in(0,1)$ and 
an $n$-qubit noiseless QAOA objective function $f(\bm\theta): \mathcal{D}=[0,2\pi]^{2p}\mapsto\mathbb{R}$~(Eq.~\ref{eq:Objective Function 1}) induced by the circuit $U(\bm\theta)$~(Eq.~\ref{eq:noiseless QAOA circuit}) that satisfies Assumption~\ref{assump:1-design}, run Algorithm~\ref{alg:BO for QAOA} for $T={\rm poly}(n^{1/\epsilon^2})$ steps, where the scaling parameter $\eta_t$ for the acquisition function $\mathrm{UCB}_t(\bm\theta)$~(Eq.\ref{eq:UCB}) used in each step $t$ is predefined as
\begin{equation}
\label{eq:eta 1}
\eta_t=2\log(2\pi^2t^2/3\delta)+4p\log(8\pi pt^2\sqrt{\mathbb{V}_{\bm\theta}[\partial_a f(\bm\theta)]/\delta}). 
\end{equation}
If the parameter dimension
\begin{equation}
p\leq\tilde{\mathcal{O}}\left(\sqrt{\log n}\right),
\end{equation}
then the optimization error $r_T$~(Eq.\ref{eq:ERROR 1}) satisfies $r_T\leq \epsilon$ with a success probability of at least $1-\delta$. Here, $\mathbb{V}_{\bm\theta}[\partial_a f(\bm\theta)]$ is the variance of the partial derivative $\partial_a f(\bm\theta)$ with index $a=\arg \max_{j\in[2p]}(\sup_{\bm\theta\in
\mathcal{D}}\left|\partial_j f(\bm\theta)\right|)$.
\end{theorem}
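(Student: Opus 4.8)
The plan is to run the standard GP-UCB Bayesian-optimization analysis (following \cite{srinivas2012information,vakili2021information}), supplying the two QAOA-specific ingredients it needs: the Lipschitz bound of Lemma~\ref{lemma:Lipschitz Continuity 1} and a bound on the maximum information gain of the Matern kernel. First I would reduce the optimization error to the average cumulative regret. Since $\bm\theta_T^+=\arg\max_{\bm\theta\in\mathcal{A}_T}f(\bm\theta)$, we have $f(\bm\theta_T^+)\geq f(\bm\theta_t)$ for every $t\in[T]$, so
\begin{equation}
r_T=f(\bm\theta^*)-f(\bm\theta_T^+)\leq\frac{1}{T}\sum_{t=1}^T\left(f(\bm\theta^*)-f(\bm\theta_t)\right)=\frac{R_T}{T},
\end{equation}
where $R_T$ is the cumulative regret. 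It then suffices to force $R_T/T\leq\epsilon$.

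Next I would install the confidence-interval machinery. Modelling $f$ as a Gaussian process sample with the Matern prior (the Remark after Lemma~\ref{lemma:Lipschitz Continuity 1}), the posterior of Eq.~\ref{eq:Posterior} yields, with probability $\geq1-\delta$, the bound $|f(\bm\theta)-\mu_{t-1}(\bm\theta)|\leq\sqrt{\eta_t}\,\sigma_{t-1}(\bm\theta)$ both along the sequence of selected points and on a finite discretization of $\mathcal{D}$. Because $\mathcal{D}=[0,2\pi]^{2p}$ is continuous, at each step I would lay down a uniform grid with $\tau_t$ points per coordinate and invoke Lemma~\ref{lemma:Lipschitz Continuity 1}, with constant $L=\sqrt{\mathbb{V}_{\bm\theta}[\partial_a f(\bm\theta)]/\delta}$, to control the error of replacing $\bm\theta^*$ by its nearest grid point. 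Choosing $\tau_t$ polynomially in $Lt$ makes this discretization error bounded by $1/t^2$, hence summable to a constant, while the union bound over the $\tau_t^{2p}$ grid points is precisely what produces the $4p\log\!\big(8\pi p\,t^2\sqrt{\mathbb{V}_{\bm\theta}[\partial_a f(\bm\theta)]/\delta}\big)$ contribution in the prescribed $\eta_t$ of Eq.~\ref{eq:eta 1}, the first term $2\log(2\pi^2 t^2/3\delta)$ coming from the union bound over the selected points. Combining the two confidence bounds with the UCB rule $\bm\theta_t=\arg\max_{\bm\theta\in\mathcal{D}}\mathrm{UCB}_t(\bm\theta)$ gives the per-step regret bound $f(\bm\theta^*)-f(\bm\theta_t)\leq 2\sqrt{\eta_t}\,\sigma_{t-1}(\bm\theta_t)+1/t^2$.

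I would then sum the per-step bounds and convert to information gain. By Cauchy--Schwarz together with the standard identity relating $\sum_{t=1}^T\sigma_{t-1}^2(\bm\theta_t)$ to the maximum information gain $\gamma_T$ (Eq.~\ref{eq:gain}), one obtains $R_T\leq\sqrt{C_1\,T\,\eta_T\,\gamma_T}+\mathcal{O}(1)$ for a universal constant $C_1$. For the Matern kernel in dimension $D=2p$ the information gain obeys $\gamma_T=\tilde{\mathcal{O}}\!\big(T^{D(D+1)/(2\nu+D(D+1))}\big)$, so $\gamma_T/T=\tilde{\mathcal{O}}\!\big(T^{-2\nu/(2\nu+D(D+1))}\big)$. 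Substituting $T=\mathrm{poly}(n^{1/\epsilon^2})$, $\eta_T=\tilde{\mathcal{O}}(p)$, and $D=2p$ into $R_T/T\leq\sqrt{C_1\eta_T\gamma_T/T}$, the requirement $R_T/T\leq\epsilon$ collapses to an inequality of the form $p^2\log p\lesssim\log n$, which holds exactly when $p\leq\tilde{\mathcal{O}}(\sqrt{\log n})$.

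The hard part will be the continuous-domain bookkeeping in the middle two steps: tracking how the dimension $D=2p$ enters simultaneously through the grid union bound (linearly, inside $\eta_t$) and through the Matern information-gain exponent (quadratically, via the $D(D+1)$ factor), and then verifying that it is this $D(D+1)$ dependence that forces the $\sqrt{\log n}$ ceiling rather than a $\log n$ one. I would also need to confirm that $\mathbb{V}_{\bm\theta}[\partial_a f(\bm\theta)]$ remains at most polynomially large under Assumption~\ref{assump:1-design}, so that its appearance inside the logarithm of $\eta_t$ contributes only lower-order terms and does not disturb the final scaling.
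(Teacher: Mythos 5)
Your proposal follows the paper's proof essentially step for step: the reduction $r_T\leq\frac{1}{T}\sum_{t=1}^T(f(\bm\theta^*)-f(\bm\theta_t))$, the per-step discretization of $\mathcal{D}$ controlled by Lemma~\ref{lemma:Lipschitz Continuity 1} with error $1/t^2$, the two union bounds that produce exactly the two terms of $\eta_t$ in Eq.~\ref{eq:eta 1}, the per-step bound $f(\bm\theta^*)-f(\bm\theta_t)\leq 2\sqrt{\eta_t}\sigma_{t-1}(\bm\theta_t)+1/t^2$, and the Cauchy--Schwarz conversion to information gain are all the content of the paper's Lemmas~\ref{lemma:5}--\ref{lemma:11}. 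The one genuine divergence is the information-gain bound for the Matern kernel: you invoke the older Srinivas-style bound $\gamma_T=\tilde{\mathcal{O}}\bigl(T^{D(D+1)/(2\nu+D(D+1))}\bigr)$ with $D=2p$, whereas the paper uses the improved bound of Ref.~\cite{vakili2021information}, $\mathcal{O}\bigl(T^{\frac{p}{\nu+p}}\log^{\frac{\nu}{\nu+p}}T\bigr)$. This changes where the $\sqrt{\log n}$ ceiling comes from: in your route it is genuinely forced by the quadratic $D(D+1)\sim p^2$ in the exponent (giving $p^2\log p\lesssim\log n$), while in the paper's route the exponent is only linear in $p$ and the natural constraint is $p\log p\lesssim\log n$; the paper's stated $\tilde{\mathcal{O}}(\sqrt{\log n})$ is then a (valid but conservative) sufficient condition obtained by the quadratic-formula manipulation with the inequality $\log x\geq 1-1/x$. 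Both routes legitimately establish the theorem as stated; the paper's ingredient is tighter, your algebraic endgame is tighter. Two small bookkeeping points you should close out: the paper needs $\mathbb{V}_{\bm\theta}[\partial_a f(\bm\theta)]$ to be \emph{at least} $1/\mathrm{poly}(n)$ (cited from Ref.~\cite{park2023hamiltonian}) so that the discretization degree $\tau_t\geq 1$, not only at most polynomially large as you note; and the reduction to average regret requires $\bm\theta_T^+=\arg\max_{\bm\theta\in\mathcal{A}_T}f(\bm\theta)$ rather than the $\arg\max$ over the noisy estimates $y(\bm\theta)$, which the paper handles by an explicit footnoted assumption.
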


\subsection{Outline of the Proof Procedure}
Our objective is to determine the effective parameter dimension $p$ of the noiseless QAOA circuit $U(\bm\theta)$ such that the optimization error $r_T=f(\bm\theta ^*)-f(\bm\theta^+_T)$ after $T={\rm poly}(n)$ steps of executing BO can be upper bounded by a constant threshold $\epsilon$. Here, $\bm\theta^{*}$ represents the global maximum point and $\bm\theta_T^+$ denotes the approximation of the maximum point in the previous $T$ steps. We investigate this question through the lens of the Bayesian approach, which considers the corresponding noiseless QAOA objective function $f(\bm\theta)$ as a sample drawn from a Gaussian process with the Matern covariance function $k_{\mathrm{Matern}-\nu}(\bm\theta,\bm\theta^{\prime})$~(Eq.~\ref{eq:Matern}). We first establish that $r_T$ is upper bounded by $\frac{1}{T}\sum_{t=1}^T\left(f(\bm\theta ^*)-f(\bm\theta_t)\right)$, where $\bm\theta_t$ represents the next point selected in each step $t$. It is evident that the condition $\frac{1}{T}\sum_{t=1}^T\left(f(\bm\theta ^*)-f(\bm\theta_t)\right)\leq \epsilon$ is sufficient to deduce the result $r_T\leq\epsilon$. Hence, by ensuring that the upper bound on $\frac{1}{T}\sum_{t=1}^T\left(f(\bm\theta ^*)-f(\bm\theta_t)\right)$ is no greater than $\epsilon$, we can determine the effective $p$ that guarantees $r_T\leq \epsilon$. Subsequently, we utilize the continuity property of the noiseless QAOA objective function $f(\bm\theta)$~(Lemma~\ref{lemma:Lipschitz Continuity 1}) to establish an upper bound on $\frac{1}{T}\sum_{t=1}^T\left(f(\bm\theta ^*)-f(\bm\theta_t)\right)$. 

The complete proof of Theorem~\ref{theorem:Theorem 1} is supported by a series of lemmas~(Lemma~\ref{lemma:5}-Lemma~\ref{lemma:11}). We will introduce how these lemmas are employed in our proof. For convenience, we initially present explanations of several notions that commonly occur in the following sections. Specifically, $\mathbb{V}_{\bm\theta}[\partial_a f(\bm\theta)]$ denotes the variance of the partial derivative $\partial_af(\bm\theta)$ with index $a=\arg \max_{j\in[2p]}(\sup_{\bm\theta\in\mathcal{D}}\left|\partial_jf(\bm\theta)\right|)$. Additionally, $\mu_{t-1}(\bm\theta)$ represents the posterior mean function of $f(\bm\theta)$ and $\sigma_{t-1}(\bm\theta)$ denotes the posterior standard deviation of $f(\bm\theta)$ based on the accumulated observations $\mathcal{S}_{t-1}$ from the previous $t-1$ steps.

To facilitate the analysis in the continuous domain $\mathcal{D}=[0,2\pi]^{2p}$, we discretize $\mathcal{D}$ into a finite grid $\mathcal{D}_t$ in each step $t$, as it has been employed in Ref~\cite{srinivas2012information}. Specifically, the size of $\mathcal{D}_t$ is determined by the degree of discretization $\tau_t$, such that $\left|\mathcal{D}_t\right|=(\tau_t)^{2p}$. In the subsequent discussion, we use $[\bm\theta^*]_t$ to denote the closest point in $\mathcal{D}_t$ to $\bm\theta^*$. Next, we will evaluate upper bounds on $f(\bm\theta^*)-f([\bm\theta^*]_t)$ (the first term) and $f([\bm\theta^*]_t)$ (the second term) to obtain an upper bound on $f(\bm\theta^*)$. Regarding the first term, according to Lemma~\ref{lemma:5}, if $\tau_t=8\pi pt^2\sqrt{\mathbb{V}[\partial_a f(\bm\theta)]/\delta}$, then $f(\bm\theta^*)-f([\bm\theta^*]_t)$ can be upper bounded by $1/t^2$ with a success probability of at least $1-\delta/4$. Considering that $\bm\theta_t$ is selected by maximizing the acquisition function $\mathrm{UCB}_t(\bm\theta)$ over $\mathcal{D}$, according to Lemma~\ref{lemma:6}, $\mathrm{UCB}_t(\bm\theta_t)=\mu_{t-1}(\bm\theta_t)+\sqrt{\eta_t}\sigma_{t-1}(\bm\theta_t)$ can be used to upper bound $f([\bm\theta^*]_t)$ with a success probability of at least $1-\delta/4$. Here, a predefined scaling parameter $\eta_t=2\log\left(2\pi^2 t^2\left|\mathcal{D}_t\right|/3\delta\right)$ is used. Taking the two upper bounds mentioned above into account, Lemma~\ref{lemma:7} demonstrates that $$f(\bm\theta^*)=\left(f(\bm\theta^*)-f([\bm\theta^*]_t)\right)+f([\bm\theta^*]_t)\leq1/t^2+\mu_{t-1}(\bm\theta_t)+\sqrt{\eta_t}\sigma_{t-1}(\bm\theta_t)$$ with a success probability of at least $1-\delta/2$. Furthermore, we establish that $f(\bm\theta_t)$ is lower bounded by $\mu_{t-1}(\bm\theta_t)-\sqrt{\eta_t^{\prime}}\sigma_{t-1}(\bm\theta_t)$ with a success probability of at least $1-\delta/2$ using Lemma~\ref{lemma:8}, where
$\eta_t^{\prime}=2\log(\pi^2 t^2/3\delta)$. Since $\eta_t\geq\eta_t^{\prime}$, we can also use $\mu_{t-1}(\bm\theta_t)-\sqrt{\eta_t}\sigma_{t-1}(\bm\theta_t)$ as a lower bound for $f(\bm\theta_t)$. Afterward, Lemma~\ref{lemma:9} establishes that $$f(\bm\theta^*)-f(\bm\theta_t)\leq 1/t^2+2\sqrt{\eta_t}\sigma_{t-1}(\bm\theta_t)$$ with a success probability of at least $1-\delta$. Then, Lemma~\ref{lemma:10} establishes a connection between the sum of posterior variances $\sum_{t=1}^T \sigma_{t-1}^2(\bm\theta_t)$ and the information gain $g_T$~(Eq.~\ref{eq:gain}). As $f(\bm\theta)$ is considered as a sample drawn from a Gaussian process with $k_{\mathrm{Matern}-\nu}(\bm\theta,\bm\theta^{\prime})$, we can bound $\sum_{t=1}^T \sigma_{t-1}^2(\bm\theta_t)$ by the upper bound $\mathcal{O}(T^{\frac{p}{v+p}}\log^{\frac{v}{v+p}}(T))$ on the maximal $g_T$ for $k_{\mathrm{Matern}-\nu}(\bm\theta,\bm\theta^{\prime})$ in Ref~\cite{vakili2021information}. By applying Cauchy-Schwarz Inequality~\cite{bityutskov2001bunyakovskii} and considering the non-decreasing property of $\eta_t$ as $t$ increases, we can substitute the form of $\eta_T$ to obtain the result stated in Lemma~\ref{lemma:11}
$$r_T\leq \mathcal{O}\left(\sqrt{p\log \left(p T^2(\mathbb{V}_{\bm\theta}[\partial_a f(\bm\theta)])^{1/2}\right) (\log T/T)^{\frac{\nu}{\nu+p}}}\right)$$ with a success probability of at least $1-\delta$. Finally, we obtain the effective $p$ by solving for this upper bound is no greater than a constant threshold $\epsilon$ with $T={\rm poly}(n^{1/\epsilon^2})$.

\subsection{Proof Details}
In this section, we provide a comprehensive introduction to the corresponding lemmas. 

\begin{lemma}\label{lemma:5}
Assuming that Assumption~\ref{assump:1-design} holds, let $f(\bm\theta): \mathcal{D}=[0,2\pi]^{2p}\mapsto\mathbb{R}$ be the $n$-qubit noiseless QAOA objective function. Given a failure probability $\delta\in(0,1)$ and a finite grid
$\mathcal{D}_t$ of size $\left|\mathcal{D}_t\right|=(\tau_t)^{2p}$ with the degree of discretization $\tau_t=4\pi pt^2\sqrt{\mathbb{V}[\partial_a f(\bm\theta)]/\delta}$ in each step $t$, run
Algorithm~\ref{alg:BO for QAOA} for $T={\rm poly}(n)$ steps. The following relationship 
\begin{equation}
\forall t\in[T],\forall\bm\theta\in\mathcal{D},~\left|f(\bm\theta)-f([\bm\theta]_t)\right|\leq 1/t^2 
\end{equation}
holds with a success probability of at least $1-\delta$, where $[\bm\theta]_t$ represents the closest point in $\mathcal{D}_t$ to $\bm\theta$.
\end{lemma}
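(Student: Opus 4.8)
The plan is to prove this as a standard covering (discretization) argument, deriving the pointwise bound $|f(\bm\theta)-f([\bm\theta]_t)|\le 1/t^2$ directly from the quantum Lipschitz continuity already established in Lemma~\ref{lemma:Lipschitz Continuity 1}. The crucial observation is that all of the probabilistic content lives in a \emph{single} event: by Lemma~\ref{lemma:Lipschitz Continuity 1} (equivalently Lemma~\ref{lemma:3}), with success probability at least $1-\delta$ the inequality
\begin{equation}
|f(\bm\theta)-f(\bm\theta^{\prime})|\le L\,\Vert\bm\theta-\bm\theta^{\prime}\Vert_1,\qquad L:=\sqrt{\mathbb{V}_{\bm\theta}[\partial_a f(\bm\theta)]/\delta},
\end{equation}
holds \emph{simultaneously for every} $\bm\theta,\bm\theta^{\prime}\in\mathcal{D}$. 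First I would condition on this event. Because the bound is uniform over all pairs of points, it applies in particular to each pair $(\bm\theta,[\bm\theta]_t)$ for every step $t\in[T]$ and every $\bm\theta\in\mathcal{D}$ at once, so no further union bound over $t$ (or over the uncountable domain) is required.

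Next I would control the distance from an arbitrary $\bm\theta$ to its nearest grid point. Since $\mathcal{D}_t$ is a uniform grid with $\tau_t$ points in each of the $2p$ coordinate directions of $[0,2\pi]^{2p}$, neighbouring grid points are separated by a spacing of order $2\pi/\tau_t$, so the nearest grid point $[\bm\theta]_t$ lies within $\pi/\tau_t$ of $\bm\theta$ in each coordinate. Summing over the $2p$ coordinates gives
\begin{equation}
\Vert\bm\theta-[\bm\theta]_t\Vert_1\le 2p\cdot\frac{\pi}{\tau_t}=\frac{2\pi p}{\tau_t}.
\end{equation}
Substituting the prescribed degree of discretization $\tau_t=4\pi p t^2 L$ into the Lipschitz bound then yields, on the conditioning event,
\begin{equation}
|f(\bm\theta)-f([\bm\theta]_t)|\le L\cdot\frac{2\pi p}{\tau_t}=\frac{L\cdot 2\pi p}{4\pi p t^2 L}=\frac{1}{2t^2}\le\frac{1}{t^2},
\end{equation}
which is exactly the claimed inequality. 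The factor of $4$ in the definition of $\tau_t$ is calibrated precisely so that, regardless of the exact grid-spacing convention (cell-centred versus endpoint-inclusive), the discretization error is driven below $1/t^2$.

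I expect the only genuinely delicate point to be the correct bookkeeping of the failure probability and the quantifiers, rather than any hard estimate. One must resist the temptation to split $\delta$ across the $T$ steps: the Lipschitz event of Lemma~\ref{lemma:Lipschitz Continuity 1} is a single global statement holding uniformly over the whole domain $\mathcal{D}$, so once it holds the discretization bound becomes deterministic for all $t$ and all $\bm\theta$ simultaneously, preserving the overall success probability $1-\delta$. A secondary care point is that the Lipschitz constant $L$ itself carries the $\delta$-dependence (through the $1/\sqrt{\delta}$ factor) and also appears inside $\tau_t$; I would keep $L$ as an explicit symbol throughout so that it cancels cleanly in the final substitution and the stated choice of $\tau_t$ is seen to be exactly the value needed to make the per-step discretization error summable in the subsequent analysis.
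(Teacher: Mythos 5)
Your proposal is correct and follows essentially the same route as the paper's own proof: condition once on the global Lipschitz event from Lemma~\ref{lemma:Lipschitz Continuity 1}, bound $\Vert\bm\theta-[\bm\theta]_t\Vert_1$ by the grid spacing, and substitute the prescribed $\tau_t$ (the paper uses the slightly looser bound $\Vert\bm\theta-[\bm\theta]_t\Vert_1\le 4\pi p/\tau_t$, which makes the final bound exactly $1/t^2$ rather than your $1/(2t^2)$, but this is immaterial). Your observation that no union bound over $t$ or over the domain is needed, because the Lipschitz event is a single uniform statement, matches the paper's treatment exactly.
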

\begin{proof}
By choosing a finite grid $\mathcal{D}_t$ of size $(\tau_t)^{2p}$ in each step $t$, for any $\bm\theta\in\mathcal{D}$ we have $\left\Vert\bm\theta-[\bm\theta]_t\right\Vert_1\leq4\pi p /\tau_t$. Given Lemma~\ref{lemma:Lipschitz Continuity 1},  we have
\begin{equation}
\Pr\left\{\forall t\in[T],\forall\bm\theta\in\mathcal{D},~\left|f(\bm\theta)-f([\bm\theta]_t)\right|\leq 4\pi p \sqrt{\mathbb{V}[\partial_a f(\bm\theta)]/\delta}/\tau_t\right\}\geq1-\delta,
\end{equation}
where the failure probability $\delta\in(0,1)$. Since $\tau_t=4\pi pt^2\sqrt{\mathbb{V}[\partial_a f(\bm\theta)]/\delta}$, then 
\begin{equation}
\Pr\left\{\forall t\in[T],\forall\bm\theta\in\mathcal{D},~\left|f(\bm\theta)-f([\bm\theta]_t)\right|\leq 1/t^2\right\}\geq1-\delta.
\end{equation} 
Furthermore, we consider $\mathbb{V}[\partial_a f(\bm\theta)]$ to be ${1/\rm poly}(n)$, as shown in Ref~\cite{park2023hamiltonian}. Additionally, we assume that parameter dimension $p$ is at most ${\rm poly}(n)$. In order to guarantee the degree of discretization $\tau_t$ of at least 1, we enforce a constraint that the number of steps $T={\rm poly}(n)$. This constraint is consistent with the scenario we are exploring.
\end{proof}

\begin{lemma}\label{lemma:6}
Given a failure probability $\delta\in(0,1)$, an $n$-qubit noiseless QAOA objective function $f(\bm\theta): \mathcal{D}=[0,2\pi]^{2p}\mapsto\mathbb{R}$ and a finite grid $\mathcal{D}_t\subset\mathcal{D}$ of size $\left|\mathcal{D}_t\right|$ in each step $t$, run
Algorithm~\ref{alg:BO for QAOA} for $T={\rm poly}(n)$ steps, where a scaling parameter $\eta_t$ for the acquisition function $\mathrm{UCB}_t(\bm\theta)$ used in each step $t$ is predefined as $\eta_t=2\log(\pi^2 t^2\left|\mathcal{D}_t\right|/6\delta)$. The following relationship 
\begin{equation}
 \forall t\in[T],\forall\bm\theta\in\mathcal{D}_t,~f(\bm\theta)\in\mathcal{C}_t(\bm\theta)   
\end{equation}
holds with a success probability of at least $1-\delta$, where $\mathcal{C}_t(\bm\theta)$ represents a confidence interval
$[\mu_{t-1}(\bm\theta)-\sqrt{\eta_t}\sigma_{t-1}(\bm\theta),~ \mu_{t-1}(\bm\theta)+\sqrt{\eta_t}\sigma_{t-1}(\bm\theta)]$.
\end{lemma}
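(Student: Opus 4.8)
The plan is to leverage the Bayesian modeling assumption directly. Since $f(\bm\theta)$ is treated as a sample from a Gaussian process, conditioned on the accumulated observations $\mathcal{S}_{t-1}$ the posterior for $f$ at any fixed point $\bm\theta$ is the univariate Gaussian $N(\mu_{t-1}(\bm\theta),\sigma_{t-1}^2(\bm\theta))$ determined by Eq.~\ref{eq:Posterior}. This reduces the claim to a Gaussian tail estimate at a single point, followed by two union bounds --- one over the finite grid $\mathcal{D}_t$ and one over the steps $t\in[T]$ --- which is precisely why the continuous domain is discretized before this step.

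First I would record the standard one-sided tail bound $\Pr[r>c]\leq\frac{1}{2}\exp(-c^2/2)$ for $r\sim N(0,1)$ and any $c>0$. Applying it to the standardized variable $(f(\bm\theta)-\mu_{t-1}(\bm\theta))/\sigma_{t-1}(\bm\theta)$ with $c=\sqrt{\eta_t}$ and using symmetry gives, for each fixed $\bm\theta$ and $t$, the two-sided bound $\Pr[|f(\bm\theta)-\mu_{t-1}(\bm\theta)|>\sqrt{\eta_t}\,\sigma_{t-1}(\bm\theta)]\leq\exp(-\eta_t/2)$; equivalently, the probability that $f(\bm\theta)\notin\mathcal{C}_t(\bm\theta)$ is at most $\exp(-\eta_t/2)$.

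Next I would union-bound over the $|\mathcal{D}_t|$ grid points, so that $f$ escapes its interval somewhere on $\mathcal{D}_t$ with probability at most $|\mathcal{D}_t|\exp(-\eta_t/2)$. Substituting the prescribed $\eta_t=2\log(\pi^2 t^2|\mathcal{D}_t|/6\delta)$ yields $\exp(-\eta_t/2)=6\delta/(\pi^2 t^2|\mathcal{D}_t|)$, so the per-step failure collapses to $6\delta/(\pi^2 t^2)$, independent of the grid size --- this cancellation is the whole point of the particular logarithmic form chosen for $\eta_t$. A final union bound over $t\in[T]$ then gives total failure at most $\sum_{t=1}^{T}6\delta/(\pi^2 t^2)\leq\sum_{t=1}^{\infty}6\delta/(\pi^2 t^2)=\delta$, using the Basel identity $\sum_{t\geq1}t^{-2}=\pi^2/6$. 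Taking complements delivers the simultaneous containment with probability at least $1-\delta$.

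Since the individual steps are routine Gaussian estimates, the only genuinely delicate points are structural rather than computational: one must ensure the posterior is \emph{exactly} Gaussian at each point (which rests squarely on the Gaussian-process interpretation of $f(\bm\theta)$), and the union bound must be taken over the finite grid $\mathcal{D}_t$ rather than the continuum $\mathcal{D}$, since a union bound over uncountably many points would be vacuous. Controlling the discrepancy between $\mathcal{D}$ and $\mathcal{D}_t$ is exactly what Lemma~\ref{lemma:5} handles, so the two lemmas are designed to be combined in the downstream argument.
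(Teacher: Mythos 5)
Your proposal is correct and follows essentially the same route as the paper: pointwise Gaussianity of the posterior, the one-sided tail bound $\Pr\{b>w\}\leq\frac{1}{2}\exp(-w^2/2)$ symmetrized to a two-sided bound of $\exp(-\eta_t/2)$, a union bound over the grid $\mathcal{D}_t$, and a union bound over $t$ using $\sum_{t\geq1}6/(\pi^2 t^2)=1$. Your write-up is in fact slightly more explicit than the paper's about the symmetrization step and about why the logarithmic form of $\eta_t$ makes the grid size cancel.
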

\begin{proof}
Fix $t\in[T]$ and $\bm\theta\in\mathcal{D}_t$. Conditioned on accumulated observations $\mathcal{S}_{t-1}$ from the previous $t-1$ steps, the posterior distribution $f(\bm\theta)\sim N(\mu_{t-1}(\bm\theta),\sigma_{t-1}^2(\bm\theta))$.
Now, if $b\sim N(0,1)$, then
\begin{align}
\Pr\{b>w\}&=\exp(-w^2/2)(2\pi)^{-1/2}\exp\left(-(b-w)^2/2-w(b-w)\right)\\
&\leq \exp(-w^2/2)\Pr\{b>0\}\\
&=\frac{1}{2}\exp(-w^2/2)    
\end{align}
for $w>0$, since $\exp(-w(b-w))\leq 1$ for $b\geq w$. Using $b=(f(\bm\theta)-\mu_{t-1}(\bm\theta))/\sigma_{t-1}(\bm\theta)$ and $w=\sqrt{\eta_t}$, we have
\begin{equation}
\Pr \{f(\bm\theta)\notin\mathcal{C}_t(\bm\theta)\}\leq \exp(-\eta_t/2).    
\end{equation}   
Applying the union bound for $\bm\theta\in\mathcal{D}_t$, we have
\begin{equation}
\Pr \{ \forall \bm\theta\in\mathcal{
D}_t,~f(\bm\theta)\in\mathcal{C}_t(\bm\theta)\}\geq 1-\left|\mathcal{D}_t\right|\exp(-\eta_t/2).
\end{equation}
Given that $\left|\mathcal{D}_t\right|\exp(-\eta_t/2)=\delta/q_t$, where $\sum_{t\geq1}(1/q_t)=1$, $q_t>0$, by applying the union bound for $t\in\mathbb{N}$, the statement holds. For example, we can use $q_t=\pi^2t^2/6$.
\end{proof}

\begin{lemma}\label{lemma:7}
Assuming that Assumption~\ref{assump:1-design} holds, let $f(\bm\theta): \mathcal{D}=[0,2\pi]^{2p}\mapsto\mathbb{R}$ be the $n$-qubit noiseless QAOA objective function. Given a failure probability $\delta\in(0,1)$, run Algorithm~\ref{alg:BO for QAOA} for $T={\rm poly}(n)$ steps, where a scaling parameter $\eta_t$ for the acquisition function $\mathrm{UCB}_t(\bm\theta)$ used in each step $t$ is predefined as $\eta_t=2\log(\pi^2t^2/3\delta)+4p\log(4\pi pt^2\sqrt{2\mathbb{V}_{\bm\theta}[\partial_a f(\bm\theta)]/\delta})$. The following relationship
\begin{equation}
\forall t\in[T],~f(\bm\theta^*)\leq\mu_{t-1}(\bm\theta_t)+\sqrt{\eta_t}\sigma_{t-1}(\bm\theta_t)+1/t^2
\end{equation}
holds with a success probability of at least $1-\delta$, where $\bm\theta^*$ denotes the global maximum point and $\bm\theta_t$ represents the next point selected in each step $t$. 
\end{lemma}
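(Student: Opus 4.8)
The plan is to upper bound $f(\bm\theta^*)$ by splitting it across the discretization grid $\mathcal{D}_t$ and then invoking the two preceding lemmas: Lemma~\ref{lemma:5} to control the discretization error and Lemma~\ref{lemma:6} to confine $f$ to its posterior confidence interval. The one genuinely new observation is that $\bm\theta_t$ is chosen to maximize the acquisition function over the \emph{entire} domain $\mathcal{D}$, so its $\mathrm{UCB}$ value must dominate the $\mathrm{UCB}$ value at any grid point, in particular at the grid point nearest to $\bm\theta^*$. Concretely, I would first fix in each step $t$ the finite grid $\mathcal{D}_t$ of size $|\mathcal{D}_t|=(\tau_t)^{2p}$ with resolution $\tau_t=4\pi p t^2\sqrt{2\mathbb{V}_{\bm\theta}[\partial_a f(\bm\theta)]/\delta}$, i.e.\ the grid of Lemma~\ref{lemma:5} but with failure probability $\delta$ replaced by $\delta/2$. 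Writing $[\bm\theta^*]_t$ for the point of $\mathcal{D}_t$ closest to $\bm\theta^*$, I decompose $f(\bm\theta^*)=\bigl(f(\bm\theta^*)-f([\bm\theta^*]_t)\bigr)+f([\bm\theta^*]_t)$ and bound each summand.

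For the first summand I apply Lemma~\ref{lemma:5} with failure probability $\delta/2$: with the above choice of $\tau_t$ this gives $f(\bm\theta^*)-f([\bm\theta^*]_t)\leq 1/t^2$ simultaneously for all $t\in[T]$ with probability at least $1-\delta/2$. For the second summand I apply Lemma~\ref{lemma:6} with failure probability $\delta/2$, which places $f$ inside its confidence interval uniformly over the grid: with probability at least $1-\delta/2$, for all $t$ and all $\bm\theta\in\mathcal{D}_t$ one has $f(\bm\theta)\leq\mu_{t-1}(\bm\theta)+\sqrt{\eta_t}\,\sigma_{t-1}(\bm\theta)$, where the scaling parameter demanded by Lemma~\ref{lemma:6} is $\eta_t=2\log(\pi^2 t^2|\mathcal{D}_t|/3\delta)$. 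Substituting $|\mathcal{D}_t|=(\tau_t)^{2p}=(4\pi p t^2\sqrt{2\mathbb{V}_{\bm\theta}[\partial_a f(\bm\theta)]/\delta})^{2p}$ and using $\log(x y^{2p})=\log x+2p\log y$ reproduces exactly the stated $\eta_t=2\log(\pi^2 t^2/3\delta)+4p\log(4\pi p t^2\sqrt{2\mathbb{V}_{\bm\theta}[\partial_a f(\bm\theta)]/\delta})$, which is precisely why this particular $\eta_t$ appears in the hypothesis.

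Since $[\bm\theta^*]_t\in\mathcal{D}_t$, the confidence interval gives $f([\bm\theta^*]_t)\leq\mathrm{UCB}_t([\bm\theta^*]_t)$. The key step is then to invoke $\bm\theta_t=\arg\max_{\bm\theta\in\mathcal{D}}\mathrm{UCB}_t(\bm\theta)$ over $\mathcal{D}\supseteq\mathcal{D}_t$, which yields $\mathrm{UCB}_t([\bm\theta^*]_t)\leq\mathrm{UCB}_t(\bm\theta_t)=\mu_{t-1}(\bm\theta_t)+\sqrt{\eta_t}\,\sigma_{t-1}(\bm\theta_t)$. Chaining the two bounds gives $f(\bm\theta^*)\leq 1/t^2+\mu_{t-1}(\bm\theta_t)+\sqrt{\eta_t}\,\sigma_{t-1}(\bm\theta_t)$ for every $t\in[T]$, and a single union bound over the two events (each of failure probability at most $\delta/2$) shows this holds for all $t$ simultaneously with probability at least $1-\delta$.

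I expect the only delicate point to be the probabilistic bookkeeping: both Lemma~\ref{lemma:5} and Lemma~\ref{lemma:6} are already uniform over $t\in[T]$ (the $t$-indexed union bound, with weights $q_t=\pi^2 t^2/6$, is absorbed inside them), so here I need only a single union over the two $\delta/2$-events rather than a fresh per-step argument, and I must track the factor-of-two split of $\delta$ consistently between the grid resolution $\tau_t$ and the confidence width $\eta_t$. No further estimates on $\mathbb{V}_{\bm\theta}[\partial_a f(\bm\theta)]$ or on $p$ are needed at this stage; those are deferred to the final assembly of Theorem~\ref{theorem:Theorem 1}.
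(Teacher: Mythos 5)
Your proposal is correct and follows essentially the same route as the paper's proof: the same decomposition of $f(\bm\theta^*)$ via the grid point $[\bm\theta^*]_t$, the same invocations of Lemma~\ref{lemma:5} and Lemma~\ref{lemma:6} each at failure probability $\delta/2$, and the same use of the acquisition-maximization step $\mathrm{UCB}_t([\bm\theta^*]_t)\leq\mathrm{UCB}_t(\bm\theta_t)$. The only (cosmetic) difference is that you combine the two events with a union bound, whereas the paper writes $(1-\delta/2)^2>1-\delta$ as if the events were independent; your version is the more careful one, and your explicit check that $|\mathcal{D}_t|=(\tau_t)^{2p}$ reproduces the stated $\eta_t$ is a welcome addition the paper leaves implicit.
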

\begin{proof}
Using the failure probability $\delta/2$ in Lemma~\ref{lemma:5}, for the global maximum point $\bm\theta^*$, we have 
\begin{equation}\label{eq:B27}
\Pr\{\forall t\in[T],~f(\bm\theta^*)-f([\bm\theta^*]_t)\leq 1/t^2 \}\geq1-\delta/2,
\end{equation}
where $[\bm\theta^*]_t$ denotes the closest point in $\mathcal{D}_t$ to $\bm\theta^*$. 
Here, a finite grid $\mathcal{D}_t$ of size $\left|\mathcal{D}_t\right|=(\tau_t)^{2p}$ with $\tau_t=4\pi p t^2\sqrt{2\mathbb{V}[\partial_a f(\bm\theta)]/\delta}$.
Then, applying Lemma~\ref{lemma:6} with the failure probability $\delta/2$, for $[\bm\theta^*]_t$, we have
 \begin{equation}
\Pr\{\forall t\in [T],~f([\bm\theta^*]_t)\leq\mu_{t-1}([\bm\theta^*]_t)+\sqrt{\eta_t}\sigma_{t-1}([\bm\theta^*]_t)\}\geq1-\delta/2,
\end{equation}
where $\eta_t=2\log(\pi^2 t^2\left|\mathcal{D}_t\right|/3\delta)$. As the next point $\bm\theta_t$ is selected by maximizing $\mathrm{UCB}_t(\bm\theta)$ in each step $t$, we have $\mathrm{UCB}_{t}([\bm\theta^{*}]_t)\leq \mathrm{UCB}_{t}(\bm\theta_{t})$. Then, we have
 \begin{equation}\label{eq:B29}
\Pr\{\forall t\in [T],~f([\bm\theta^*]_t)\leq\mu_{t-1}(\bm\theta_t)+\sqrt{\eta_t}\sigma_{t-1}(\bm\theta_t)\}\geq1-\delta/2.
\end{equation}
Taking Eq.~\ref{eq:B27} and Eq.~\ref{eq:B29} together, the statement holds since $(1-\delta/2)^2>1-\delta$. 
\end{proof}

\begin{lemma}\label{lemma:8}
Given a failure probability $\delta\in(0,1)$ and an $n$-qubit noiseless QAOA objective function $f(\bm\theta): \mathcal{D}=[0,2\pi]^{2p}\mapsto\mathbb{R}$, run Algorithm~\ref{alg:BO for QAOA} for $T={\rm poly}(n)$ steps, where a scaling parameter $\eta_t^{\prime}$ for the acquisition function $\mathrm{UCB}_t(\bm\theta)$ used in each step $t$ is predefined as $\eta_t^{\prime}=2\log(\pi^2 t^2/6\delta)$. The following relationship
\begin{equation}
\forall t\in[T], ~f(\bm\theta_t)\in\mathcal{C}_t(\bm\theta_t)
\end{equation}
holds with a success probability of at least $1-\delta$, where $\bm\theta_t$ represents the next point selected in each step $t$ and $\mathcal{C}_t(\bm\theta_t)$ denotes the confidence interval
$[\mu_{t-1}(\bm\theta_t)-\sqrt{\eta_t^{\prime}}\sigma_{t-1}(\bm\theta_t),~ \mu_{t-1}(\bm\theta_t)+\sqrt{\eta_t^{\prime}}\sigma_{t-1}(\bm\theta_t)]$.
\end{lemma}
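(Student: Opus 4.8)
The plan is to adapt the Gaussian tail argument already used in Lemma~\ref{lemma:6}, but to exploit the fact that $\bm\theta_t$ is a \emph{single} point determined by the history rather than a generic point ranging over the grid $\mathcal{D}_t$. First I would condition on the accumulated observations $\mathcal{S}_{t-1}$ from the previous $t-1$ steps. Under this conditioning the selection rule $\bm\theta_t=\arg\max_{\bm\theta\in\mathcal{D}}\mathrm{UCB}_t(\bm\theta)$ depends only on $\mu_{t-1}$ and $\sigma_{t-1}$, both of which are deterministic functions of $\mathcal{S}_{t-1}$; hence $\bm\theta_t$ becomes a fixed point and $f(\bm\theta_t)$, conditioned on $\mathcal{S}_{t-1}$, follows the posterior Gaussian $N(\mu_{t-1}(\bm\theta_t),\sigma^2_{t-1}(\bm\theta_t))$.

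With $\bm\theta_t$ fixed, I would reuse the one-sided bound $\Pr\{b>w\}\leq\tfrac12\exp(-w^2/2)$ for $b\sim N(0,1)$ and $w>0$, setting $b=(f(\bm\theta_t)-\mu_{t-1}(\bm\theta_t))/\sigma_{t-1}(\bm\theta_t)$ and $w=\sqrt{\eta_t'}$. Accounting for both tails cancels the factor $\tfrac12$ and yields $\Pr\{f(\bm\theta_t)\notin\mathcal{C}_t(\bm\theta_t)\}\leq\exp(-\eta_t'/2)$. Crucially, because only one point is examined per step, no union bound over $\mathcal{D}_t$ is required here---this is precisely what removes the $\left|\mathcal{D}_t\right|$ factor that sits inside $\eta_t$ in Lemma~\ref{lemma:6} and leaves the smaller scaling $\eta_t'=2\log(\pi^2 t^2/6\delta)$.

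Finally I would take a union bound over the time steps $t\in\mathbb{N}$. Substituting $\eta_t'$ gives $\exp(-\eta_t'/2)=6\delta/(\pi^2 t^2)$, and $\sum_{t\geq1}6\delta/(\pi^2 t^2)=\delta$ since $\sum_{t\geq1}t^{-2}=\pi^2/6$. Therefore the event $f(\bm\theta_t)\in\mathcal{C}_t(\bm\theta_t)$ holds simultaneously for all $t\in[T]$ with probability at least $1-\delta$, which is the claimed statement.

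The main obstacle I anticipate is the adaptivity and measurability issue: $\bm\theta_t$ is itself random through its dependence on the earlier noisy estimates, so one must argue carefully that conditioning on $\mathcal{S}_{t-1}$ genuinely renders it deterministic and that the posterior at this data-dependent location is still exactly the Gaussian $N(\mu_{t-1}(\bm\theta_t),\sigma^2_{t-1}(\bm\theta_t))$. This is the same subtlety resolved in the GP-UCB analysis of Ref~\cite{srinivas2012information}; once the conditioning is set up correctly, the remaining tail and summation estimates are routine.
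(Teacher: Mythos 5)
Your proposal is correct and follows essentially the same route as the paper's own proof: condition on $\mathcal{S}_{t-1}$ so that $\bm\theta_t$ is determined and $f(\bm\theta_t)$ is posterior-Gaussian, apply the two-sided Gaussian tail bound to get $\Pr\{f(\bm\theta_t)\notin\mathcal{C}_t(\bm\theta_t)\}\leq\exp(-\eta_t'/2)=6\delta/(\pi^2 t^2)$, and union-bound over $t$ using $\sum_{t\geq1}t^{-2}=\pi^2/6$. Your explicit observations that the two tails cancel the factor $\tfrac12$ and that no union bound over the grid $\mathcal{D}_t$ is needed (which is exactly why $\eta_t'$ lacks the $|\mathcal{D}_t|$ term of Lemma~\ref{lemma:6}) are both accurate and match the paper's reasoning.
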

\begin{proof}
Fix $t\in[T]$. Conditioned on $\mathcal{S}_{t-1}$ from the previous $t-1$ steps, for the next point $\bm\theta_t$ selected in each step $t$, the posterior distribution $f(\bm\theta_t)\sim N(\mu_{t-1}(\bm\theta_t),\sigma_{t-1}^2(\bm\theta_t))$.
Now, if $b\sim N(0,1)$, then $\Pr\{b>w\}\leq\frac{1}{2}\exp(-w^2/2)$
for $w>0$. Using $b=(f(\bm\theta_t)-\mu_{t-1}(\bm\theta_t))/\sigma_{t-1}(\bm\theta_t)$ and $w=\sqrt{\eta_t^{\prime}}$, we have
\begin{equation}
\Pr \{f(\bm\theta_t)\notin\mathcal{C}_t(\bm\theta_t)\}\leq \exp(-\eta_t^{\prime}/2).    
\end{equation}
Given that $\exp(-\eta_t^{\prime}/2)=\delta/q_t$, where $\sum_{t\geq1}(1/q_t)=1$, $q_t>0$, by applying the union bound for $t\in\mathbb{N}$, the statement holds. For example, we can use $q_t=\pi^2t^2/6$.
\end{proof}

\begin{lemma}\label{lemma:9}
Assuming that Assumption~\ref{assump:1-design} holds, let $f(\bm\theta): \mathcal{D}=[0,2\pi]^{2p}\mapsto\mathbb{R}$ be the $n$-qubit noiseless QAOA objective function. Given a failure probability $\delta\in(0,1)$, run Algorithm~\ref{alg:BO for QAOA} for $T={\rm poly}(n)$ steps, where a scaling parameter $\eta_t$ for the acquisition function $\mathrm{UCB}_t(\bm\theta)$ used in each step $t$ is predefined as $\eta_t=2\log(2\pi^2t^2/3\delta)+4p\log(8\pi pt^2\sqrt{\mathbb{V}[\partial_a f(\bm\theta)]/\delta})$. The following relationship
\begin{equation}
\forall t\in[T],~f(\bm\theta^*)-f(\bm\theta_t)\leq2\sqrt{\eta_t}\sigma_{t-1}(\bm\theta_t)+1/t^2
\end{equation}
holds with a success probability of at least $1-\delta$, where $\bm\theta^*$ denotes the global maximum point and $\bm\theta_t$ represents the next point selected in each step $t$. 
\end{lemma}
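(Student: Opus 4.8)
The plan is to derive the bound by pairing the upper bound on $f(\bm\theta^*)$ from Lemma~\ref{lemma:7} with the lower bound on $f(\bm\theta_t)$ from Lemma~\ref{lemma:8}, invoking each at failure probability $\delta/2$ so that a union bound yields the overall success probability $1-\delta$. First I would apply Lemma~\ref{lemma:7} with its failure probability set to $\delta/2$. Tracking how the scaling parameter depends on this failure probability—both through the explicit $\log(\pi^2 t^2/3\delta)$ contribution and through the grid size $\left|\mathcal{D}_t\right|=(\tau_t)^{2p}$ with $\tau_t\propto\sqrt{1/\delta}$—the substitution $\delta\mapsto\delta/2$ reproduces exactly the parameter asserted here, namely $\eta_t=2\log(2\pi^2t^2/3\delta)+4p\log(8\pi pt^2\sqrt{\mathbb{V}_{\bm\theta}[\partial_a f(\bm\theta)]/\delta})$. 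This gives
\[
\forall t\in[T],\quad f(\bm\theta^*)\leq\mu_{t-1}(\bm\theta_t)+\sqrt{\eta_t}\,\sigma_{t-1}(\bm\theta_t)+1/t^2
\]
with probability at least $1-\delta/2$.

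Next I would apply Lemma~\ref{lemma:8}, also at failure probability $\delta/2$, whose scaling parameter becomes $\eta_t^{\prime}=2\log(\pi^2t^2/3\delta)$. This places $f(\bm\theta_t)$ inside the confidence interval $\mathcal{C}_t(\bm\theta_t)$, and in particular supplies the lower bound $f(\bm\theta_t)\geq\mu_{t-1}(\bm\theta_t)-\sqrt{\eta_t^{\prime}}\,\sigma_{t-1}(\bm\theta_t)$ for all $t\in[T]$ with probability at least $1-\delta/2$. Subtracting this lower bound from the upper bound on $f(\bm\theta^*)$, the posterior mean $\mu_{t-1}(\bm\theta_t)$ cancels, leaving
\[
f(\bm\theta^*)-f(\bm\theta_t)\leq\left(\sqrt{\eta_t}+\sqrt{\eta_t^{\prime}}\right)\sigma_{t-1}(\bm\theta_t)+1/t^2 .
\]

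To reach the stated form I would verify the ordering $\eta_t^{\prime}\leq\eta_t$: the leading terms satisfy $2\log(\pi^2t^2/3\delta)\leq2\log(2\pi^2t^2/3\delta)$, while the additional term $4p\log(8\pi pt^2\sqrt{\mathbb{V}_{\bm\theta}[\partial_a f(\bm\theta)]/\delta})$ appearing in $\eta_t$ is nonnegative under the regime of Lemma~\ref{lemma:5}, since its logarithm's argument is at least $1$ (this is precisely the degree-of-discretization constraint enforced there, with $T=\mathrm{poly}(n)$). Hence $\sqrt{\eta_t}+\sqrt{\eta_t^{\prime}}\leq2\sqrt{\eta_t}$, giving $f(\bm\theta^*)-f(\bm\theta_t)\leq2\sqrt{\eta_t}\,\sigma_{t-1}(\bm\theta_t)+1/t^2$, and the two invoked events hold simultaneously with probability at least $1-\delta$ by a union bound, equivalently because $(1-\delta/2)^2>1-\delta$. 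The main obstacle I anticipate is purely bookkeeping rather than conceptual: one must carefully propagate the halving of the failure probability through both the explicit $\delta$-dependence and the implicit $\delta$-dependence hidden in $\left|\mathcal{D}_t\right|$, so as to confirm that Lemma~\ref{lemma:7} invoked at level $\delta/2$ yields precisely the $\eta_t$ claimed here, and to confirm the ordering $\eta_t^{\prime}\leq\eta_t$ that permits collapsing the two confidence widths into a single factor of $2\sqrt{\eta_t}$.
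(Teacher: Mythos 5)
Your proposal is correct and follows essentially the same route as the paper's proof: invoke Lemma~\ref{lemma:7} and Lemma~\ref{lemma:8} each at failure probability $\delta/2$, combine the resulting upper bound on $f(\bm\theta^*)$ and lower bound on $f(\bm\theta_t)$, and use $\eta_t^{\prime}\leq\eta_t$ to collapse the confidence widths into $2\sqrt{\eta_t}$. Your explicit check that the $4p\log(\cdot)$ term is nonnegative (via the discretization constraint of Lemma~\ref{lemma:5}) is a detail the paper states only implicitly, but it matches the intended argument.
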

\begin{proof}
Using the failure probability $\delta/2$ in Lemma~\ref{lemma:7}, for the global maximum point $\bm\theta^*$, we have
\begin{equation}\label{eq:B33}
\Pr\{
\forall t\in[T],~f(\bm\theta^*)\leq\mu_{t-1}(\bm\theta_t)+\sqrt{\eta_t}\sigma_{t-1}(\bm\theta_t)+1/t^2\}\geq1-\delta/2
\end{equation}
with $\eta_t=2\log(2\pi^2t^2/3\delta)+4p\log(8\pi pt^2\sqrt{\mathbb{V}[\partial_a f(\bm\theta)]/\delta})$
in each step $t$. Then, using the failure probability $\delta/2$ in Lemma~\ref{lemma:8}, for the next point $\bm\theta_t$ selected in each step $t$, we have
\begin{equation}\label{eq:B34}
\Pr\{\forall t\in [T],~f(\bm\theta_t)\geq\mu_{t-1}(\bm\theta_t)-\sqrt{\eta_t^{\prime}}\sigma_{t-1}(\bm\theta_t)\}\geq1-\delta/2
\end{equation}
with $\eta_t^{\prime}=2\log(\pi^2 t^2/3\delta)$ in each step $t$. As the aforementioned  $\eta_t$ is greater than $\eta_t^{\prime}$ used here, choosing $\eta_t$ here is also valid. Taking Eq.~\ref{eq:B33} and Eq.~\ref{eq:B34} together, the proof is completed.
\end{proof}

\begin{lemma}\label{lemma:10}
Given an $n$-qubit noiseless QAOA objective function $f(\bm\theta): \mathcal{D}=[0,2\pi]^{2p}\mapsto\mathbb{R}$, run Algorithm~\ref{alg:BO for QAOA} for $T={\rm poly}(n)$ steps. Let $\mathcal{S}_T=\{(\bm\theta_{1}, y(\bm\theta_{1})), \cdots, (\bm\theta_{T},y(\bm\theta_{T}))\}$ be the accumulated observations from the previous $T$ steps, where the estimation $y(\bm\theta_t)= f(\bm\theta_t) + \xi_t^{\rm noise}$ in each step $t$. Here, $\xi_t^{\rm {noise}}\sim N(0,1/4M)$ is independent and identically distributed Gaussian noise with $M$ representing the fixed number of measurements. The information gain $g_T$~(Eq.~\ref{eq:gain}) can be expressed as
\begin{equation}
g_T=\frac{1}{2}\sum_{t=1}^{T}\log(1+4M\sigma_{t-1}^{2}(\bm\theta_t)).    
\end{equation}
\end{lemma}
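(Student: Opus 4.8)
The plan is to use the Gaussian-process structure to make the joint vector $(\bm{f}_T,\bm{y}_T)$ jointly Gaussian and then evaluate the two entropies in the definition $g_T=\mathrm{H}[\bm{y}_T]-\mathrm{H}[\bm{y}_T\mid\bm{f}_T]$ (Eq.~\ref{eq:gain}) in closed form via the multivariate Gaussian entropy formula (Eq.~\ref{eq:entropy}). Concretely, $\bm{f}_T$ is a prior draw with covariance matrix $\bm{K}_T$, and $\bm{y}_T=\bm{f}_T+\bm\xi_T$ with $\bm\xi_T\sim N(\bm{0},\bm{I}_T/4M)$ an independent noise vector; everything in sight is Gaussian, so all conditionals are Gaussian and their entropies are determined by their covariances.

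First I would dispose of the conditional term. Given $\bm{f}_T$, the vector $\bm{y}_T$ is just the independent noise $\bm\xi_T$ shifted by the constant $\bm{f}_T$, and differential entropy is translation invariant, so $\mathrm{H}[\bm{y}_T\mid\bm{f}_T]=\mathrm{H}[\bm\xi_T]=\tfrac{T}{2}\log(2\pi e/4M)$, the last equality because $\bm\xi_T$ has $T$ independent $N(0,1/4M)$ coordinates.

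The core step is the marginal entropy $\mathrm{H}[\bm{y}_T]$, which I would expand by the chain rule as $\mathrm{H}[\bm{y}_T]=\sum_{t=1}^T\mathrm{H}[y(\bm\theta_t)\mid\bm{y}_{t-1}]$ with $\bm{y}_{t-1}=[y(\bm\theta_1)\cdots y(\bm\theta_{t-1})]^{\mathsf{T}}$. Conditioning on $\bm{y}_{t-1}$ is precisely conditioning on the accumulated observations $\mathcal{S}_{t-1}$, under which the posterior for $f(\bm\theta_t)$ is $N(\mu_{t-1}(\bm\theta_t),\sigma_{t-1}^2(\bm\theta_t))$ with the variance read off from Eq.~\ref{eq:Posterior}; adding the independent noise $\xi_t^{\rm noise}$ inflates the conditional variance of $y(\bm\theta_t)$ to $\sigma_{t-1}^2(\bm\theta_t)+1/4M$. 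Hence $\mathrm{H}[y(\bm\theta_t)\mid\bm{y}_{t-1}]=\tfrac12\log\!\big(2\pi e(\sigma_{t-1}^2(\bm\theta_t)+1/4M)\big)$, and subtracting the conditional entropy term by term cancels the common factor $2\pi e/4M$ inside each logarithm, leaving
\[
g_T=\frac{1}{2}\sum_{t=1}^{T}\log\!\Big(\tfrac{\sigma_{t-1}^2(\bm\theta_t)+1/4M}{1/4M}\Big)=\frac{1}{2}\sum_{t=1}^{T}\log\big(1+4M\sigma_{t-1}^2(\bm\theta_t)\big),
\]
which is the claim.

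The step I expect to be the main obstacle is the per-step variance identity when the query points $\bm\theta_t$ are selected \emph{adaptively} by maximizing $\mathrm{UCB}_t$, so that $\bm\theta_t$ is itself a function of $\bm{y}_{t-1}$. The resolution I would invoke is the hallmark property of Gaussian processes that the posterior covariance $\sigma_{t-1}^2(\bm\theta_t)$ depends only on the query locations $\bm\theta_1,\ldots,\bm\theta_t$ and not on the observed values (in Eq.~\ref{eq:Posterior} the vector $\bm{y}_{t-1}$ enters the posterior mean but never the posterior covariance). Treating the realized locations as given therefore makes each conditional variance deterministic, and the chain-rule computation above is exact for the realized trajectory. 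As an independent check I would verify the equivalent determinant form $g_T=\tfrac12\log\det(\bm{I}_T+4M\bm{K}_T)$ and factor it by a recursive Schur-complement expansion, where peeling off the last coordinate of $\bm{I}_t+4M\bm{K}_t$ produces exactly the factor $1+4M\sigma_{t-1}^2(\bm\theta_t)$, reproducing the product $\prod_{t=1}^T(1+4M\sigma_{t-1}^2(\bm\theta_t))$.
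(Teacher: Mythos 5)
Your proposal is correct and follows essentially the same route as the paper: the chain rule $\mathrm{H}[\bm{y}_T]=\sum_{t=1}^T\mathrm{H}[y(\bm\theta_t)\mid\bm{y}_{t-1}]$, the Gaussian entropy formula giving $\tfrac12\log(2\pi e(\sigma_{t-1}^2(\bm\theta_t)+1/4M))$ per step, and $\mathrm{H}[\bm{y}_T\mid\bm{f}_T]=\tfrac{T}{2}\log(2\pi e/4M)$ for the noise term. Your added remark that the posterior variance depends only on the query locations (so adaptive selection of $\bm\theta_t$ does not break the per-step identity) is a point the paper leaves implicit, but it does not change the argument.
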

\begin{proof}
Let $\bm {y}_{t-1}=[y(\bm\theta_1) \cdots y(\bm\theta_{t-1})]^{\mathsf{T}}$ and $\bm f_{t-1}=[f(\bm\theta_1)  \cdots f(\bm\theta_{t-1})]^{\mathsf{T}}$ for $t\in[T+1]$. 
Plugging in the differential entropy of a multivariate Gaussian random variable~(Eq.~\ref{eq:entropy}), we have $\mathrm{H}[y(\bm\theta_t)|\bm {y}_{t-1}]=1/2\log(2\pi e(1/4M+\sigma^2_{t-1}(\bm\theta_t)))$ for $t\in[T]$ and $\mathrm{H}[\bm {y}_T|\bm  f_T]=\frac{T}{2}\log(\pi e/2M)$. Using the fact that $\mathrm{H}[\bm {y}_t]= \mathrm{H}[\bm {y}_{t-1}]+ \mathrm{H}[y(\bm\theta_t)|\bm {y}_{t-1}]$, we have
\begin{align}
\mathrm{H}[\bm {y}_T]&=\mathrm{H}[\bm {y}_0]+\mathrm{H}[y(\bm\theta_1)|\bm {y}_0]+ \mathrm{H}[y(\bm\theta_2)| \bm {y}_{1}]+\cdots+\mathrm{H}[y(\bm\theta_T)|\bm {y}_{T-1}]\\
&=\frac{1}{2}\sum_{t=1}^T\log(2\pi e(1/4M+\sigma^2_{t-1}(\bm\theta_t))).
\end{align}
Recalling the definition of $g_T$~(Eq.~\ref{eq:gain}), the statement holds.
\end{proof}

\begin{lemma}\label{lemma:11}
Assuming that Assumption~\ref{assump:1-design} holds, let $f(\bm\theta): \mathcal{D}=[0,2\pi]^{2p}\mapsto\mathbb{R}$ be the $n$-qubit noiseless QAOA objective function. Given a failure probability $\delta\in(0,1)$, run Algorithm~\ref{alg:BO for QAOA} with the Matern prior covariance function $k_{\rm Matern-\nu}(\bm\theta,\bm\theta^{\prime})$~(Eq.~\ref{eq:Matern}) for $T={\rm poly}(n)$ steps, where a scaling parameter $\eta_t$ for the acquisition function $\mathrm{UCB}_t(\bm\theta)$ used in each step $t$ is predefined as $\eta_t=2\log(2\pi^2t^2/3\delta)+4p\log(8\pi pt^2\sqrt{\mathbb{V}_{\bm\theta}[\partial_a f(\bm\theta)]/\delta})$. The optimization error $r_T$ satisfies
\begin{equation}
    r_T\leq\mathcal{O}\left(\sqrt{p\log \left(p T^2(\mathbb{V}_{\bm\theta}[\partial_a f(\bm\theta)])^{1/2}\right) (\log T/T)^{\frac{\nu}{\nu+p}}}\right)
\end{equation}
with a success probability of at least $1-\delta$.
\end{lemma}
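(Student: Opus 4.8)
The plan is to combine the per-step regret bound of Lemma~\ref{lemma:9} with the information-gain identity of Lemma~\ref{lemma:10} and the known maximal information gain for the Matérn kernel. First I would reduce the optimization error to an average cumulative regret: since $\bm\theta_T^+=\arg\max_{\bm\theta\in\mathcal{A}_T}f(\bm\theta)$, the best observed value dominates the empirical average, $f(\bm\theta_T^+)\geq\frac{1}{T}\sum_{t=1}^Tf(\bm\theta_t)$, and therefore $r_T=f(\bm\theta^*)-f(\bm\theta_T^+)\leq\frac{1}{T}\sum_{t=1}^T\left(f(\bm\theta^*)-f(\bm\theta_t)\right)$. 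Applying Lemma~\ref{lemma:9} term by term gives, with success probability at least $1-\delta$, $r_T\leq\frac{1}{T}\sum_{t=1}^T\left(2\sqrt{\eta_t}\sigma_{t-1}(\bm\theta_t)+1/t^2\right)$, where the harmonic-type remainder obeys $\frac{1}{T}\sum_{t=1}^T 1/t^2\leq\pi^2/6T$ and will be negligible relative to the leading term.

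Next I would isolate the posterior variance sum. Using the non-decreasing property of $\eta_t$ in $t$ to extract $\sqrt{\eta_T}$, followed by Cauchy--Schwarz~\cite{bityutskov2001bunyakovskii}, I would write $\frac{2}{T}\sum_{t=1}^T\sqrt{\eta_t}\sigma_{t-1}(\bm\theta_t)\leq\frac{2\sqrt{\eta_T}}{T}\sqrt{T\sum_{t=1}^T\sigma_{t-1}^2(\bm\theta_t)}=2\sqrt{\eta_T}\sqrt{\frac{1}{T}\sum_{t=1}^T\sigma_{t-1}^2(\bm\theta_t)}$. Because the Matérn kernel is normalized so that $\sigma_{t-1}^2(\bm\theta_t)\leq1$, the elementary inequality $s\leq\frac{B}{\log(1+B)}\log(1+s)$, valid for $0\leq s\leq B$ with $B=4M$, lets me bound each posterior variance by a multiple of $\log(1+4M\sigma_{t-1}^2(\bm\theta_t))$; summing and invoking Lemma~\ref{lemma:10} then yields $\sum_{t=1}^T\sigma_{t-1}^2(\bm\theta_t)\leq\frac{2}{\log(1+4M)}g_T$.

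Then I would substitute the maximal information gain for the Matérn covariance function. The domain $\mathcal{D}=[0,2\pi]^{2p}$ has dimension $2p$, so the bound of Ref~\cite{vakili2021information} reads $g_T\leq\mathcal{O}\!\left(T^{\frac{p}{\nu+p}}(\log T)^{\frac{\nu}{\nu+p}}\right)$, the dimension $d=2p$ turning the usual exponents $\frac{d}{2\nu+d}$ and $\frac{2\nu}{2\nu+d}$ into $\frac{p}{\nu+p}$ and $\frac{\nu}{\nu+p}$. Plugging this in, the factor $\frac{1}{T}\sum_t\sigma_{t-1}^2(\bm\theta_t)$ becomes $\mathcal{O}\!\left(T^{\frac{p}{\nu+p}-1}(\log T)^{\frac{\nu}{\nu+p}}\right)=\mathcal{O}\!\left((\log T/T)^{\frac{\nu}{\nu+p}}\right)$, using $\frac{p}{\nu+p}-1=-\frac{\nu}{\nu+p}$. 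Finally I would substitute the explicit $\eta_T=2\log(2\pi^2T^2/3\delta)+4p\log(8\pi pT^2\sqrt{\mathbb{V}_{\bm\theta}[\partial_a f(\bm\theta)]/\delta})$, whose dominant scaling is $\mathcal{O}\!\left(p\log(pT^2(\mathbb{V}_{\bm\theta}[\partial_a f(\bm\theta)])^{1/2})\right)$, collecting everything into $r_T\leq\mathcal{O}\!\left(\sqrt{p\log\!\left(pT^2(\mathbb{V}_{\bm\theta}[\partial_a f(\bm\theta)])^{1/2}\right)(\log T/T)^{\frac{\nu}{\nu+p}}}\right)$.

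I expect the main obstacle to be the variance-sum-to-information-gain step together with the bookkeeping of the kernel dimension. The elementary inequality requires the uniform bound $\sigma_{t-1}^2(\bm\theta_t)\leq1$, which must be justified from the normalized Matérn prior, and the passage from $\sum\sigma_{t-1}^2$ to $g_T$ hinges on handling the factor of two in Lemma~\ref{lemma:10} correctly. I would also have to track constants so that the $1/t^2$ and $\pi^2/6T$ remainders are genuinely subleading, and ensure that the dimension $2p$ rather than $p$ is consistently carried through the information-gain exponent, since a careless substitution would alter the final exponent $\frac{\nu}{\nu+p}$.
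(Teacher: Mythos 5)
Your proposal is correct and follows essentially the same route as the paper's proof: reduce $r_T$ to the average cumulative regret, apply Lemma~\ref{lemma:9} termwise, extract $\sqrt{\eta_T}$ via monotonicity and Cauchy--Schwarz, convert the posterior-variance sum to the information gain through the elementary inequality $s\leq\frac{4M}{\log(1+4M)}\log(1+s)$ and Lemma~\ref{lemma:10}, and then invoke the Matérn maximal-information-gain bound of Ref~\cite{vakili2021information} before substituting $\eta_T$. Your explicit tracking of the domain dimension $2p$ in the information-gain exponents is a point the paper states only implicitly, but the argument is the same.
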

\begin{proof}
Noted that $\eta_t$ in Lemma~\ref{lemma:9} is non-decreasing. Since $0\leq4M\sigma_{t-1}^2(\bm\theta_t)\leq4Mk_{\mathrm{Matern}-\nu}(\bm\theta_t,\bm\theta_t)\leq 4M$, denoted as $4M\sigma_{t-1}^2(\bm\theta_t)\in[0,4M]$, we have $4M\sigma_{t-1}^2(\bm\theta_t)\leq(4M/\log(1+4M))\log(1+4M\sigma_{t-1}^2(\bm\theta_t))$. Moreover, Lemma~\ref{lemma:10} links 
the sum of the posterior variances $\sum_{t=1}^T \sigma_{t-1}^2(\bm\theta_t)$ to the information gain $g_T$. By Cauchy-Schwarz Inequality~\cite{bityutskov2001bunyakovskii}, we have 
\begin{align}
\left(\sum_{t=1}^T2\sqrt{\eta_t}\sigma_{t-1 }(\bm\theta_t)\right)^2&\leq  \sum_{t=1}^T4\eta_t\sum_{t=1}^T\sigma_{t-1 }^2(\bm\theta_t)\\
&\leq \frac{T\eta_T}{M}\sum_{t=1}^T(4M\sigma_{t-1}^2(\bm\theta_t))\\
&\leq \frac{4T \eta_T}{\log(1+4 M)}\sum_{t=1}^T \log(1+4M\sigma_{t-1}^2(\bm\theta_t))\\
&= c_0 T \eta_Tg_T,   
\end{align}
where the parameter $c_0=8/\log(1+4M)$. 
The optimization error is given by
$r_T=f(\bm\theta ^*)-f(\bm\theta^+_T)$, where $\bm\theta^{*}$ represents the global maximum point and $\bm\theta_T^+=\arg \max_{\bm\theta\in\mathcal{A}_T} f(\bm\theta)$ denotes the approximation of the maximum point with the accumulated points $\mathcal{A}_{T}=\{\bm\theta_1,\cdots,\bm\theta_{T}\}$ from the previous $T$ steps. Now, we have
\begin{align}
r_T&\leq \frac{1}{T}\sum_{t=1}^T(f(\bm\theta^*)-f(\bm\theta_t))\\
&\leq\frac{1}{T}\left(\sum_{t=1}^T2\sqrt{\eta_t}\sigma_{t-1 }(\bm\theta_t)+\sum_{t=1}^T 1/t^2\right)\\
&\leq\frac{1}{T}\left(\sqrt{c_0 T\eta_T g_T}+\pi ^2/6\right).\label{eq:bound}
\end{align}
As $f(\bm\theta)$ is considered as a sample drawn from a Gaussian process with $k_{\mathrm{Matern}-\nu}(\bm\theta,\bm\theta^{\prime})$, we can use the upper bound $\mathcal{O}(T^{\frac{p}{v+p}}\log^{\frac{v}{v+p}}(T))$ on the maximal $g_T$ for $k_{\mathrm{Matern}-\nu}(\bm\theta,\bm\theta^{\prime})$ in Ref~\cite{vakili2021information}. By substituting $\eta_T$ and $\mathcal{O}(T^{\frac{p}{v+p}}\log^{\frac{v}{v+p}}(T))$ into Eq.~\ref{eq:bound}, the statement holds.  
\end{proof}

Now we are ready to complete the proof of Theorem~\ref{theorem:Theorem 1}.

\begin{proof}[Proof of Theorem~\ref{theorem:Theorem 1}] 
We consider $\mathbb{V}[\partial_a f(\bm\theta)]$ to be ${1/\rm poly}(n)$, as shown in Ref~\cite{park2023hamiltonian}. Additionally, we assume that the parameter dimension $p$ is at most ${\rm poly}(n)$. To ensure consistency with the scenario under investigation and to guarantee the degree of discretization $\tau_t$ in Lemma~\ref{lemma:5} of at least 1, we impose a constraint that the number of steps $T={\rm poly}(n)$. Hence, it is reasonable to treat $\log \left(p T^2(\mathbb{V}_{\bm\theta}[\partial_a f(\bm\theta)])^{1/2}\right)$ as a constant. Therefore, our task is to find the effective $p$ that satisfies the condition $(p(\log(T)/T)^{\frac{\nu}{\nu+p}})^{1/2}\leq\epsilon$, where $\epsilon$ is a constant threshold and $T={\rm poly}(n)$. Let 
\begin{align}\label{eq:effective p}
p\leq\frac{1}{2}\left(\epsilon^2-\nu+\sqrt{(\epsilon^2-\nu)^2+4\nu\epsilon^2\left(1+\log(T/\log T)\right)}\right),
\end{align}
then the above upper bound satisfies the inequality
\begin{align}
p^2-(\epsilon^2-\nu)p-\nu\epsilon^2\left(1+\log\left(T/\log T\right)\right)\leq 0.
\end{align}
Equivalently, the above inequality can be rewritten by
\begin{align}
\log\left(\log T/T\right)\leq \left(1+p/\nu\right)\left(1-p/\epsilon^2\right).
\end{align}
Considering the relationship $\log x\geq 1-1/x$ holds for $x>0$, then the above inequality implies
\begin{align}
\log\left(\log T/T\right)\leq \left(1+p/\nu\right)\log\left(\epsilon^2/p\right),
\end{align}
which directly leads to
\begin{align}
\log T/T\leq\left(\epsilon^2/p\right)^{1+p/\nu},
\end{align}
that is $(p(\log(T)/T)^{\frac{\nu}{\nu+p}})^{1/2}\leq\epsilon$.
Finally, let $T={\rm poly}(n^{1/\epsilon^2})$ and substitute it into Eq.~\ref{eq:effective p}. We obtain the effective parameter dimension $p$ for the noiseless QAOA circuit, which is $p\leq\tilde{\mathcal{O}}\left(\sqrt{\log n}\right)$.
\end{proof}

\section{Proof of Lemma~\ref{lemma:Lipschitz Continuity 2}}\label{appendix:Lemma2}
In this section, we provide the proof of Lemma~\ref{lemma:Lipschitz Continuity 2} which is similar to the proof of Lemma~\ref{lemma:Lipschitz Continuity 1}.

\begin{proof}[Proof of Lemma~\ref{lemma:Lipschitz Continuity 2}] 
Given an $n$-qubit noisy QAOA objective function with $q$-strength local Pauli channels $\tilde {f}_{q}(\bm\theta): \mathcal{D}=[0,2\pi]^{2p}\mapsto\mathbb{R}$, for any $j\in[2p]$, the partial derivatives $\partial_j\tilde {f}_{q}(\bm\theta)$ exist and are continuous, as shown in Ref~\cite{wang2021noise}. Using a similar proof sketch as in Lemma~\ref{lemma:4}, we have
\begin{equation}
\forall\bm\theta,\bm\theta^{\prime}\in\mathcal{D},~\left|\tilde {f}_{q}(\bm\theta)-\tilde {f}_{q}(\bm\theta^{\prime})\right|\leq \max_{j\in[2p]}\left(\sup_{\bm\theta\in\mathcal{D}}\left|\partial_j\tilde {f}_{q}(\bm\theta)\right|\right)\Vert\bm\theta-\bm\theta^{\prime}\Vert_1. 
\end{equation}
Considering the Maximum Cut problem on an unweighted $d$-regular graph with $n$ vertices, we can rely on Corollary~2 in Ref~\cite{wang2021noise} to obtain an upper bound on $\partial_j\tilde {f}_{q}(\bm\theta)$ for any $j\in[2p]$. Then, the following relationship 
\begin{equation}
\forall\bm\theta,\bm\theta^{\prime}\in\mathcal{D},~\left|\tilde f_q(\bm\theta)-\tilde f_q(\bm\theta^{\prime})\right|\leq L\Vert\bm\theta-\bm\theta^{\prime}\Vert_1  
\end{equation}
holds, where the Lipschitz continuity factor is given by 
\begin{equation}\label{eq:L}
 L=\sqrt{\ln 2/2} d^2 n^{\frac{5}{2}} \Vert H_1^{\rm {MaxCut}}\Vert_{\infty} q^{((d_1+1)p+1)}
\end{equation}
with the noise strength $q\in(0,1)$ and $d_1$ representing the circuit depth of the
implementation of the unitary corresponding to the problem-oriented Hamiltonian $H_1^{\rm MaxCut}$. Since $\Vert H_1^{\rm {MaxCut}}\Vert_{\infty}=\mathcal{O}(nd/2)$, $q\in(0,1)$ and  $d_1=\Omega(d)$, we have $L=\mathcal{O}(d^{3}n^{7/2}q^{(d+1)p})$. Thus, the proof of Lemma~\ref{lemma:Lipschitz Continuity 2} is concluded.
\end{proof}
 
\section{Proof of Theorem~\ref{theorem:Theorem 2}}\label{appendix:Theorem2}

\begin{theorem}[Formal]
Consider the Maximum Cut problem on an unweighted $d$-regular graph with $n$ vertices, where $d$ is a constant. 
Given a constant threshold $\epsilon$, a failure probability $\delta\in(0,1)$ and 
a noisy QAOA objective function with $q$-strength local Pauli channels $\tilde {f}_q(\bm\theta): \mathcal{D}=[0,2\pi]^{2p}\mapsto\mathbb{R}$~(Eq.~\ref{eq:Objective Function 2}) induced by the circuit $\mathcal{U}_{q}(\bm\theta)$~(Eq.~\ref{eq:noisy QAOA circuit}) that satisfies Assumption~\ref{assump:local Pauli noise channel}, run Algorithm~\ref{alg:BO for QAOA} for $T={\rm poly}(n^{1/\epsilon^2})$ steps, where the scaling parameter $\eta_t$ for the acquisition function $\mathrm{UCB}_t(\bm\theta)$~(Eq.~\ref{eq:UCB}) used in each step $t$ is predefined as
\begin{equation}
\eta_t=2\log(\pi^2t^2/(3\delta))+4p\log(4\pi pt^2d^{3}n^{7/2}q^{(d+1)p}).
\end{equation}
Under the condition where the noise strength $q$ spans $1/{\rm poly} (n)$ to $1/n^{1/\sqrt{\log n}}$, if the parameter dimension
 \begin{equation}
    p\leq\mathcal{O}\left(\log n/\log(1/q)\right),
\end{equation}     
then the optimization error $\tilde r_T$~(Eq.~\ref{eq:ERROR 2}) satisfies $\tilde r_T\leq\epsilon$ with a success probability of at least $1-\delta$.  
\end{theorem}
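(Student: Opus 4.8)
The plan is to mirror the proof of Theorem~\ref{theorem:Theorem 1}, substituting the noisy Lipschitz estimate of Lemma~\ref{lemma:Lipschitz Continuity 2} for the noiseless one of Lemma~\ref{lemma:Lipschitz Continuity 1} everywhere continuity is invoked. First I would discretize the domain $\mathcal{D}=[0,2\pi]^{2p}$ into a finite grid $\mathcal{D}_t$ of size $|\mathcal{D}_t|=(\tau_t)^{2p}$ in each step $t$, exactly as in Lemma~\ref{lemma:5}, so that every $\bm\theta$ obeys $\Vert\bm\theta-[\bm\theta]_t\Vert_1\le 4\pi p/\tau_t$. Feeding the Lipschitz factor $L=\mathcal{O}(d^{3}n^{7/2}q^{(d+1)p})$ from Lemma~\ref{lemma:Lipschitz Continuity 2} into this estimate and choosing $\tau_t=4\pi p t^2 d^{3}n^{7/2}q^{(d+1)p}$ yields the discretization error $|\tilde f_q(\bm\theta)-\tilde f_q([\bm\theta]_t)|\le 1/t^2$; the crucial side condition is that $\tau_t\ge 1$, i.e.\ the grid is nonempty, and this will turn out to be the binding constraint on $p$.

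Next I would reproduce the confidence-interval machinery of Lemmas~\ref{lemma:6}--\ref{lemma:9} almost verbatim, now applied to the noisy posterior $\mu_{t-1},\sigma_{t-1}$ built from the Matern kernel and the observations $\tilde y_q(\bm\theta_t)=\tilde f_q(\bm\theta_t)+\xi_t^{\rm noise}$. A Gaussian-tail union bound over $\mathcal{D}_t$ gives $\tilde f_q([\tilde{\bm\theta}^*]_t)\le \mu_{t-1}(\bm\theta_t)+\sqrt{\eta_t}\,\sigma_{t-1}(\bm\theta_t)$ with the choice $\eta_t=2\log(\pi^2 t^2/(3\delta))+4p\log(4\pi p t^2 d^{3}n^{7/2}q^{(d+1)p})$, which is exactly $2\log(\pi^2 t^2/(3\delta))+\log|\mathcal{D}_t|$; combining with the lower confidence bound on $\tilde f_q(\bm\theta_t)$ and the grid error yields the per-step regret $\tilde f_q(\tilde{\bm\theta}^*)-\tilde f_q(\bm\theta_t)\le 2\sqrt{\eta_t}\,\sigma_{t-1}(\bm\theta_t)+1/t^2$. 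Summing and applying Cauchy--Schwarz as in Lemma~\ref{lemma:11}, together with the maximal information-gain bound $g_T=\mathcal{O}(T^{p/(\nu+p)}\log^{\nu/(\nu+p)}T)$ for the Matern kernel, gives
\begin{equation}
\tilde r_T\le\mathcal{O}\!\left(\sqrt{p\,\log\!\big(p T^2 d^{3}n^{7/2}q^{(d+1)p}\big)\,(\log T/T)^{\frac{\nu}{\nu+p}}}\right).
\end{equation}

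Finally I would solve for the effective $p$ with $T={\rm poly}(n^{1/\epsilon^2})$. The convergence factor $(\log T/T)^{\nu/(\nu+p)}$ is driven below the Theorem~\ref{theorem:Theorem 1} threshold whenever $p\le\tilde{\mathcal{O}}(\sqrt{\log n})$, exactly as before, so this alone is \emph{not} the binding constraint. The genuinely new ingredient is the logarithmic factor $\log(pT^2 d^3 n^{7/2}q^{(d+1)p})=\tfrac{7}{2}\log n+2\log T+\cdots-(d+1)p\log(1/q)$, which equals the per-dimension log grid size $\log\tau_t=\log|\mathcal{D}_t|/(2p)$ and must stay nonnegative for $\tau_t\ge 1$ to hold. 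This forces $(d+1)p\log(1/q)\le \mathcal{O}(\log n+\log T)=\mathcal{O}(\epsilon^{-2}\log n)$, i.e.\ $p\le\mathcal{O}(\log n/\log(1/q))$ with $d,\epsilon$ treated as constants; over the stated window $\sqrt{\log n}\le\log(1/q)\le\mathcal{O}(\log n)$ this cap never exceeds $\mathcal{O}(\sqrt{\log n})$, so it is always the tighter of the two constraints.

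The hard part will be handling the self-referential dependence of $\eta_t$ (hence of the grid and of the regret bound) on $p$ through the decaying factor $q^{(d+1)p}$. Unlike the noiseless case, where the Lipschitz factor $\sqrt{\mathbb{V}_{\bm\theta}[\partial_a f(\bm\theta)]/\delta}={1/\rm poly}(n)$ is essentially $p$-independent and $\tau_t\ge1$ is secured merely by $T={\rm poly}(n)$, here a shrinking $L$ can push $\tau_t$ below $1$. I would need to argue carefully that over the window $1/{\rm poly}(n)\le q\le 1/n^{1/\sqrt{\log n}}$ the choice of $T$ and the cap $p\le\mathcal{O}(\log n/\log(1/q))$ keep $\tau_t\ge 1$ (rounding up to a single grid point wherever the formula dips below one, which only improves the discretization error), and to confirm that the residual $\log(\cdot)$ factor can still be absorbed as in Theorem~\ref{theorem:Theorem 1} without degrading the $\epsilon$-level guarantee. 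This noise-induced flattening of $L$ is precisely what lets the trainable depth grow with the noise strength $q$.
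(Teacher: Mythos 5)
Your proposal is correct and follows essentially the same route as the paper: the paper's Appendix~\ref{appendix:Theorem2} likewise transplants the machinery of Lemmas~\ref{lemma:5}--\ref{lemma:11} with the noisy Lipschitz factor $L=\mathcal{O}(d^{3}n^{7/2}q^{(d+1)p})$ from Lemma~\ref{lemma:Lipschitz Continuity 2}, derives the identical regret bound in Lemma~\ref{lemma:5.1}, and obtains the cap $p\leq\mathcal{O}(\log n/\log(1/q))$ precisely from the requirement $pT^2d^{3}n^{7/2}q^{(d+1)p}\geq 1$ (your $\tau_t\geq 1$ condition) before taking the minimum with the $\tilde{\mathcal{O}}(\sqrt{\log n})$ constraint over the stated noise window. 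You have also correctly identified the one genuinely delicate point — the $p$-dependence of the grid size through $q^{(d+1)p}$ — which is exactly where the paper's proof of Lemma~\ref{lemma:5.1} spends its effort.
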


\subsection{Outline of the Proof Procedure}
Our objective is to determine the effective parameter dimension $p$ of the noisy QAOA circuit $\mathcal{U}_{q}(\bm\theta)$ such that the optimization error $\tilde r_T=\tilde f_{q}(\tilde{\bm\theta} ^*)-\tilde f_{q}(\tilde{\bm\theta}^+_T)$ after $T={\rm poly}(n)$ steps of executing BO can be upper bounded by a constant threshold $\epsilon$. Here, $\tilde{\bm\theta}^{*}$ represents the global maximum point and $\tilde{\bm\theta}_T^+$ denotes the approximation of the maximum point in the previous $T$ steps. We investigate this question through the lens of the Bayesian approach, which considers the corresponding noisy QAOA objective function $\tilde f_{q}(\bm\theta)$ as a sample drawn from a Gaussian process with the Matern covariance function $k_{\mathrm{Matern}-\nu}(\bm\theta,\bm\theta^{\prime})$. We first establish that $\tilde r_T$ is upper bounded by $\frac{1}{T}\sum_{t=1}^T(\tilde f_{q}(\tilde{\bm\theta} ^*)-\tilde f_{q}(\tilde{\bm\theta}_t))$, where $\tilde {\bm\theta}_t$ represents the next point selected in each step $t$. It is evident that the condition $\frac{1}{T}\sum_{t=1}^T(\tilde f_{q}(\tilde{\bm\theta} ^*)-\tilde f_{q}(\tilde{\bm\theta}_t))\leq \epsilon$ is sufficient to deduce the result $\tilde r_T\leq\epsilon$. Hence, by ensuring that the upper bound on $\frac{1}{T}\sum_{t=1}^T(\tilde f_{q}(\tilde{\bm\theta} ^*)-\tilde f_{q}(\tilde{\bm\theta}_t))$ is no greater than $\epsilon$, we can determine the effective $p$ that guarantees $\tilde r_T\leq \epsilon$. Subsequently, we utilize the continuity property of the noisy QAOA objective function $\tilde f_{q}(\bm\theta)$~(Lemma~\ref{lemma:Lipschitz Continuity 2}) to establish an upper bound on $\frac{1}{T}\sum_{t=1}^T(\tilde f_{q}(\tilde{\bm\theta} ^*)-\tilde f_{q}(\tilde{\bm\theta}_t))$. The complete proof of Theorem~\ref{theorem:Theorem 2} is similar to the proof of Theorem~\ref{theorem:Theorem 1} and is supported by a series of lemmas analogous to Lemma~\ref{lemma:5} to Lemma~\ref{lemma:11}. Instead of providing a detailed description of each lemma here, we will directly present lemma~\ref{lemma:5.1} similar to Lemma~\ref{lemma:11}. Additionally, we will emphasize the impact of the difference in continuity property between the noiseless and noisy QAOA objective functions on the result.

\subsection{Proof Details}
In this section, we provide a comprehensive introduction to the Lemma~\ref{lemma:5.1}.

\begin{lemma}\label{lemma:5.1}
Considering a Maximum Cut problem on an unweighted $d$-regular graph with $n$ vertices, where $d$ is a constant.
Assuming that Assumption~\ref{assump:local Pauli noise channel} holds, let $\tilde f_{q}(\bm\theta): \mathcal{D}=[0,2\pi]^{2p}\mapsto\mathbb{R}$ be the noisy QAOA objective function with $q$-strength local Pauli channels, where the noise strength $q\in(0,1)$. Given a failure probability $\delta\in (0,1)$, run Algorithm~\ref{alg:BO for QAOA} with the Matern prior covariance function $k_{\rm Matern-\nu}(\bm\theta,\bm\theta^{\prime})$~(Eq.~\ref{eq:Matern}) for $T={\rm poly}(n)$ steps, where a scaling parameter $\eta_t$ for the acquisition function $\mathrm{UCB}_t(\bm\theta)$ used in each step $t$ is predefined as $\eta_t=2\log(\pi^2t^2/(3\delta))+4p\log(4\pi pt^2d^{3}n^{7/2}q^{(d+1)p})$.
If the parameter dimension $p$ is given by
\begin{equation}
p\leq \mathcal{O}\left(\log n/\log(1/q)\right),    
\end{equation}
the optimization error $\tilde r_T$ satisfies
\begin{equation}
  \tilde r_T \leq \mathcal{O}\left(\sqrt{p\log (pT^2d^{3}n^{7/2}q^{(d+1)p})(\log T/T)^{\frac{\nu}{\nu+p}}}\right)  
\end{equation}
with a success probability of at least $1-\delta$. 
\end{lemma}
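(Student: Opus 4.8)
The plan is to replay the chain of lemmas developed for the noiseless case (Lemma~\ref{lemma:5} through Lemma~\ref{lemma:11}) in the noisy setting, with the single substantive change being the origin of the Lipschitz estimate. In place of the \emph{probabilistic} bound of Lemma~\ref{lemma:Lipschitz Continuity 1}, which carries the data-dependent factor $\sqrt{\mathbb{V}_{\bm\theta}[\partial_a f(\bm\theta)]/\delta}$, I would invoke the \emph{deterministic} Lipschitz factor $L=\mathcal{O}(d^{3}n^{7/2}q^{(d+1)p})$ supplied by Lemma~\ref{lemma:Lipschitz Continuity 2}. Since this inequality holds with certainty, the discretization step consumes no portion of the failure-probability budget $\delta$, which is what accounts for the slightly altered constants inside the predefined $\eta_t$ relative to the noiseless $\eta_t$ of Lemma~\ref{lemma:9}.

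First I would establish the noisy analog of Lemma~\ref{lemma:5}: discretizing $\mathcal{D}$ into a finite grid $\tilde{\mathcal{D}}_t$ of size $(\tau_t)^{2p}$ with degree $\tau_t=4\pi pt^2 d^{3}n^{7/2}q^{(d+1)p}$ gives $\|\bm\theta-[\bm\theta]_t\|_1\leq 4\pi p/\tau_t$, and combining this with Lemma~\ref{lemma:Lipschitz Continuity 2} yields $|\tilde f_q(\bm\theta)-\tilde f_q([\bm\theta]_t)|\leq 1/t^2$ for every $t\in[T]$ and every $\bm\theta\in\mathcal{D}$. The delicate requirement is nondegeneracy of the grid, namely $\tau_t\geq 1$; since $\tau_t$ increases in $t$, the binding case is $t=1$, where writing $q^{(d+1)p}=\exp(-(d+1)p\log(1/q))$ forces $(d+1)p\log(1/q)\leq\log(4\pi p\,d^{3}n^{7/2})=\mathcal{O}(\log n)$. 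With $d$ constant this is precisely $p\leq\mathcal{O}(\log n/\log(1/q))$, so the depth hypothesis of the lemma is exactly the condition that keeps the discretization valid rather than an independent assumption.

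Next I would transcribe the confidence-interval machinery. Applying the Gaussian-tail estimate $\Pr\{b>w\}\leq\tfrac12\exp(-w^2/2)$ to the posterior $\tilde f_q(\bm\theta)\sim N(\mu_{t-1}(\bm\theta),\sigma_{t-1}^2(\bm\theta))$, together with union bounds over $\tilde{\mathcal{D}}_t$ and over $t$ (via $q_t=\pi^2t^2/6$), reproduces the analogs of Lemma~\ref{lemma:6}, Lemma~\ref{lemma:7} and Lemma~\ref{lemma:8}. Because the discretization is now deterministic, only the two one-sided interval arguments share the budget $\delta$, which is what produces the stated $\eta_t=2\log(\pi^2t^2/(3\delta))+4p\log(4\pi pt^2 d^{3}n^{7/2}q^{(d+1)p})$, with the noise factor entering through the second logarithm. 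Subtracting the resulting bounds and using the maximality of $\tilde{\bm\theta}_t$ under $\mathrm{UCB}_t$ gives $\tilde f_q(\tilde{\bm\theta}^*)-\tilde f_q(\tilde{\bm\theta}_t)\leq 2\sqrt{\eta_t}\sigma_{t-1}(\tilde{\bm\theta}_t)+1/t^2$ with probability at least $1-\delta$.

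Finally I would close exactly as in Lemma~\ref{lemma:11}. The information-gain identity $g_T=\tfrac12\sum_{t=1}^T\log(1+4M\sigma_{t-1}^2(\tilde{\bm\theta}_t))$ (analog of Lemma~\ref{lemma:10}) and the bound $4M\sigma_{t-1}^2\leq(4M/\log(1+4M))\log(1+4M\sigma_{t-1}^2)$ feed a Cauchy--Schwarz step giving $(\sum_{t=1}^T 2\sqrt{\eta_t}\sigma_{t-1}(\tilde{\bm\theta}_t))^2\leq c_0 T\eta_T g_T$ with $c_0=8/\log(1+4M)$, using that $\eta_t$ is nondecreasing. Modeling $\tilde f_q$ as a draw from a Gaussian process with $k_{\rm Matern-\nu}$ then lets me insert the maximal information-gain bound $g_T\leq\mathcal{O}(T^{p/(\nu+p)}\log^{\nu/(\nu+p)}(T))$ from Ref~\cite{vakili2021information} and the explicit $\eta_T$ into $\tilde r_T\leq\frac1T(\sqrt{c_0 T\eta_T g_T}+\pi^2/6)$, producing the claimed rate. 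The step I expect to demand the most care is the one flagged above: unlike the polynomially small noiseless variance term, the Lipschitz factor here \emph{shrinks} exponentially in $p$ through $q^{(d+1)p}$, so I must confirm both that the grid does not collapse and that the $\log(\cdot)$ appearing in $\eta_T$ stays $\mathcal{O}(\log n)$, since it is this interplay that simultaneously forces the depth threshold $p\leq\mathcal{O}(\log n/\log(1/q))$ and propagates cleanly into the final bound.
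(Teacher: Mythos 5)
Your proposal is correct and follows the same architecture as the paper's proof: replay the noiseless chain (the analogs of Lemma~\ref{lemma:5} through Lemma~\ref{lemma:11}) with the deterministic Lipschitz factor $L=\mathcal{O}(d^{3}n^{7/2}q^{(d+1)p})$ from Lemma~\ref{lemma:Lipschitz Continuity 2} in place of the probabilistic factor $\sqrt{\mathbb{V}_{\bm\theta}[\partial_a f(\bm\theta)]/\delta}$, then feed the resulting instantaneous-regret bound into the Cauchy--Schwarz and maximal-information-gain argument for the Matern kernel. Your accounting of the failure-probability budget (the discretization step now costs nothing, so $\delta$ is split only between the two confidence-interval lemmas, which reproduces exactly the stated $\eta_t$) matches the paper's constants. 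The one place where you genuinely diverge is the derivation of the depth constraint: the paper enforces $pT^{2}d^{3}n^{7/2}q^{(d+1)p}\geq 1$ and extracts $p\leq\mathcal{O}(\log n/\log(1/q))$ via a rather roundabout argument sandwiching $pT^{2}n^{7/2}$ between $n^{c_2}$ and $n^{c_1}$ and invoking the monotonicity of $y^{y}$ on $(0,1/e)$, whereas you impose the (slightly stronger) nondegeneracy condition at the binding step $t=1$ and solve $(d+1)p\log(1/q)\leq\log(4\pi p\,d^{3}n^{7/2})=\mathcal{O}(\log n)$ directly. Your route is cleaner and reaches the same asymptotic threshold, since the extra $T^{2}=\mathrm{poly}(n)$ factor in the paper's version only perturbs the constant inside the $\mathcal{O}(\cdot)$; it also makes explicit, as you note, that the depth hypothesis is exactly the condition keeping the grid (and the logarithm inside $\eta_T$) well defined rather than an independent assumption.
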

\begin{proof}
Using the continuity property of the noisy QAOA objective function $\tilde f_{q}(\bm\theta)$ as stated in Lemma~\ref{lemma:Lipschitz Continuity 2} and a series of lemmas similar to Lemma~\ref{lemma:5} to Lemma~\ref{lemma:11}, we can easily obtain the aforementioned result. It is essential to emphasize the constraint imposed on the parameter dimension $p$. To guarantee the degree of discretization $\tau_t$ of at least 1, as mentioned in Lemma~\ref{lemma:5}, we need to discuss the range of $p$ that satisfies $pT^2d^{3}n^{7/2}q^{(d+1)p}\geq 1$. Since the number of steps $T={\rm poly}(n)$ and $p$ is at most ${\rm poly}(n)$, we can establish the inequality 
\begin{equation}
 n^{c_2}\leq pT^2n^{7/2}\leq n^{c_1},   
\end{equation}
where $c_1$ and $c_2$ are two very close constants. Then, we have
\begin{equation}
   \frac{p}{n^{c_1}d^3} \leq \frac{1}{T^2n^{7/2}d^3}\leq \frac{p}{n^{c_2}d^3}.
\end{equation}
Since $q\in(0,1)$, the relationship
\begin{equation}
    q^{\frac{p(d+1)}{n^{c_2}d^3}}\leq q^{\frac{d+1}{T^2n^{7/2}d^3}}
\end{equation}
holds. As $y^y$ is monotonically decreasing in the interval $(0,1/e)$, we have
\begin{equation}
 \left(\frac{1}{p T^2n^{7/2}d^3 }\right)^{\frac{1}{p T^2n^{7/2}d^3 }}\leq \left(\frac{1}{n^{c_1}d^3}\right)^{\frac{1}{n^{c_1}d^3}}.
\end{equation}
Let 
\begin{equation}\label{eq:69}
 p\leq \frac{c_1\log n+3\log d}{(d+1)\log(1/q)n^{(c_1-c_2)}}  ,
\end{equation}
then the above inequality implies 
\begin{equation}
 \left(\frac{1}{n^{c_1}d^3}\right)^{\frac{1}{n^{c_1}d^3}}\leq q^{\frac{p(d+1)}{n^{c_2}d^3}} ,
\end{equation}
which directly leads to
\begin{equation}
   \left(\frac{1}{p T^2n^{7/2}d^3 }\right)^{\frac{1}{p T^2n^{7/2}d^3 }}\leq q^{\frac{d+1}{T^2n^{7/2}d^3}},
\end{equation}
that is $pT^2d^{3}n^{7/2}q^{(d+1)p}\geq 1$. Considering $d$ as a constant, Eq.~\ref{eq:69} implies $p\leq\mathcal{O}(\log n/\log(1/q))$. 
\end{proof}

\begin{proof}[Proof of Theorem~\ref{theorem:Theorem 2}]
Furthermore, when the noise strength $q\geq 1/{\rm poly}(n)$, it is reasonable to treat $\log (pT^2d^{3}n^{7/2}q^{(d+1)p})$ as a constant. Therefore, our objective is to determine the effective $p$ that satisfies the condition $(p(\log(T)/T)^{\frac{\nu}{\nu+p}})^{1/2}\leq\epsilon$ with a constant threshold $\epsilon$. The previous result shows that $p \leq \tilde{\mathcal{O}}(\sqrt{\log n})$ and $T={\rm poly}(n^{1/\epsilon^2})$. Therefore, we have 
\begin{equation}
  p\leq \min \{ \tilde{\mathcal{O}}(\sqrt{\log n}),\mathcal{O}(\log n/\log(1/q))\}.   
\end{equation}
Let $1/{\rm poly}(n)\leq q\leq 1/n^{1/\sqrt{\log n}}$, then this constraint implies
\begin{equation}
    \log n/\log(1/q)\leq \sqrt{\log n},
\end{equation}
that is $p\leq \mathcal{O}\left(\log n/\log(1/q)\right)$. Thus, the proof of Theorem~\ref{theorem:Theorem 2} is concluded.
\end{proof}

\end{document}